\documentclass[prb,twocolumn,superscriptaddress,nofootinbib,preprintnumbers]{revtex4-2}

\usepackage{amsmath,amssymb,amsfonts,amsthm}
\usepackage{graphicx}
\usepackage{hyperref}
\usepackage{braket}
\usepackage{xcolor}
\usepackage{booktabs}
\usepackage{multirow}
\usepackage{makecell}
\usepackage{algorithm,algpseudocode}
\usepackage{tikz}
\usetikzlibrary{decorations.pathreplacing, positioning, fit,arrows.meta, patterns, shapes.geometric}

\newcommand{\eg}{\textit{e.g.}}
\newcommand{\vs}{\textit{vs.}}

\newcommand{\kry}{\mathcal{K}}

\newcommand{\Kcl}{\kry_{\rm cl}}
\newcommand{\Kq}{\kry_q}
\newcommand{\KEFS}{\kry_{\rm EFS}}
\newcommand{\Fprod}{\mathcal{F}_{\rm prod}}
\newcommand{\Fent}{\mathcal{F}_{\rm ent}}
\newcommand{\Ftot}{\mathcal{F}_{\rm frz}^{\rm tot}}

\newtheorem{theorem}{Theorem}
\newtheorem{lemma}{Lemma}

\begin{document}

\title{Quantum Hilbert Space Fragmentation and Entangled Frozen States}

\author{Zihan Zhou}
\email{zihanz@princeton.edu}
\thanks{These authors contributed equally to this work.}
\affiliation{Department of Physics, Princeton University, NJ 08544, USA}

\author{Tian-Hua Yang}
\email{yangth@princeton.edu}
\thanks{These authors contributed equally to this work.}
\affiliation{Department of Physics, Princeton University, NJ 08544, USA}

\author{Bo-Ting Chen}
\email{bc2490@princeton.edu}
\affiliation{Department of Physics, Princeton University, NJ 08544, USA}

\date{\today}

\begin{abstract}
We find that rank deficiency of the local Hamiltonian in a classically fragmented model is the key mechanism leading to quantum Hilbert space fragmentation. The rank deficiency produces local null directions that can generate entangled frozen states (EFS): entangled states embedded in mobile classical Krylov sectors that do not evolve under Hamiltonian dynamics. When the entangled frozen subspace is non-empty, the mobile classical sector splits into a mobile quantum Krylov subspace and an entangled frozen subspace, and the model exhibits quantum fragmentation. We establish this mechanism in four models of increasing symmetry structure: an asymmetric qubit projector with no symmetry, the $\mathbb{Z}_2$-symmetric GHZ projector, a $\mathbb{Z}_3$-symmetric cyclic qutrit projector, and the Temperley-Lieb model. For the asymmetric and GHZ projector models, we obtain closed-form expressions for irreducible Krylov dimensions, degeneracies, and sector multiplicities. The all-mobile-sector EFS in these two models exhibits a sub-volume-law bipartite entanglement entropy scaling as $S \sim \sqrt{L}$. Further, we introduce the notion of weak and strong quantum fragmentation, the quantum counterpart of the weak-strong distinction in classical fragmentation. After removing the EFS, the mobile quantum Krylov subspace decomposes into irreducible blocks. In the weak case, the number of irreducible blocks remains $\mathcal{O}(1)$, each is individually ergodic with Gaussian Orthogonal Ensemble (GOE) level statistics, and the unresolved spectrum follows an $m$GOE distribution. In the strong case, the number of irreducible blocks grows with system size, and the gap-ratio distribution approaches Poisson as $L\to\infty$.
\end{abstract}

\maketitle

\emph{Introduction. --- }
Hilbert space fragmentation (HSF)~\cite{sala2020ergodicity,Khemani:2019vor,Moudgalya:2021xlu,Moudgalya:2019vlp,Moudgalya:2021ixk,Regnault:2022ocy,Iadecola:2025fyg,Brighi:2022kca,Chen:2024fkk,Chen:2026aqj,Rakovszky:2020apb,Yang:2019mft,Morningstar:2020ror,Balasubramanian:2023wgn,Kwan:2023kjp,hartExactMazurBounds2024,Brighi:2022kca,ganguli2025aspects,PhysRevB.108.144308,wang2025exponentiallyslowthermalization1d,PhysRevB.109.064302,PhysRevB.106.L220301} is a mechanism for ergodicity breaking in which the Hilbert space decomposes into exponentially many dynamically disconnected Krylov subspaces, beyond what conventional symmetries dictate. Unlike many-body localization~\cite{pal2010many,Nandkishore:2014kca,vosk2015theory,nandkishore2015many,Abanin:2018yrt}, which relies on disorder, and quantum many-body scars~\cite{shiraishi2017systematic,Bernien:2017ubn,Turner:2017fxc,Turner:2018gwa,Moudgalya:2018fua,moudgalya2020quantum,ren2021quasisymmetry}, which arise from fine-tuned eigenstates, HSF is generated by local algebraic constraints and persists at all energy scales. HSF comes in two varieties: classical and quantum fragmentation~\cite{Moudgalya:2021ixk,Regnault:2022ocy,Chen:2026aqj,PhysRevX.15.011068,PhysRevResearch.5.043239,wbzt-scvs}. In classical fragmentation, Krylov sectors are spanned by product states. A systematic understanding of this case has been achieved through semigroup word problems~\cite{Balasubramanian:2023wgn}, where the glassy dynamics of local rewriting rules generates a combinatorial structure of frozen states and mobile sectors. Classical fragmentation is classified as strong if the largest irreducible Krylov subspace $\mathcal{K}_{{\rm max},L}$ has measure zero compared relative to the full Hilbert space in the thermodynamic limit, i.e.\ ${\dim} \mathcal{K}_{{\rm max},L}/ {\dim}\, \mathcal{H}\to 0$ as $L\to \infty$ and weak if ${\dim} \mathcal{K}_{{\rm max},L}/ {\dim}\, \mathcal{H}\sim \mathcal{O}(1)$.

Quantum fragmentation goes further: it decomposes the classical Krylov sectors themselves into sub-sectors spanned by entangled states. This phenomenon is far less understood. The only well-studied example is the Temperley-Lieb (TL) model~\cite{Moudgalya:2021ixk,Moudgalya:2022gtj,Moudgalya:2022nll}, where the TL algebra ${\rm TL}_{L-1}(\delta)$~\cite{Temperley:1971iq,Jones:1983kv} and its $U_q(\mathfrak{sl}_2)$ quantum group commutant provide closed-form expressions for Krylov dimensions and degeneracies. Quantum fragmentation has also been observed in the quantum breakdown model~\cite{Lian:2022nqj,Chen:2024fkk,Hu_2024,Hu_2025,Liu2025QBM,hu2025glass} and the quantum East model~\cite{Brighi:2022kca,PhysRevB.108.144308,ganguli2025aspects}, though without a comparably developed analytical framework. Since the TL model remains the only fully understood example, two fundamental questions are open. First, what is the minimal ingredient for quantum fragmentation? The TL model carries so much algebraic structure (the Jones relation, a quantum group commutant) that it is unclear which part is essential and which is incidental. Second, what role do symmetries play? Symmetries act locally and carry distinct physical consequences from the general (non-local) commutant algebra, yet whether they are a prerequisite for quantum fragmentation, or merely an optional add-on, has not been established.

In this Letter, we answer both questions. We find that the key mechanism leading to quantum fragmentation is the rank deficiency of local coupling terms in a classically fragmented model.
No symmetry is required. The rank deficiency creates \emph{entangled frozen states} (EFS): entangled states within mobile classical Krylov sectors that do not evolve under the Hamiltonian dynamics. Each mobile classical sector then splits into an entangled-frozen Krylov subspace $\KEFS$ and a complementary mobile quantum Krylov subspace $\Kq$. Symmetry does not create quantum fragmentation, but when present, it can further decompose the mobile quantum Krylov subspaces if they are closed under the symmetry transformation.

We establish this picture in four models of increasing structure. The asymmetric triplet-flip projector provides the minimal example, showing that quantum fragmentation exists even without symmetry. The GHZ and cyclic qutrit projectors then show how $\mathbb{Z}_2$ and $\mathbb{Z}_3$ symmetries organize the mobile quantum space $\Kq$ through symmetry-related degeneracies and charge sectors, without changing the underlying mechanism. In both the asymmetric and GHZ projector models, the all-mobile-sector EFS displays an unusual analytically tractable sub-volume entanglement entropy $S \sim \sqrt{L}$. Finally, the TL model shows how additional algebraic constraints, encoded in the Jones relation, qualitatively change the structure of $\Kq$ by further decomposing it into many irreducible components. These increasing structural constraints lead to a natural distinction between \textit{weak} and \textit{strong} quantum fragmentation. After removing the EFS, the weak case retains at least one irreducible block occupying a finite fraction of $\Kq$, whereas in the strong case every irreducible block occupies a vanishing fraction. In the models studied here, the weak-strong distinction is clearly reflected in the gap-ratio statistics.


\textit{Mechanism. ---} 
Consider a one-dimensional chain of $L$ sites with local Hilbert space dimension $d$, and let $S = \{0, 1, \ldots, d-1\}$ be the local alphabet. A local word of length $r$ is an element $w = (s_1, \dots,s_r) \in S^{r}$, and we associate to it the basis state $\ket{w}=\ket{s_1} \otimes \cdots\otimes\ket{s_r}$. A semigroup dynamics is specified by a presentation $\tilde{G} = \langle S \,|\, R \rangle$, where $R$ is a set of relations between words over $S$. We assume that $R$ only relates words of the same length $\ell$. This semigroup corresponds to a family of Hamiltonians, with local terms
\begin{equation}\label{eq:Hgeneral}
h_i = \sum_\alpha\sum_{\substack{w', w \in R_\alpha }} g^{w',w} \ket{w'}\bra{w}_{[i,i+\ell-1]}\,,
\end{equation}
where $R_\alpha$ denotes the $\alpha$-th equivalence class in $R$, $g^{w,w'}$ are coupling matrix between states in $R_\alpha$. The total Hamiltonian is $H=\sum_{i=1}^{L-\ell+1} J_i h_i$ with $J_i$ a random variable capturing the local strength of the interaction.
The classical Krylov sectors are the connected components of the graph whose vertices are computational basis states and whose edges connect states related by a single application of a relation in $R$. A basis state forming a dimension-1 sector is a \emph{product frozen state}; collectively these states span the product-frozen subspace $\Fprod$. Classical fragmentation is strong if the dimension of the largest Krylov subspace compared with the full Hilbert space dimension goes to zero in the thermodynamic limit, i.e. ${\rm dim} \mathcal{K}_{{\rm  max},L} / {\rm dim}\mathcal{H} \rightarrow 0 , L\rightarrow \infty$. If ${\rm dim} \mathcal{K}_{{\rm  max},L}/{\rm dim}\mathcal{H} \rightarrow \mathcal{O}(1), L\rightarrow \infty$, the fragmentation is weak.

Quantum fragmentation can arise when the coupling matrix $g^{w,w'}$ within an equivalence class $R_\alpha$ is rank-deficient. If this $|R_\alpha|\times|R_\alpha|$ matrix has full rank, the local term explores all directions in $R_\alpha$ and no quantum fragmentation occurs. If instead it has rank $r < |R_\alpha|$, each local term $h_i$ acquires $(|R_\alpha| - r)$ null directions. For a mobile classical Krylov sector $\Kcl^{(\lambda)}$, the intersection of these local kernels defines the entangled frozen subspace
\begin{equation}
\KEFS^{(\lambda)} = \bigcap_i \ker\!\big(h_i\big|_{\Kcl^{(\lambda)}}\big)\,.
\end{equation}
For generic couplings $J_i$, there are no accidental cancellations between different local terms, so $\KEFS^{(\lambda)} = \ker\left(H|_{\Kcl^{(\lambda)}}\right)$. States in $\KEFS^{(\lambda)}$ are necessarily entangled because there is always one $h_i$ to evolve the product state in a mobile sector. So no product state can lie in $\bigcap_i \ker(h_i)$. We call the states in this set entangled frozen states (EFS).

We say the model exhibits \emph{quantum fragmentation} if $\KEFS^{(\lambda)}$ is non-empty for at least one mobile sector $\lambda$. Note that rank deficiency is necessary but not \emph{a priori} sufficient: each $h_i$ individually has a kernel, but the intersection over all windows could still be empty. In all models studied in this work, however, rank deficiency does produce non-empty $\KEFS^{(\lambda)}$. The orthogonal complement
\begin{equation}\label{eq:decomp}
\Kcl^{(\lambda)} = \Kq^{(\lambda)} \oplus \KEFS^{(\lambda)}
\end{equation}
defines the mobile quantum Krylov subspace. Together with the product-frozen subspace $\Fprod$, the entangled-frozen subspaces assemble into the total frozen space
\begin{equation}
\Ftot = \Fprod \oplus \Fent\,, \qquad \Fent = \bigoplus_{\lambda \in {\rm mobile}} \KEFS^{(\lambda)} ~.
\end{equation}
Hereafter we suppress the sector label $(\lambda)$ when no confusion arises. We call $\Kq$ irreducible if it contains no proper subspace that is invariant under all $\{h_i\}$; when $\Kq$ is reducible, it further decomposes into irreducible quantum Krylov subspaces. As we shall see, whether $\Kq$ is reducible or not, and how many irreducible sub-sectors it contains, sharply distinguishes different types of quantum fragmentation. We emphasize that no symmetry is needed for quantum fragmentation to occur. Symmetry, when present, organizes sectors into degenerate orbits and can enable charge decomposition within $\Kq$, but it does not create the fragmentation itself.


Having established the general framework, we now demonstrate the mechanism in four models of increasing algebraic structure. In this Letter, we focus on the simplest case, where the local term is a rank-1 projector onto an entangled state within the equivalence class, $h_i = J_i \ket{\psi}\bra{\psi}_i$. We further show that other known models exhibiting quantum fragmentation, including the quantum breakdown model and the quantum East model, also fit within this framework. The details are provided in the Supplemental Material~\cite{SM}. We also note an independent and complementary work, Ref.~\cite{han2026quantumfragmentation}, which develops a Rokhsar-Kivelson-type protocol for constructing quantum fragmented models.


\emph{Asymmetric triplet-flip projector. ---} We begin with the simplest case, proving that symmetry is unnecessary. Consider the semigroup $\tilde{G}_1 = \langle 0,1 \,|\, 000 = 111\rangle$ on a qubit chain. The equivalence class has $|R_\alpha| = 2$. We choose the rank-1 projector onto an asymmetric state:
\begin{equation}\label{eq:asym}
H = \sum_i J_i \ket{\psi}\bra{\psi}_{i,i+1,i+2}\,, \quad \ket{\psi} = a\ket{000} + b\ket{111}\,,
\end{equation}
with $a \neq b$ (for concreteness, $a = 1/\sqrt{5}$, $b = 2/\sqrt{5}$). This explicitly breaks the $\mathbb{Z}_2$ spin-flip symmetry: $[\hat{X}, H] \neq 0$ where $\hat{X} = \prod_j X_j$. The local null direction is $|\psi^-\rangle=(b\ket{000} -a\ket{111})/\sqrt{a^2+b^2}$, which is entangled for any $a,b \neq 0$. 
 
As a classical model, there are $N_{\rm frozen}^{\rm cl}(L) = 2 F_{L+1}$ frozen product states, where $F_n$ is the $n$-th Fibonacci number. These are precisely the strings without three consecutive identical qubits, which we denote $|{\rm N3C}\rangle$. The remaining $N_{\rm mobile}^{\rm cl}(L)=F_{L}-1$ non-frozen classical Krylov sectors have dimension greater than one. We call them \emph{mobile} sectors: within each, the semigroup rewriting rule $000 \leftrightarrow 111$ acts like a particle hopping through a frozen background, reshuffling the product-state basis while the N3C portion of the string remains inert. We illustrate the motion under the semigroup equivalence relation in Fig.~\ref{fig:hopping}. 
\begin{figure}[t]
\centering
\begin{tikzpicture}[
  site/.style={minimum size=0.55cm, draw, rounded corners=1.5pt, inner sep=0pt, font=\small\bfseries},
  frozen/.style={site, fill=gray!12, draw=gray!50},
  mobile/.style={site, fill=red!12, draw=red!40},
  mobileb/.style={site, fill=blue!12, draw=blue!40},
  lbl/.style={font=\footnotesize, text=gray!70!black},
  arr/.style={->, >=stealth, thick, gray!60}
]
\node[lbl, anchor=east] at (-0.4, 0) {(a)};
\foreach \x/\v in {0/0, 1/0, 2/0} {
  \node[mobile] (a\x) at (\x*0.75, 0) {\v};
}
\node[frozen] (a3) at (3*0.75, 0) {1};
\draw[decorate, decoration={brace, mirror, amplitude=4pt}, red!50]
  ([yshift=-2pt]a0.south west) to ([yshift=-2pt]a2.south east);
 
\draw[arr] (1.1, -0.55) to (1.1, -0.95) ;
 
\node[lbl, anchor=east] at (-0.4, -1.4) {(b)};
\foreach \x/\v in {0/1, 1/1, 2/1, 3/1} {
  \node[mobileb] (b\x) at (\x*0.75, -1.4) {\v};
}
 
\draw[arr] (1.1, -1.95) to (1.1, -2.35);
 
\node[lbl, anchor=east] at (-0.4, -2.8) {(c)};
\node[frozen] (c0) at (0*0.75, -2.8) {1};
\foreach \x/\v in {1/0, 2/0, 3/0} {
  \node[mobile] (c\x) at (\x*0.75, -2.8) {\v};
}
\draw[decorate, decoration={brace, mirror, amplitude=4pt}, red!50]
  ([yshift=-2pt]c1.south west) to ([yshift=-2pt]c3.south east);
 
\node[lbl, anchor=west] at (2.9, 0) {triplet at sites 1 to 3};
\node[lbl, anchor=west] at (2.9, -1.4) {both windows active};
\node[lbl, anchor=west] at (2.9, -2.8) {triplet at sites 2 to 4};

\end{tikzpicture}
\caption{Mobile triplet hopping at $L=4$ in the semigroup $000=111$. (a)~The $000$ triplet occupies sites 1 to 3. Applying the rewriting rule $000\to 111$ yields state~(b), where both three-site windows contain $111$. A second application $111\to 000$ on sites 2 to 4 gives state~(c). The net effect is that the $000$ triplet has hopped one site to the right. These three product states form a single classical Krylov sector.}
\label{fig:hopping}
\end{figure}
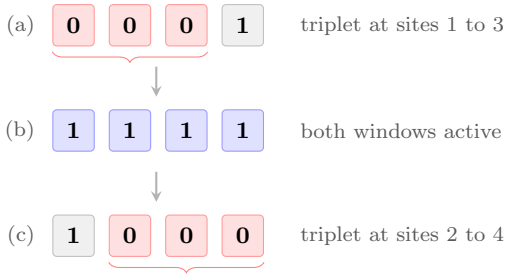
Each mobile classical Krylov subspace can be labeled by the following integrals of motion (IOMs) \cite{hartExactMazurBounds2024}: 
\begin{equation}
    \hat{N}^{\alpha_1 \alpha_2 \cdots \alpha_k}=\sum_{j_1<j_2<\cdots<j_k} \exp \left(\frac{2 \pi i}{p} \sum_{n=1}^k j_n\right) \prod_{n=1}^k \hat{n}_{j_n}^{\alpha_n},
\end{equation}
where the local particle number $\hat{n}_{j}^\alpha=|\alpha\rangle \langle \alpha|_j$ counts the local particle number. The root state of each classical Krylov subspace can be chosen as $|000\rangle^{\otimes k} \otimes |{\rm N3C}\rangle$, where $k$ counts the number of mobile triplets.
 
As a quantum fragmentation model in Eq.~\eqref{eq:asym}, there are additional frozen states that are intrinsically entangled. They can be constructed explicitly from $|\psi^-\rangle$ using the induction operation $\mathrm{Ind}$. The condition $H|\phi\rangle = 0$ requires that the wavefunction coefficients satisfy $c_{\eta,111,\sigma} = -\gamma\, c_{\eta,000,\sigma}$ for every three-site window, where $\gamma = a/b$. Given a seed state, $\mathrm{Ind}$ propagates this constraint across all overlapping windows. For example, at $L=4$, we show one entangled state in Fig.~\ref{fig:indmap}.
\begin{figure}[!h]
\centering
\begin{tikzpicture}[
    site/.style={minimum size=0.55cm, draw, rounded corners=1.5pt, inner sep=0pt, font=\small\bfseries},
    frozen/.style={site, fill=gray!12, draw=gray!50},
    mobile/.style={site, fill=red!12, draw=red!40},
    mobileb/.style={site, fill=blue!12, draw=blue!40},
    lbl/.style={font=\footnotesize, text=gray!70!black},
    math/.style={font=\large},
    bluebox/.style={draw=blue!40, thick, rounded corners=1pt},
    arr/.style={->, >=stealth, thick, gray!60}
]

\def\xstep{0.75}
\def\ystep{-0.8}

\node[math, anchor=south east] (main_eq) at (-1.8*\xstep, \ystep*2.6) {Ind($|\psi^-\rangle \otimes |1\rangle$) =};

\node[math, anchor=east] (minus_ket1) at (-0.3, \ystep*1) {$+ \, |$};
\node[math, anchor=west] (rket1) at (3.4*\xstep, \ystep*1) {$\rangle$};

\foreach \x/\v in {0/0, 1/0, 2/0, 3/1} {
    \node[site] (ket1_\x) at (\x*\xstep, \ystep*1) {\v};
}

\node[math, anchor=east] (plus_ket2) at (-0.3, \ystep*2.2) {$-\gamma \, |$};
\node[math, anchor=west] (rket2) at (3.4*\xstep, \ystep*2.2) {$\rangle$};

\foreach \x/\v in {0/1, 1/1, 2/1, 3/1} {
    \node[site] (ket2_\x) at (\x*\xstep, \ystep*2.2) {\v};
}

\node[math, anchor=east] (minus_ket3) at (-0.3, \ystep*3.4) {$+ \, |$};
\node[math, anchor=west] (rket3) at (3.4*\xstep, \ystep*3.4) {$\rangle$};

\foreach \x/\v in {0/1, 1/0, 2/0, 3/0} {
    \node[site] (ket3_\x) at (\x*\xstep, \ystep*3.4) {\v};
}


\def\boxpad{2.5pt}
\node[bluebox, fit=(ket1_0) (ket1_2) (ket2_0), inner sep=3pt] (box1) {};

\node[blue!60, font=\large, anchor=south] (label1) at (1.1*\xstep, \ystep*1 + 0.4) {$|\psi^-\rangle$};

\node[bluebox, fit=(ket2_1) (ket2_2) (ket3_1) (ket3_3), inner sep=3pt] (box2) {};

\node[blue!60, font=\large, anchor=north] (label2) at (1.85*\xstep, \ystep*3.4 - 0.4) {$|\psi^-\rangle$};


\end{tikzpicture}
\caption{Illustration of an EFS constructed from the Ind map. First, construct the classical Krylov subspace spanned by the state $|000\rangle \otimes |1\rangle$; then, insert coefficients in front of the computational basis states such that on any three consecutive sites, the state is either locally frozen or looks like $|\psi^-\rangle$.}
\label{fig:indmap}
\end{figure}

Each $000\to 111$ flip introduces a factor $-\gamma$, while each $111\to 000$ flip introduces $-1/\gamma$, so a round trip $000\to 111\to 000$ returns with coefficient $+1$. In this model, the product-frozen and entangled-frozen subspaces at general $L$ are
\begin{equation}
\begin{aligned}
\Fprod & = \mathrm{span}\Big\{\ket{{\rm N3C}}\Big\} ~, \\
\Fent & = \mathrm{span}\Big\{\mathrm{Ind}\big(|{\psi}^-\rangle^{\otimes k} \otimes |{\rm N3C}\rangle\big) \Big\}\,,
\end{aligned}
\end{equation}
with $k=1,\ldots,\lfloor L/3\rfloor$ and the total frozen space is $\Ftot = \Fprod \oplus \Fent$. Their dimensions are
\begin{equation}
d_L^{\rm prod} = 2F_{L+1}\,, \qquad d_L^{\rm ent} = F_L - 1\,, \qquad d_L^{\rm tot} = F_{L+3}-1 ~,
\label{eq:dim-frozen}
\end{equation}

The connection between the classical and quantum models can be stated precisely. In the classical model, the Krylov dimension of the sector with $k$ mobile triplets is $D_k^{\rm cl}(L) = \binom{L}{k} - \sum_{j=0}^{k-1}\binom{L}{j}$. We give a proof of this formula in \cite{SM}. In the quantum version of this model in Eq.~\eqref{eq:asym}, each classical mobile sector acquires exactly one EFS, giving the quantum Krylov dimension
\begin{equation}\label{eq:Qk}
D^q_{k}(L) = D_k^{\rm cl}(L) - 1 = \binom{L}{k} - \sum_{j=0}^{k-1}\binom{L}{j} - 1\,.
\end{equation}
For each mobile sector ($k\ge1$), the ``$-1$'' is precisely the EFS $\mathrm{Ind}(|\psi^-\rangle^{\otimes k}\otimes|{\rm N3C}\rangle)$. For this sector,
\begin{equation}
\KEFS^k = \mathrm{span}\Big\{\mathrm{Ind}(|\psi^-\rangle^{\otimes k}\otimes|{\rm N3C}\rangle)\Big\}\,,
\end{equation}
so the decomposition takes the form $\Kcl = \Kq \oplus \KEFS$.

\begin{table*}[t]
\centering
\caption{Krylov subspaces for the asymmetric projector model $H = \sum_i J_i(a\ket{000}+b\ket{111})(a\bra{000}+b\bra{111})_i$ with $a \neq b$. $F_n$: $n$-th Fibonacci number. The first two rows distinguish product-frozen sectors from entangled-frozen sectors. The root state labels the mobile quantum Krylov subspace; $\ket{\rm N3C}$ denotes a frozen string with no three consecutive identical qubits. For $1 \le k < L/3$, the number of non-degenerate sectors is $2F_{L+1-3k}$; the all-mobile sector $k=L/3$ (when $3\mid L$) is unique. Without symmetry, there is no Krylov degeneracy (Deg $= 1$) and no charge decomposition.}
\label{tab:krylov_asym}
\begin{tabular}{lccc}
\toprule
Root state & $D^q$ & Deg. & $\#$ of sectors \\
\midrule
$\ket{{\rm Frozen}_{\rm prod}}$ & $1$ & $1$ & $2F_{L+1}$ \\[4pt]
$\ket{{\rm Frozen}_{\rm ent}}$ & $1$ & $1$ & $F_L-1$ \\[4pt]
$\ket{\psi}\!\otimes\!\ket{\rm N3C}$ & $L-2$ & $1$ & $2F_{L-2}$ \\[4pt]
$\ket{\psi}^{\otimes 2}\!\otimes\!\ket{\rm N3C}$ & $\dfrac{L^2-3L-4}{2}$ & $1$ & $2F_{L-5}$ \\[6pt]
$\ket{\psi}^{\otimes 3}\!\otimes\!\ket{\rm N3C}$ & $\dfrac{L^3-6L^2-L-12}{6}$ & $1$ & $2F_{L-8}$ \\[6pt]
$\ket{\psi}^{\otimes k}\!\otimes\!\ket{\rm N3C}$, $1\le k<L/3$ & $D_k^q(L)$ & $1$ & $2F_{L+1-3k}$ \\[4pt]
($3\mid L$) $\ket{\psi}^{\otimes L/3}$ & $D_{L/3}^q(L)$ & $1$ & $1$ \\[4pt]
\bottomrule
\end{tabular}
\end{table*}


\emph{GHZ projector. ---} Setting $a = b = 1/\sqrt{2}$ in Eq.~\eqref{eq:asym} restores the $\mathbb{Z}_2$ spin-flip symmetry $[\hat{X}, H] = 0$, giving the GHZ projector:
\begin{equation}\label{eq:GHZ_proj}
H_{\rm GHZ} = \sum_i J_i \ket{\rm GHZ}\bra{\rm GHZ}_{i,i+1,i+2}\,.
\end{equation}
For each mobile classical sector, the quantum Krylov dimension $D^q$ before resolving the $\mathbb{Z}_2$ symmetry is the same as in the asymmetric model. The difference is that the $\mathbb{Z}_2$ symmetry enriches the sector structure in two ways. For a generic mobile sector with $k<L/3$, labeled by a nonempty N3C remainder string $w$ of length $L-3k$, the sector rooted at $\ket{\rm GHZ}^{\otimes k}\otimes\ket{w}$ is mapped by $\hat{X}$ to the distinct sector rooted at $\ket{\rm GHZ}^{\otimes k}\otimes\ket{\bar w}$, where $\bar w$ is the bit-complement of $w$. These two sectors are distinct but isospectral, and therefore form a degenerate pair. For the all-mobile sector when $3|L$, i.e.\ $k = L/3$ with empty frozen string, $\hat{X}$ maps the sector to itself and therefore acts as a symmetry within $\Kq$. In that case, the quantum Krylov subspace of dimension $D_{L/3}^{q}$ further splits into two irreducible charge sectors. The dimensions of these sectors can be derived as follows. The classical all-mobile sector has dimension $D_{L/3}^{\rm cl} = D_{L/3}^{q} + 1$, which is always even since every binary string differs from its bit-complement. Each $\hat{X}$ eigenspace of the classical sector therefore has dimension $(D_{L/3}^{q}+1)/2$. The unique EFS in this sector, $\mathrm{Ind}(\ket{\rm GHZ^-}^{\otimes L/3})$, has definite $\hat{X}$-parity $(-1)^{L/3}$. Removing it from the corresponding eigenspace gives
\begin{equation}
\dim \kry_{\hat{X}=\pm 1} = \frac{D_{L/3}^q \mp (-1)^{L/3}}{2}\,.
\end{equation}
The two charge sectors differ in dimension by exactly one because the EFS carries a definite parity.
 
This example illustrates our general principle: symmetry does not create quantum fragmentation, since it is already present in the asymmetric model, but rather organizes the Krylov subspaces. When the symmetry maps a sector to a different one, it creates degeneracies; when it maps a sector to itself, it enables further decomposition within $\Kq$.

The EFS in the asymmetric and GHZ triplet-flip models also have a unique entanglement structure. Within each classical Krylov subspace, the EFS is the unique frustration-free zero mode annihilated by every local term of the Hamiltonian. Despite being dynamically frozen, it is highly entangled: the half-chain entanglement entropy in the largest mobile sector scales as $S_{\rm EFS}(L/2) \sim \sqrt{L}$ \cite{SM}. This sub-volume scaling is reminiscent of the frustration-free Motzkin spin chain~\cite{motzkin} and is sharply distinct from area-law entangled quantum many-body scars.

\emph{Cyclic qutrit projector. ---} We now generalize to the cyclic qutrit projector model with $\mathbb{Z}_3$ symmetry. The semigroup $\tilde{G}_3 = \langle 0,1,2 \,|\, 012 = 120 = 201,\; 102 = 021 = 210\rangle$ with even and odd cyclic projectors
\begin{equation}\label{eq:cyclic}
H_3 = \sum_i J_i(\alpha\ket{\psi_+}\bra{\psi_+}_i + \beta\ket{\psi_-}\bra{\psi_-}_i)\,, \quad \alpha \neq \beta ~,
\end{equation}
where $\ket{\psi_\pm}$ are the even/odd cyclic singlets, possesses $\mathbb{Z}_3$ digit-shift symmetry generated by
$X = \left(\begin{smallmatrix} 0&0&1\\1&0&0\\0&1&0\end{smallmatrix}\right)$, $X\ket{q} = \ket{q{+}1\bmod 3}$.

Each equivalence class has $|R_\alpha| = 3$, and the rank-1 singlet projector creates a 2-dimensional local null space per window. The EFS satisfy $c_{\ldots012\ldots}+c_{\ldots120\ldots}+c_{\ldots201\ldots} = 0$, and similarly for the odd class at every window. The frozen state count satisfies $d_{L+1}^{\rm frozen} = 2d_L^{\rm frozen}+2d_{L-1}^{\rm frozen}+1$.

Unlike the GHZ model, where $\dim \KEFS = 1$ in each mobile sector, the cyclic model has $\dim \KEFS$ growing with system size. The $\mathbb{Z}_3$ symmetry organizes generic mobile sectors into triplets related by the digit shift $\hat{X}^{\otimes L}$. When $3|L$, the all-mobile sectors are invariant under this symmetry, so $\hat{X}^{\otimes L}$ acts within $\Kq$ and resolves it into three charge sectors. This shows that the growth of $\KEFS$ is logically distinct from the further decomposition of $\Kq$. When $J_i,\alpha,\beta$ are real, the system additionally possesses complex conjugation $K$ as an anti-unitary symmetry, which enforces $\Kq^{(1)}$ to be isospectral to $\Kq^{(2)}$. Together with the $\mathbb{Z}_3$ generator, $K$ generates the magnetic group $\mathbb{Z}_3 \sqcup K \mathbb{Z}_3$, and the Hilbert space decomposition follows Wigner’s corepresentation theory \cite{Wigner1959,RevModPhys.40.359,rumynin2021realrepresentationsc2gradedgroups}. A complete sector table at $L=9$ is given in \cite{SM}.

\emph{Temperley-Lieb model. ---} The TL model contains the same two ingredients as the previous examples, namely a classically fragmented semigroup and rank-deficient local projectors, but it adds a new one: the local projectors satisfy the Jones relation. As a result, the remaining quantum Krylov subspace is not merely organized by symmetry charges. Instead, the projected bond algebra acts within $\Kq$ itself and forces a further decomposition into irreducible TL modules.

The model starts from the semigroup dynamics $\tilde{G}_{\rm TL} = \langle 0,1,2\,|\, 00 = 11 = 22\rangle$ and the rank-1 singlet projector $\ket{\Phi^+} = (\ket{00}+\ket{11}+\ket{22})/\sqrt{3}$ gives the Temperley-Lieb Hamiltonian:
\begin{equation}\label{eq:TL}
H_{\rm TL} = \sum_i J_i \ket{\Phi^+}\bra{\Phi^+}_{i,i+1}\,.
\end{equation}
The projectors $e_i = \ket{\Phi^+}\bra{\Phi^+}_{i,i+1}$ satisfy the Jones relation,
\begin{equation}\label{eq:Jones}
e_i^2 = e_i\,,\quad e_ie_j = e_je_i\;(|i{-}j|\geq 2)\,,\quad e_ie_{i\pm 1}e_i = \tfrac{1}{3}\,e_i\,.
\end{equation}
so the projected bond algebra in each mobile sector is a representation of ${\rm TL}_{L-1}(3)$. At the same time, the rank deficiency produces EFS exactly as in the previous models. If we parametrize the state as $|\Psi\rangle = \sum c_s|s\rangle$ with $s$ ranging over strings in $\{0,1,2\}^L$, the EFS condition requires
\begin{equation}
\begin{aligned}
    c_{\ldots 00 \ldots}+c_{\ldots 11 \ldots}+c_{\ldots 22 \ldots} =0 ~.
\end{aligned}
\end{equation}
For example, in the mobile sector $\Kcl(0000)$ at $L=4$, the entangled-frozen subspace has dimension $\dim \KEFS = 7$; a convenient basis is listed in the End Matter. Thus the same decomposition $\Kcl = \Kq \oplus \KEFS$ holds here as well. The new feature is that $\Kq$ is itself reducible as a ${\rm TL}_{L-1}(3)$ module. We therefore decompose it into standard modules $\Delta_j$, whose dimensions are
\begin{equation}
    {\rm dim} \Delta_j = \binom{L}{(L-j) / 2}-\binom{L}{(L-j) / 2-1} ~.
\end{equation}
When $L$ is even, $j=0,2,4,\ldots$ takes even values; when $L$ is odd, $j=1,3,5,\ldots$ takes odd values. In the $L=4$ example, $\dim \Delta_0=2,\dim \Delta_2 =3,\dim \Delta_4=1$, and the quantum Krylov subspace decomposes as
\begin{equation}
    \Kq = \Delta_0 \otimes  \mathbb{C}^1 \oplus \Delta_2 \otimes \mathbb{C}^{2} ~.
\end{equation}
In Table~\ref{tab:TL}, we show the decomposition of $\Kq$ in the all-mobile sectors for $L=5,6,7,8,9$. The number of irreducible TL blocks grows rapidly with system size, so the Jones relation fragments $\Kq$ itself rather than merely organizing sectors by symmetry.
\begin{table}[t]
\caption{Decomposition of the all-mobile quantum Krylov subspace into irreducible representations of the TL algebra. Here $N_{\rm irr}$ counts irreducible TL blocks, i.e.\ the total number of standard-module copies in the decomposition.}
\label{tab:TL}
\centering
\resizebox{\columnwidth}{!}{%
\begin{tabular}{cccc}
\toprule
$L$ & $D_q$ & $N_{\rm irr}$ & Krylov decomposition  \\
\midrule
5 & 17 & 4 & $\Delta_1 \otimes \mathbb{C}^{1} \oplus \Delta_3 \otimes \mathbb{C}^3$  \\
6 & 58 & 10 & $\Delta_0 \otimes \mathbb{C}^1 \oplus \Delta_{2} \otimes \mathbb{C}^2 \oplus \Delta_{4}\otimes \mathbb{C}^7$  \\
7 & 128 & 16 & $\Delta_{1} \otimes \mathbb{C}^1 \oplus \Delta_{3} \otimes \mathbb{C}^{3} \oplus \Delta_{5}\otimes \mathbb{C}^{12}$ \\
8 & 413 & 39 & $\Delta_{0}\otimes \mathbb{C}^{1} \oplus \Delta_{2}\otimes \mathbb{C}^{2} \oplus \Delta_{4} \otimes \mathbb{C}^{7} \oplus \Delta_{6} \otimes \mathbb{C}^{29}$ \\
9 & 934 & 69 & $\Delta_{1}\otimes \mathbb{C}^1 \oplus \Delta_{3} \otimes \mathbb{C}^{3} \oplus \Delta_{5} \otimes \mathbb{C}^{12} \oplus \Delta_{7} \otimes \mathbb{C}^{53}$ \\
\bottomrule
\end{tabular}
}
\end{table}

\textit{Weak \vs\ Strong Quantum Fragmentation. ---}
All four models discussed above share the same structure. The classical frozen sectors are isolated product states and together span $\Fprod$, while each mobile classical Krylov subspace splits as in Eq.~\eqref{eq:decomp} into a mobile quantum Krylov subspace and an entangled-frozen subspace. The difference between models lies in whether the remaining mobile space $\Kq$ is itself irreducible. In the asymmetric projector model, $\Kq$ cannot be further decomposed. In the GHZ projector model, the all-mobile $\Kq$ sector further decomposes according to the $\mathbb{Z}_2$ spin-flip symmetry. Similarly, in the cyclic model, the all-mobile $\Kq$ decomposes with respect to the $\mathbb{Z}_3$ symmetry. In the TL model, $\Kq$ decomposes into TL standard modules. This motivates the following definition of weak and strong quantum fragmentation.

\begin{figure*}[t]
    \centering
    \includegraphics[width=\linewidth]{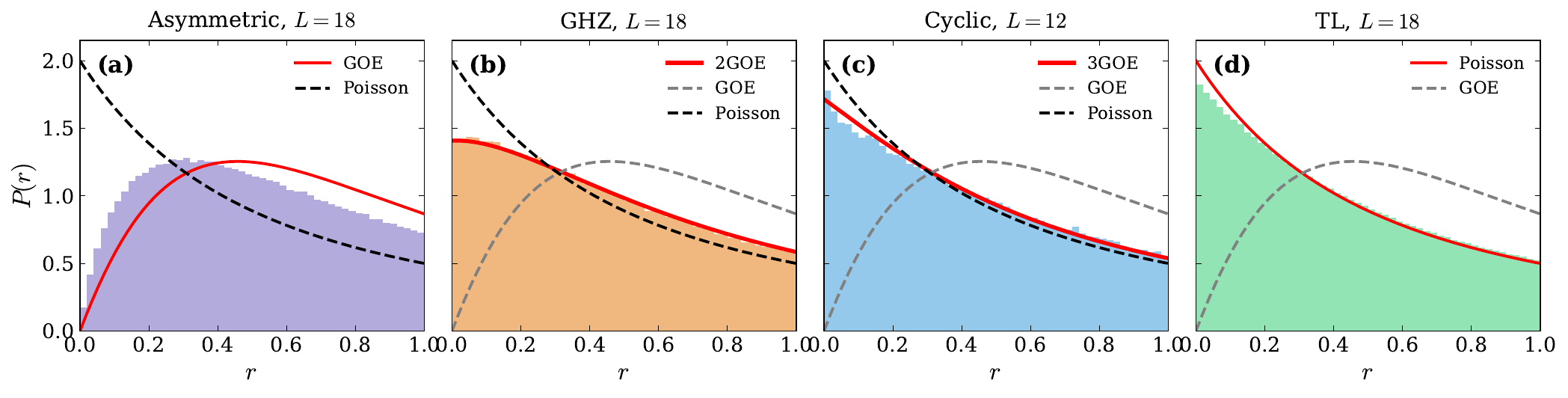}
    \caption{Gap ratio distribution $P(r)$ for the distinct eigenvalues in $\Kq$ of the largest mobile sector, compared to GOE and Poisson. (a)~Asymmetric projector ($L=18$): finite-size data drifting toward GOE. (b)~GHZ projector ($L=18$): well captured by $2$GOE, reflecting unresolved $\mathbb{Z}_2$ sectors. (c)~Cyclic qutrit ($L=12$): well captured by $3$GOE, reflecting unresolved $\mathbb{Z}_3$ sectors. (d)~TL model ($L=18$): many TL blocks, approaching Poisson distribution.}
    \label{fig:level_stats}
\end{figure*}

For a given mobile classical Krylov sector, we first remove the EFS. We then decompose the remaining reducible quantum Krylov subspace into irreducible invariant subspaces of the projected bond algebra generated by the local terms,
\begin{equation}
    \Kq = \bigoplus_{\mu=1}^{N_{\rm irr}(L)} \kry_{q,\mu} ~,
\end{equation}
where each $\kry_{q,\mu}$ admits no further decomposition under all $\{h_i\}$. Let
\begin{equation}
    D_q = \dim \Kq\,, \qquad D_{\rm max} = \max_\mu \dim \kry_{q,\mu} ~.
\end{equation}
We call the quantum fragmentation \emph{weak} if $D_{\rm max}/D_q \sim \mathcal{O}(1)$ in the thermodynamically large mobile sectors, namely if at least one irreducible quantum block continues to occupy a finite fraction of the mobile space $\mathcal{K}_q$. In all models studied here, this is equivalently characterized by $N_{\rm irr}(L)=\mathcal{O}(1)$. We call the quantum fragmentation \emph{strong} if
\begin{equation}
    \frac{D_{\rm max}}{D_q} \xrightarrow[L\to\infty]{} 0 ~,
\end{equation}
so that no irreducible quantum block retains a finite fraction of $\Kq$. In our examples, this occurs because $N_{\rm irr}(L)$ grows with system size.

This definition is the direct quantum analog of the usual weak/strong criterion for classical fragmentation based on the largest Krylov subspace fraction. With this definition, the asymmetric projector model is weakly quantum fragmented because $\Kq$ is irreducible. The GHZ and cyclic projector models are also weakly quantum fragmented: after resolving the $\mathbb{Z}_2$ or $\mathbb{Z}_3$ charges in the symmetry-stable all-mobile sectors, only $\mathcal{O}(1)$ irreducible blocks remain. By contrast, the TL model is strongly quantum fragmented. There the Jones relation decomposes $\Kq$ into an increasing number of TL standard modules with growing multiplicities, so $D_{\rm max}/D_q$ decreases with $L$.

The weak/strong distinction leaves a clear imprint on the gap-ratio statistics~\cite{Oganesyan2007,Atas2013,Bohigas:1983er,Berry:1977jf}
\begin{equation}
r_n = \frac{\min(\delta_n, \delta_{n+1})}{\max(\delta_n, \delta_{n+1})}, \quad \delta_n = E_{n+1} - E_n~,
\end{equation}
computed within $\Kq$ of the largest mobile sector with random couplings $J_i$ averaged over disorder realizations. We show the results in Fig.~\ref{fig:level_stats}. In weakly fragmented models, $\Kq$ decomposes into $\mathcal{O}(1)$ irreducible blocks, each of which individually thermalizes with GOE level repulsion. When these blocks are not resolved by symmetry quantum numbers, the observed spectrum is a superposition of $m$ independent GOE distributions, described by the $m$GOE gap-ratio distribution $P_m(r)$~\cite{Giraud2022}. The asymmetric model shown in panel~(a) has $m=1$ irreducible block and drifts toward GOE with increasing $L$. The GHZ model has $m=2$ blocks, corresponding to $\mathbb{Z}_2$ charge, and the cyclic model exhibits $m=3$ from $\mathbb{Z}_3$ charge. Both are well captured by the corresponding $m$GOE predictions as shown in panels~(b,c). In the strongly fragmented TL model shown in panel~(d), $\Kq$ splits into an exponentially growing number of irreducible TL standard modules as $L$ increases. Since $N_{\rm irr}(L) \to \infty$, the $m$GOE distribution with $m = N_{\rm irr}$ approaches Poisson in this limit, and the gap-ratio distribution is expected to converge to $P_{\rm Poisson}(r)$ as $L \to \infty$.

\textit{Discussion. ---}
We have identified entangled frozen states as the universal mechanism underlying quantum Hilbert space fragmentation. The mechanism contains two essential ingredients, classically fragmented models and rank-deficient local terms. There is no necessary requirement for symmetries or specific bond algebras. However, symmetries are additional inputs that can generate degenerate Krylov subspaces and decompose reducible Krylov subspaces into symmetry charge sectors. In the asymmetric and GHZ projector models, the half-chain entanglement entropy of the EFS in the largest mobile sector scales as $\sqrt{L}$. We introduce the notation weak/strong quantum fragmentation to capture the reducibility of the quantum Krylov subspace under these additional structures. In the four models discussed in this paper, we can distinguish strong and weak quantum fragmentation in the gap-ratio statistics.

The EFSs are isolated from classical mobile Krylov subspaces, yet they span a subspace of dimension growing with $L$. This dynamical protection, arising from entanglement structure rather than symmetry, suggests connections to quantum error correction. The GHZ projector model admits efficient Trotterization for quantum simulation on near-term hardware, providing a concrete experimental platform for probing quantum fragmentation and the weak/strong transition.

\textit{Notes added. --- } During the preparation of this work, we are aware of a similar work~\cite{han2026quantumfragmentation}.

\textit{Acknowledgements. --- } We thank Yiqiu Han, Oliver Hart, Yahui Li, Biao Lian, Shuo Liu, Yukai Lu, Yu-Ping Wang, Nicholas O'Dea for helpful discussions. Z.Z. would like to especially thank Matias Zaldarriaga for organizing the AI term at the IAS and for his support in facilitating the use of Claude Code.

\begin{center}

{\large\textbf{END MATTER}}

\vspace{0.5cm}

\textbf{A: Basis of \texorpdfstring{$\KEFS$}{KEFS} in \texorpdfstring{$\Kcl(0000)$}{Kcl(0000)} of the TL model}


\end{center}

For the $L=4$ mobile sector $\Kcl(0000)$ of the TL model, a convenient basis of the seven-dimensional entangled-frozen subspace $\KEFS$ is
\begin{equation*}
\begin{aligned}
    |{\rm EFS}_1\rangle & = |0110\rangle - |0220\rangle ~, \quad |{\rm EFS}_2 \rangle = |1001\rangle - |1221\rangle ~, \\
    |{\rm EFS}_3\rangle & = |2002\rangle - |2112\rangle ~, \\
    |{\rm EFS}_4\rangle & = |0000\rangle-|0011\rangle-|0110\rangle-|1001\rangle  \\
    & \quad -|1100\rangle+|1111\rangle ~, \\
        |{\rm EFS}_5\rangle & = |0000\rangle-|0022\rangle-|0220\rangle-|2002\rangle \\
    & \quad -|2200\rangle+|2222\rangle ~, \\
    \end{aligned}
\end{equation*}
\begin{equation}
\begin{aligned}
    |{\rm EFS}_6 \rangle & = |1111\rangle-|1122\rangle-|1221\rangle-|2112\rangle \\
    & \quad -|2211\rangle+|2222\rangle ~, \\
    |{\rm EFS}_7\rangle & =|0000\rangle-|0011\rangle-|0220\rangle -|2200\rangle+|2211\rangle ~.
\end{aligned}
\end{equation}

\vspace{0.5cm}

\textbf{B: Sub-volume $\sqrt{L}$ entanglement scaling}

\vspace{0.5cm}
Fig.~\ref{fig:efs-scaling} demonstrates the $\sqrt{L}$ entanglement scaling for the EFS in the GHZ projector and the asymmetric projector models with weights $\gamma=a/b$. Each mobile classical Krylov sector is labeled by an N3C frozen string $w_{\rm frozen}$ of length $L - 3k$, which represents a group element $c_f \in \mathbb{Z}_3 * \mathbb{Z}_3$ ~\cite{SM}. The all-mobile sector corresponds to $c_f = e$, while the smallest non-empty frozen tails for $L = 3k+1$ and $L = 3k+2$ are $c_f = 0$ and $c_f = 01$ respectively. The bipartite entanglement entropy $S(L/2)$ at the symmetric bisection grows linearly in $\sqrt{L}$ in all three sector families. Across the full range of cut positions $L_A$, the entropy $S(L_A)$ traces the characteristic Brownian-bridge arch $\sim \sqrt{L_A(L-L_A)/L}$ in Fig.~\ref{fig:efs-vs-LA}. The details of the bridge-walk picture is given in \cite{SM}.

\begin{widetext}

\begin{figure}[!h]
\centering
\includegraphics[width=0.95\linewidth]{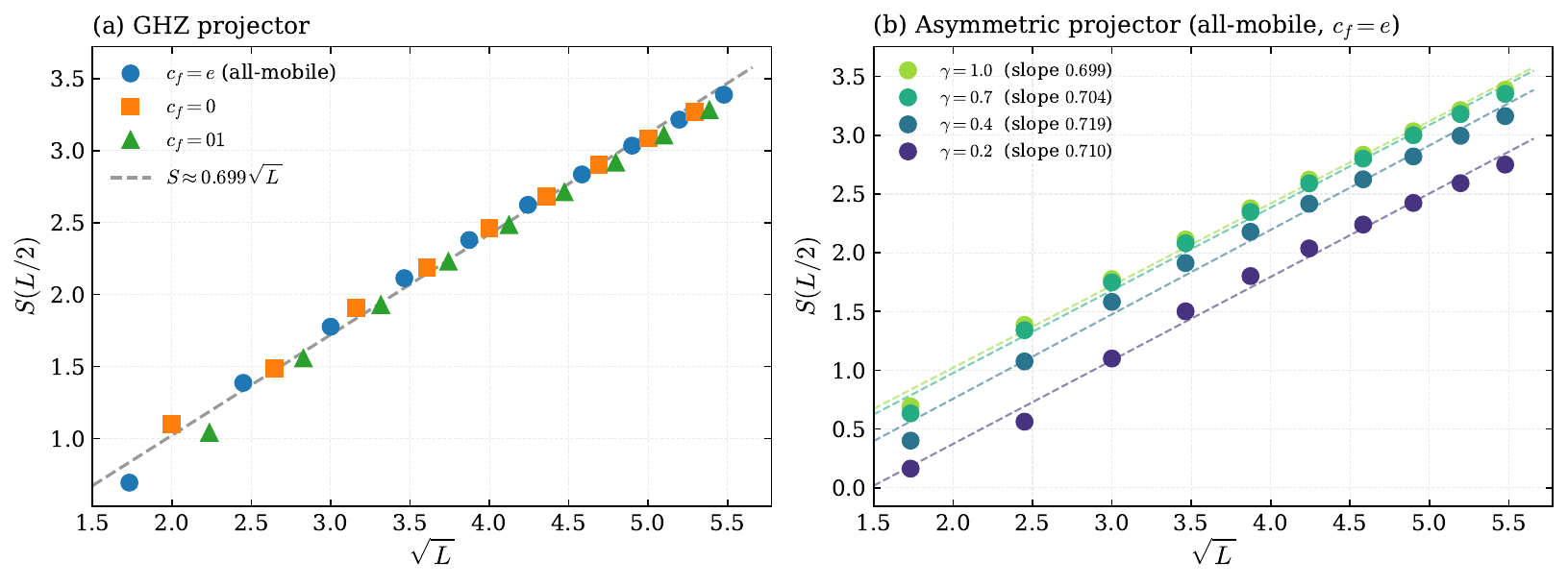}
\caption{Midpoint bipartite entanglement entropy $S(L/2)$ of the EFS versus $\sqrt{L}$. \textbf{(a)} EFS entanglement entropy scaling in the GHZ projector model with three sector families ($c_f \in \{e, 0, 01\}$). They all follow the same linear scaling $S \approx 0.699\sqrt{L}$. \textbf{(b)} Asymmetric projector with $\gamma$ varying from the GHZ point $\gamma = 1$ down to $\gamma = 0.2$, all in the all-mobile classical Krylov sector. The $\sqrt{L}$ scaling persists for every $\gamma$, with $\gamma$-dependent prefactor.}
\label{fig:efs-scaling}
\end{figure}

\begin{figure}[!h]
\centering
\includegraphics[width=0.95\linewidth]{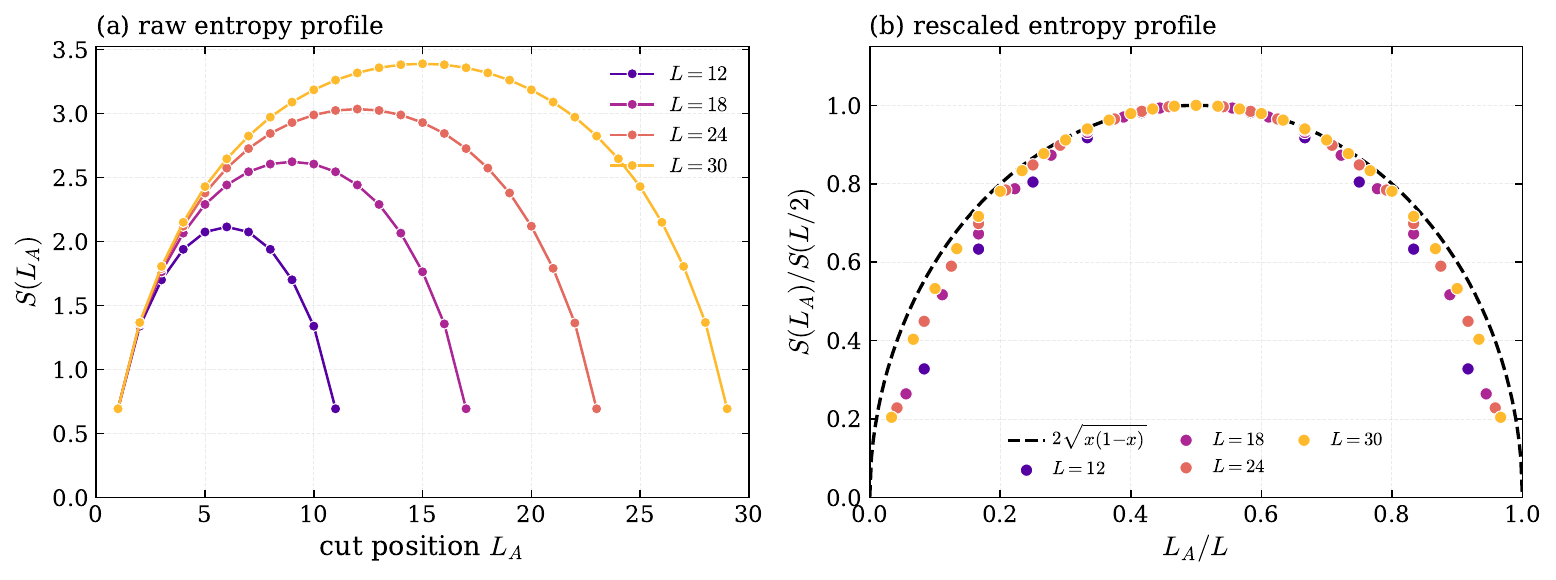}
\caption{EFS entanglement entropy versus cut position for the GHZ all-mobile sector at various $L$. \textbf{(a)} Raw arch profile $S(L_A)$, peaking at $L_A = L/2$ and vanishing at the boundaries. \textbf{(b)} Rescaled profile: rescaling $S(L_A)/S(L/2)$ vs $L_A/L$ collapses all curves onto a universal arch close to the Brownian-bridge envelope $2\sqrt{x(1-x)}$ shown in the dashed line, with small finite-$L$ deviations at the boundaries.}
\label{fig:efs-vs-LA}
\end{figure}
\end{widetext}

\bibliography{mybib.bib}

@article{hartExactMazurBounds2024,
  title = {Exact {{Mazur}} Bounds in the Pair-Flip Model and Beyond},
  author = {Hart, Oliver},
  year = 2024,
  month = jul,
  journal = {SciPost Phys. Core},
  volume = {7},
  number = {3},
  pages = {040},
  issn = {2666-9366},
  doi = {10.21468/SciPostPhysCore.7.3.040},
  urldate = {2025-11-21},
}

@article{Balasubramanian:2023wgn,
    author = "Balasubramanian, Shankar and Gopalakrishnan, Sarang and Khudorozhkov, Alexey and Lake, Ethan",
    title = "{Glassy Word Problems: Ultraslow Relaxation, Hilbert Space Jamming, and Computational Complexity}",
    eprint = "2312.04562",
    archivePrefix = "arXiv",
    primaryClass = "quant-ph",
    doi = "10.1103/PhysRevX.14.021034",
    journal = "Phys. Rev. X",
    volume = "14",
    number = "2",
    pages = "021034",
    year = "2024"
}

@article{Cornwell2008COUNTINGFP,
  title={COUNTING FUNDAMENTAL PATHS IN CERTAIN GARSIDE SEMIGROUPS},
  author={Christopher R. Cornwell and Stephen P. Humphries},
  journal={Journal of Knot Theory and Its Ramifications},
  year={2008},
  volume={17},
  pages={191-211},
  url={https://api.semanticscholar.org/CorpusID:122380191}
}

@article{Giraud2022,
  title = {Probing symmetries of quantum many-body systems through gap ratio statistics},
  author = {Giraud, Olivier and Mac{\'e}, Nicolas and Vernier, {\'E}ric and Alet, Fabien},
  year = {2022},
  journal = {Physical Review X},
  volume = {12},
  number = {1},
  pages = {011006},
  doi = {10.1103/PhysRevX.12.011006},
  eprint = {2008.11173},
  archivePrefix = {arXiv},
  primaryClass = {cond-mat.dis-nn}
}

@article{Khemani:2019vor,
    author = "Khemani, Vedika and Hermele, Michael and Nandkishore, Rahul M.",
    title = "{Localization from Hilbert space shattering: From theory to physical realizations}",
    eprint = "1910.01137",
    archivePrefix = "arXiv",
    primaryClass = "cond-mat.stat-mech",
    doi = "10.1103/PhysRevB.101.174204",
    journal = "Phys. Rev. B",
    volume = "101",
    number = "17",
    pages = "174204",
    year = "2020"
}

@article{sala2020ergodicity,
  title={{E}rgodicity breaking arising from {H}ilbert space fragmentation in dipole-conserving {H}amiltonians},
  author={Sala, Pablo and Rakovszky, Tibor and Verresen, Ruben and Knap, Michael and Pollmann, Frank},
  journal={Physical Review X},
  doi = "10.1103/PhysRevX.10.011047",
  volume={10},
  number={1},
  pages={011047},
  year={2020},
  publisher={APS}
}

@article{Moudgalya:2021xlu,
    author = "Moudgalya, Sanjay and Bernevig, B. Andrei and Regnault, Nicolas",
    title = "{Quantum many-body scars and Hilbert space fragmentation: a review of exact results}",
    eprint = "2109.00548",
    archivePrefix = "arXiv",
    primaryClass = "cond-mat.str-el",
    doi = "10.1088/1361-6633/ac73a0",
    journal = "Rept. Prog. Phys.",
    volume = "85",
    number = "8",
    pages = "086501",
    year = "2022"
}

@inbook{Moudgalya:2019vlp,
    author = "Moudgalya, Sanjay and Prem, Abhinav and Nandkishore, Rahul and Regnault, Nicolas and Bernevig, B. Andrei",
    title = "{Thermalization and Its Absence within Krylov Subspaces of a Constrained Hamiltonian}",
    eprint = "1910.14048",
    archivePrefix = "arXiv",
    primaryClass = "cond-mat.str-el",
    doi = "10.1142/9789811231711_0009",
    year = "2021"
}

@article{Moudgalya:2021ixk,
    author = "Moudgalya, Sanjay and Motrunich, Olexei I.",
    title = "{Hilbert Space Fragmentation and Commutant Algebras}",
    eprint = "2108.10324",
    archivePrefix = "arXiv",
    primaryClass = "cond-mat.stat-mech",
    doi = "10.1103/PhysRevX.12.011050",
    journal = "Phys. Rev. X",
    volume = "12",
    pages = "011050",
    year = "2022"
}

@article{Moudgalya:2022gtj,
    author = "Moudgalya, Sanjay and Motrunich, Olexei I.",
    title = "{From symmetries to commutant algebras in standard Hamiltonians}",
    eprint = "2209.03370",
    archivePrefix = "arXiv",
    primaryClass = "cond-mat.str-el",
    doi = "10.1016/j.aop.2023.169384",
    journal = "Annals Phys.",
    volume = "455",
    pages = "169384",
    year = "2023"
}

@article{Regnault:2022ocy,
    author = "Regnault, Nicolas and Liu, Shuo and Bernevig, B. Andrei",
    title = "{Integer characteristic polynomial factorization and Hilbert space fragmentation}",
    journal = "arXiv e-prints",
    eprint = "2210.08019",
    archivePrefix = "arXiv",
    primaryClass = "cond-mat.stat-mech",
    year = "2022"
}

@article{Iadecola:2025fyg,
    author = "Iadecola, Thomas",
    title = "{Symmetry Fragmentation}",
    eprint = "2510.06333",
    archivePrefix = "arXiv",
    primaryClass = "quant-ph",
    doi = "10.1103/n516-psj5",
    journal = "Phys. Rev. Lett.",
    volume = "136",
    number = "10",
    pages = "100401",
    year = "2026"
}

@article{Lian:2022nqj,
    author = "Lian, Biao",
    title = "{Quantum breakdown model: From many-body localization to chaos with scars}",
    eprint = "2208.10509",
    archivePrefix = "arXiv",
    primaryClass = "cond-mat.str-el",
    doi = "10.1103/PhysRevB.107.115171",
    journal = "Phys. Rev. B",
    volume = "107",
    number = "11",
    pages = "115171",
    year = "2023"
}

@article{Brighi:2022kca,
    author = "Brighi, Pietro and Ljubotina, Marko and Serbyn, Maksym",
    title = "{Hilbert space fragmentation and slow dynamics in particle-conserving quantum East models}",
    eprint = "2210.15607",
    archivePrefix = "arXiv",
    primaryClass = "quant-ph",
    doi = "10.21468/SciPostPhys.15.3.093",
    journal = "SciPost Phys.",
    volume = "15",
    number = "3",
    pages = "093",
    year = "2023"
}

@article{Chen:2024fkk,
    author = "Chen, Bo-Ting and Prem, Abhinav and Regnault, Nicolas and Lian, Biao",
    title = "{Quantum fragmentation in the extended quantum breakdown model}",
    eprint = "2401.16480",
    archivePrefix = "arXiv",
    primaryClass = "cond-mat.str-el",
    doi = "10.1103/PhysRevB.110.165109",
    journal = "Phys. Rev. B",
    volume = "110",
    number = "16",
    pages = "165109",
    year = "2024"
}

@article{Chen:2026aqj,
    author = "Chen, Bo-Ting and Wang, Yu-Ping and Lian, Biao",
    title = "{Bridging Commutant and Polynomial Methods for Hilbert Space Fragmentation}",
    journal = "arXiv e-prints",
    eprint = "2601.00294",
    archivePrefix = "arXiv",
    primaryClass = "cond-mat.stat-mech",
    year = "2026"
}

@article{Oganesyan2007,
    author = "Oganesyan, Vadim and Huse, David A.",
    title = "{Localization of interacting fermions at high temperature}",
    doi = "10.1103/PhysRevB.75.155111",
    journal = "Phys. Rev. B",
    volume = "75",
    number = "15",
    pages = "155111",
    year = "2007"
}

@article{Atas2013,
    author = "Atas, Y. Y. and Bogomolny, E. and Giraud, O. and Roux, G.",
    title = "{Distribution of the ratio of consecutive level spacings in random matrix ensembles}",
    doi = "10.1103/PhysRevLett.110.084101",
    journal = "Phys. Rev. Lett.",
    volume = "110",
    number = "8",
    pages = "084101",
    year = "2013",
    eprint = "1212.5611",
    archivePrefix = "arXiv",
    primaryClass = "cond-mat.stat-mech"
}

@article{huet1980confluent,
    author = "Huet, G{\'e}rard",
    title = "{Confluent Reductions: Abstract Properties and Applications to Term Rewriting Systems}",
    doi = "10.1145/322217.322230",
    journal = "J. ACM",
    volume = "27",
    number = "4",
    pages = "797--821",
    year = "1980"
}

@book{Stanley_Fomin_1999,
    author = "Stanley, Richard P.",
    title = "{Enumerative Combinatorics, Volume 2}",
    publisher = "Cambridge University Press",
    year = "1999",
    doi = "10.1017/CBO9780511609589"
}

@misc{oeisA047098,
    author = "{OEIS Foundation}",
    title = "{Entry A047098 in The On-Line Encyclopedia of Integer Sequences}",
    url = {https://oeis.org/A047098},
    year = "2024"
}

@misc{oeisA213028,
    author = "{OEIS Foundation}",
    title = "{Entry A213028 in The On-Line Encyclopedia of Integer Sequences}",
    url = {https://oeis.org/A213028},
    year = "2024"
}

@article{Rakovszky:2020apb,
    author = "Rakovszky, Tibor and Sala, Pablo and Verresen, Ruben and Knap, Michael and Pollmann, Frank",
    title = "{Statistical localization: From strong fragmentation to strong edge modes}",
    eprint = "1910.06341",
    archivePrefix = "arXiv",
    primaryClass = "cond-mat.str-el",
    doi = "10.1103/PhysRevB.101.125126",
    journal = "Phys. Rev. B",
    volume = "101",
    number = "12",
    pages = "125126",
    year = "2020"
}

@article{Moudgalya:2018fua,
    author = "Moudgalya, Sanjay and Rachel, Stephan and Bernevig, B. Andrei and Regnault, Nicolas",
    title = "{Exact excited states of nonintegrable models}",
    eprint = "1708.05021",
    archivePrefix = "arXiv",
    primaryClass = "cond-mat.str-el",
    doi = "10.1103/PhysRevB.98.235155",
    journal = "Phys. Rev. B",
    volume = "98",
    number = "23",
    pages = "235155",
    year = "2018"
}

@article{Morningstar:2020ror,
    author = "Morningstar, Alan and Khemani, Vedika and Huse, David A.",
    title = "{Kinetically constrained freezing transition in a dipole-conserving system}",
    eprint = "2004.00096",
    archivePrefix = "arXiv",
    primaryClass = "cond-mat.stat-mech",
    doi = "10.1103/PhysRevB.101.214205",
    journal = "Phys. Rev. B",
    volume = "101",
    number = "21",
    pages = "214205",
    year = "2020"
}

@article{Yang:2019mft,
    author = "Yang, Zhi-Cheng and Liu, Fangli and Gorshkov, Alexey V. and Iadecola, Thomas",
    title = "{Hilbert-Space Fragmentation from Strict Confinement}",
    eprint = "1912.04300",
    archivePrefix = "arXiv",
    primaryClass = "cond-mat.str-el",
    doi = "10.1103/PhysRevLett.124.207602",
    journal = "Phys. Rev. Lett.",
    volume = "124",
    number = "20",
    pages = "207602",
    year = "2020"
}

@article{Temperley:1971iq,
    author = "Temperley, H. N. V. and Lieb, E. H.",
    title = "{Relations between the `percolation' and `colouring' problem and other graph-theoretical problems associated with regular planar lattices: some exact results for the `percolation' problem}",
    doi = "10.1098/rspa.1971.0067",
    journal = "Proc. Roy. Soc. Lond. A",
    volume = "322",
    pages = "251--280",
    year = "1971"
}

@article{Jones:1983kv,
    author = "Jones, V. F. R.",
    title = "{Index for subfactors}",
    doi = "10.1007/BF01389127",
    journal = "Invent. Math.",
    volume = "72",
    pages = "1--25",
    year = "1983"
}

@article{Bernien:2017ubn,
    author = "Bernien, Hannes and others",
    title = "{Probing many-body dynamics on a 51-atom quantum simulator}",
    eprint = "1707.04344",
    archivePrefix = "arXiv",
    primaryClass = "quant-ph",
    doi = "10.1038/nature24622",
    journal = "Nature",
    volume = "551",
    pages = "579--584",
    year = "2017"
}

@article{Turner:2017fxc,
    author = "Turner, Christopher J. and Michailidis, Alexios A. and Abanin, Dmitry A. and Serbyn, Maksym and Papic, Zlatko",
    title = "{Weak ergodicity breaking from quantum many-body scars}",
    eprint = "1711.03528",
    archivePrefix = "arXiv",
    primaryClass = "quant-ph",
    doi = "10.1038/s41567-018-0137-5",
    journal = "Nature Phys.",
    volume = "14",
    pages = "745--749",
    year = "2018"
}

@article{Nandkishore:2014kca,
    author = "Nandkishore, Rahul and Huse, David A.",
    title = "{Many body localization and thermalization in quantum statistical mechanics}",
    eprint = "1404.0686",
    archivePrefix = "arXiv",
    primaryClass = "cond-mat.stat-mech",
    doi = "10.1146/annurev-conmatphys-031214-014726",
    journal = "Ann. Rev. Condensed Matter Phys.",
    volume = "6",
    pages = "15--38",
    year = "2015"
}

@article{Bohigas:1983er,
    author = "Bohigas, O. and Giannoni, M. J. and Schmit, C.",
    title = "{Characterization of chaotic quantum spectra and universality of level fluctuation laws}",
    doi = "10.1103/PhysRevLett.52.1",
    journal = "Phys. Rev. Lett.",
    volume = "52",
    pages = "1--4",
    year = "1984"
}

@article{Abanin:2018yrt,
    author = "Abanin, Dmitry A. and Altman, Ehud and Bloch, Immanuel and Serbyn, Maksym",
    title = "{Colloquium: Many-body localization, thermalization, and entanglement}",
    eprint = "1804.11065",
    archivePrefix = "arXiv",
    primaryClass = "cond-mat.dis-nn",
    doi = "10.1103/RevModPhys.91.021001",
    journal = "Rev. Mod. Phys.",
    volume = "91",
    number = "2",
    pages = "021001",
    year = "2019"
}

@article{Moudgalya:2022nll,
    author = "Moudgalya, Sanjay and Motrunich, Olexei I.",
    title = "{Exhaustive Characterization of Quantum Many-Body Scars Using Commutant Algebras}",
    eprint = "2209.03377",
    archivePrefix = "arXiv",
    primaryClass = "cond-mat.str-el",
    doi = "10.1103/PhysRevX.14.041069",
    journal = "Phys. Rev. X",
    volume = "14",
    number = "4",
    pages = "041069",
    year = "2024"
}

@article{Turner:2018gwa,
    author = "Turner, Christopher J. and Michailidis, Alexios A. and Abanin, Dmitry A. and Serbyn, Maksym and Papic, Zlatko",
    title = "{Quantum scarred eigenstates in a Rydberg atom chain: Entanglement, breakdown of thermalization, and stability to perturbations}",
    eprint = "1806.10933",
    archivePrefix = "arXiv",
    primaryClass = "cond-mat.quant-gas",
    doi = "10.1103/PhysRevB.98.155134",
    journal = "Phys. Rev. B",
    volume = "98",
    number = "15",
    pages = "155134",
    year = "2018"
}

@article{Berry:1977jf,
    author = "Berry, M. V. and Tabor, M.",
    title = "{Level clustering in the regular spectrum}",
    doi = "10.1098/rspa.1977.0140",
    journal = "Proc. Roy. Soc. Lond. A",
    volume = "356",
    pages = "375--394",
    year = "1977"
}

@book{Wigner1959,
  author    = {Eugene P. Wigner},
  title     = {Group Theory and Its Application to the Quantum Mechanics of Atomic Spectra},
  publisher = {Academic Press},
  address   = {New York},
  year      = {1959}
}

@article{RevModPhys.40.359,
  title = {Magnetic Groups and Their Corepresentations},
  author = {Bradley, C. J. and Davies, B. L.},
  journal = {Rev. Mod. Phys.},
  volume = {40},
  issue = {2},
  pages = {359--379},
  numpages = {0},
  year = {1968},
  month = {Apr},
  publisher = {American Physical Society},
  doi = {10.1103/RevModPhys.40.359},
  url = {https://link.aps.org/doi/10.1103/RevModPhys.40.359}
}

@misc{rumynin2021realrepresentationsc2gradedgroups,
      title={Real Representations of $C_2$-Graded Groups: The Antilinear Theory}, 
      author={Dmitriy Rumynin and James Taylor},
      year={2021},
      eprint={2006.09765},
      archivePrefix={arXiv},
      primaryClass={math.RT},
      doi={https://doi.org/10.1016/j.laa.2020.09.040},
      url={https://arxiv.org/abs/2006.09765}, 
}

@article{Kwan:2023kjp,
    author = "Kwan, Yves H. and Wilhelm, Patrick H. and Biswas, Sounak and Parameswaran, S. A.",
    title = "{Minimal Hubbard Models of Maximal Hilbert Space Fragmentation}",
    eprint = "2304.02669",
    archivePrefix = "arXiv",
    primaryClass = "cond-mat.stat-mech",
    doi = "10.1103/PhysRevLett.134.010411",
    journal = "Phys. Rev. Lett.",
    volume = "134",
    number = "1",
    pages = "010411",
    year = "2025"
}

@article{ganguli2025aspects,
  title={Aspects of {H}ilbert space fragmentation in the quantum {E}ast model: {F}ragmentation, subspace-restricted quantum scars, and effects of density-density interactions},
  author={Ganguli, Maitri and Aditya, Sreemayee and Sen, Diptiman},
  doi="10.1103/PhysRevB.111.045411",
  journal={Physical Review B},
  volume={111},
  number={4},
  pages={045411},
  year={2025},
  publisher={APS}
}

@article{PhysRevB.108.144308,
  title = {Freezing transition in the particle-conserving East model},
  author = {Wang, Cheng and Yang, Zhi-Cheng},
  journal = {Phys. Rev. B},
  volume = {108},
  issue = {14},
  pages = {144308},
  numpages = {7},
  year = {2023},
  month = {Oct},
  publisher = {American Physical Society},
  doi = {10.1103/PhysRevB.108.144308},
  url = {https://link.aps.org/doi/10.1103/PhysRevB.108.144308}
}

@misc{wang2025exponentiallyslowthermalization1d,
      title={Exponentially slow thermalization in 1D fragmented dynamics}, 
      author={Cheng Wang and Shankar Balasubramanian and Yiqiu Han and Ethan Lake and Xiao Chen and Zhi-Cheng Yang},
      year={2025},
      eprint={2501.13930},
      archivePrefix={arXiv},
      primaryClass={quant-ph},
      url={https://arxiv.org/abs/2501.13930}, 
}

@misc{SM,
  note = {See Supplemental Material for details.},
}

@article{moudgalya2020quantum,
  title={Quantum many-body scars in a {L}andau level on a thin torus},
  author={Moudgalya, Sanjay and Bernevig, B Andrei and Regnault, Nicolas},
  journal={Physical Review B},
  doi={10.1103/PhysRevB.102.195150},
  volume={102},
  number={19},
  pages={195150},
  year={2020},
  publisher={APS}
}

@article{ren2021quasisymmetry,
  title={Quasisymmetry {G}roups and {M}any-{B}ody {S}car {D}ynamics},
  author={Ren, Jie and Liang, Chenguang and Fang, Chen},
  doi={10.1103/PhysRevLett.126.120604},
  journal={Physical Review Letters},
  volume={126},
  number={12},
  pages={120604},
  year={2021},
  publisher={APS}
}

@article{shiraishi2017systematic,
  title={Systematic {C}onstruction of {C}ounterexamples to the {E}igenstate {T}hermalization {H}ypothesis},
  author={Shiraishi, Naoto and Mori, Takashi},
  doi={10.1103/PhysRevLett.119.030601},
  journal={Physical Review Letters},
  volume={119},
  number={3},
  pages={030601},
  year={2017},
  publisher={APS}
}

@article{pal2010many,
  title={Many-body localization phase transition},
  author={Pal, Arijeet and Huse, David A},
  doi="10.1103/PhysRevB.82.174411",
  journal={Physical Review B},
  volume={82},
  number={17},
  pages={174411},
  year={2010},
  publisher={APS}
}

@article{vosk2015theory,
  title={Theory of the many-body localization transition in one-dimensional systems},
  author={Vosk, Ronen and Huse, David A and Altman, Ehud},
  doi={10.1103/PhysRevX.5.031032},
  journal={Physical Review X},
  volume={5},
  number={3},
  pages={031032},
  year={2015},
  publisher={APS}
}

@article{nandkishore2015many,
  title={Many-body localization and thermalization in quantum statistical mechanics},
  author={Nandkishore, Rahul and Huse, David A},
  doi={10.1146/annurev-conmatphys-031214-014726},
  journal={Annu. Rev. Condens. Matter Phys.},
  volume={6},
  number={1},
  pages={15--38},
  year={2015},
  publisher={Annual Reviews}
}

@article{Hu_2024,
   title={From the quantum breakdown model to the lattice gauge theory},
   author={Hu, Yu-Min and Lian, Biao},
   doi={10.1007/s43673-024-00128-4},
   journal={AAPPS Bulletin},
   volume={34},
   number={1},
   pages={},
   year={2024},
   publisher={Springer Science and Business Media LLC}
}

@article{Hu_2025,
   title={Bosonic quantum breakdown Hubbard model},
   author={Hu, Yu-Min and Lian, Biao},
   doi={10.1103/1r4m-7psy},
   journal={Physical Review B},
   volume={112},
   number={10},
   pages={},
   year={2025},
   ISSN={2469-9969},
   url={http://dx.doi.org/10.1103/1r4m-7psy},
   publisher={American Physical Society (APS)},
   month=sep
}

@article{Liu2025QBM,
  title = {Two-dimensional quantum breakdown model with Krylov subspace many-body localization},
  author = {Liu, Xinyu and Lian, Biao},
  journal = {Phys. Rev. B},
  volume = {111},
  issue = {5},
  pages = {054302},
  numpages = {14},
  year = {2025},
  month = {Feb},
  publisher = {American Physical Society},
  doi = {10.1103/PhysRevB.111.054302},
  url = {https://link.aps.org/doi/10.1103/PhysRevB.111.054302}
}

@ARTICLE{hu2025glass,
       author = {{Hu}, Yu-Min and {Han}, Zhaoyu and {Lian}, Biao},
        title = "{Quantum Breakdown Condensate as a Disorder-Free Quantum Glass}",
      journal = {arXiv e-prints},
         year = 2025,
        month = dec,
archivePrefix = {arXiv},
       eprint = {2512.21847},
 primaryClass = {cond-mat.str-el},
       adsurl = {https://ui.adsabs.harvard.edu/abs/2025arXiv251221847H}
}

@article{PhysRevX.15.011068,
  title = {Highly Entangled Stationary States from Strong Symmetries},
  author = {Li, Yahui and Pollmann, Frank and Read, Nicholas and Sala, Pablo},
  journal = {Phys. Rev. X},
  volume = {15},
  issue = {1},
  pages = {011068},
  numpages = {37},
  year = {2025},
  month = {Mar},
  publisher = {American Physical Society},
  doi = {10.1103/PhysRevX.15.011068},
  url = {https://link.aps.org/doi/10.1103/PhysRevX.15.011068}
}

@article{PhysRevResearch.5.043239,
  title = {Hilbert space fragmentation in open quantum systems},
  author = {Li, Yahui and Sala, Pablo and Pollmann, Frank},
  journal = {Phys. Rev. Res.},
  volume = {5},
  issue = {4},
  pages = {043239},
  numpages = {17},
  year = {2023},
  month = {Dec},
  publisher = {American Physical Society},
  doi = {10.1103/PhysRevResearch.5.043239},
  url = {https://link.aps.org/doi/10.1103/PhysRevResearch.5.043239}
}

@article{wbzt-scvs,
  title = {Entanglement-cost hierarchies in quantum fragmented mixed states},
  author = {Sahu, Subhayan and Li, Yahui and Sala, Pablo},
  journal = {Phys. Rev. A},
  volume = {113},
  issue = {2},
  pages = {022406},
  numpages = {8},
  year = {2026},
  month = {Feb},
  publisher = {American Physical Society},
  doi = {10.1103/wbzt-scvs},
  url = {https://link.aps.org/doi/10.1103/wbzt-scvs}
}

@article{PhysRevB.109.064302,
  title = {Probing Hilbert space fragmentation and the block inverse participation ratio},
  author = {Frey, Philipp and Mikhail, David and Rachel, Stephan and Hackl, Lucas},
  journal = {Phys. Rev. B},
  volume = {109},
  issue = {6},
  pages = {064302},
  numpages = {13},
  year = {2024},
  month = {Feb},
  publisher = {American Physical Society},
  doi = {10.1103/PhysRevB.109.064302},
  url = {https://link.aps.org/doi/10.1103/PhysRevB.109.064302}
}

@article{PhysRevB.106.L220301,
  title = {Hilbert space fragmentation and interaction-induced localization in the extended Fermi-Hubbard model},
  author = {Frey, Philipp and Hackl, Lucas and Rachel, Stephan},
  journal = {Phys. Rev. B},
  volume = {106},
  issue = {22},
  pages = {L220301},
  numpages = {7},
  year = {2022},
  month = {Dec},
  publisher = {American Physical Society},
  doi = {10.1103/PhysRevB.106.L220301},
  url = {https://link.aps.org/doi/10.1103/PhysRevB.106.L220301}
}

@article{han2026quantumfragmentation,
    author = "Yiqiu Han and Oliver Hart and Alexey Khudorozhkov and Rahul Nandkishore",
    title = "Quantum Fragmentation",
    journal = "arXiv e-prints",
    eprint = "2604.06461",
    archivePrefix = "arXiv",
    primaryClass = "quant-ph",
    year = "2026"
}

@article{motzkin,
    author = "Movassagh, Ramis and Shor, Peter W.",
    title = "{Supercritical entanglement in local systems: Counterexample to the area law for quantum matter}",
    eprint = "1408.1657",
    archivePrefix = "arXiv",
    primaryClass = "quant-ph",
    doi = "10.1073/pnas.1605716113",
    journal = "Proc. Nat. Acad. Sci.",
    volume = "113",
    number = "47",
    pages = "13278",
    year = "2016"
}

@book{revuz2013continuous,
  title={Continuous martingales and Brownian motion},
  author={Revuz, Daniel and Yor, Marc},
  year={2013},
  publisher={Springer Science \& Business Media}
}

@article{gouezel2014local,
  title={Local limit theorem for symmetric random walks in Gromov-hyperbolic groups},
  author={Gou{\"e}zel, S{\'e}bastien},
  journal={Journal of the American Mathematical Society},
  volume={27},
  number={3},
  pages={893--928},
  year={2014}
}

@article{bjorklund2010central,
  title={Central limit theorems for Gromov hyperbolic groups},
  author={Bj{\"o}rklund, Michael},
  journal={Journal of theoretical probability},
  volume={23},
  number={3},
  pages={871--887},
  year={2010},
  publisher={Springer}
}

@article{lalley1993finite,
  title={Finite range random walk on free groups and homogeneous trees},
  author={Lalley, Steven P},
  journal={The Annals of Probability},
  pages={2087--2130},
  year={1993},
  publisher={JSTOR}
}
		

\clearpage
\widetext
\begin{center}
\textbf{\large Supplemental Material for\\\vspace{0.1cm}``Quantum Hilbert Space Fragmentation and Entangled Frozen States''}
\vspace{0.5cm}

Zihan Zhou, Tian-Hua Yang, and Bo-Ting Chen
\vspace{0.3cm}

\textit{Department of Physics, Princeton University, NJ 08544, USA}
\end{center}
\appendix

\tableofcontents

\section{Quantum Hilbert space fragmentation in the quantum breakdown model}
\label{appendix: quantum breakdown model}

The quantum breakdown model is a one-dimensional system that captures particle avalanche dynamics. The model comes in several variants, including fermionic, bosonic, and spin versions, all of which exhibit Hilbert space fragmentation.
In the hardcore bosonic formulation, each site $i$ hosts multiple flavors labeled by $\nu$, with the constraint that each flavor can be occupied by at most one boson. The Hamiltonian for a system of size $L$ with $N$ modes per site is given by
\begin{equation}
    H = 
    \sum_{i=1}^{L-1} \sum_{\nu=1}^{N} J_{i}^{\nu} (b_{i+1,1}^{\dagger} b_{i+1,2}^{\dagger} b_{i,\nu} 
    + \text{h.c.})
,\end{equation}
where $b_{i,\nu}$ and $b_{i,\nu}^{\dagger}$ and are hardcore boson creation and annihilation operators, satisfying the commutation relations
\begin{equation}
b_{i,\nu}b_{j,\nu'}^\dagger-(1-2\delta_{ij}\delta_{\nu\nu'})b_{j,\nu'}^\dagger b_{i,\nu}=\delta_{ij}\delta_{\nu\nu'}
,\end{equation}
and ``h.c.'' denotes the Hermitian conjugate.
Unlike conventional hopping models, the dynamics is kinetically constrained, leading to a fragmentation of the Hilbert space into many dynamically disconnected sectors.

A basis state can be visualized as a $N\times L$ grid of circles, where columns correspond to lattice sites $i=1, \cdots, L$, and rows correspond to the two modes $\nu=1,\cdots, N$. Filled circles ($\bullet$) denote occupied states, while empty circles ($\circ$) denote unoccupied states. For example, $
\boxed{
\begin{smallmatrix}
\bullet & \circ & \circ \\
\circ & \circ & \circ \\
\end{smallmatrix}}
$ represents a configuration of a $(L, N)=(3, 2)$ model in which only the $\nu=1$ mode at site $i=1$ is occupied.
The model can be described by glassy dynamics with local length $r=2$. In the basis $\left\{ \boxed{\begin{smallmatrix}
\bullet & \circ \\
\circ & \circ
\end{smallmatrix}} 
\,,\,
\boxed{\begin{smallmatrix}
\circ & \circ \\
\bullet & \circ
\end{smallmatrix}} 
\,,\,
\boxed{\begin{smallmatrix}
\circ & \bullet \\
\circ & \bullet
\end{smallmatrix}} 
\,,\,
\boxed{\begin{smallmatrix}
\bullet & \circ \\
\bullet & \circ
\end{smallmatrix}} 
\,,\,
\boxed{\begin{smallmatrix}
\circ & \bullet \\
\bullet & \bullet
\end{smallmatrix}} 
\,,\,
\boxed{\begin{smallmatrix}
\bullet & \bullet \\
\circ & \bullet
\end{smallmatrix}} 
\right\}$, the coupling matrix takes the form
\begin{equation}
    g^{w'w} = 
    \begin{pmatrix}
        0 & 0 & J^{1} & 0 & 0 & 0 \\
        0 & 0 & J^{2} & 0 & 0 & 0 \\
        J^{1} & J^{2} & 0 & 0 & 0 & 0 \\
        0 & 0 & 0 & 0 & J^{1} & J^{2} \\
        0 & 0 & 0 & J^{1} & 0 & 0 \\
        0 & 0 & 0 & J^{2} & 0 & 0 \\
    \end{pmatrix}_{w'w}
\end{equation}
where all other basis states at local length $r=2$ have been omitted, since their corresponding matrix elements vanish. The matrix $g^{w'w}$ is rank deficient for any choice of $J_{i}^{\nu}$, implying the existence of entangled frozen states and hence quantum fragmentation.

To illustrate quantum fragmentation more explicitly, we consider the $(L, N)=(3, 2)$ model. The system admits multiple classical Krylov subspaces; one such subspace (see Ref.~\cite{Chen:2026aqj} for a detailed analysis) is given by
\begin{equation}
\begin{aligned}
\mathcal{K}_{\text{cl}} &= 
\text{span}
\left(
    \boxed{\begin{smallmatrix}
    \bullet & \circ & \circ \\
    \bullet & \circ & \circ
    \end{smallmatrix}}
    \;,\;
    \boxed{\begin{smallmatrix}
    \circ & \bullet & \circ \\
    \bullet & \bullet & \circ
    \end{smallmatrix}}
    \;,\;
    \boxed{\begin{smallmatrix}
    \bullet & \bullet & \circ \\
    \circ & \bullet & \circ
    \end{smallmatrix}}
    \;,\;
    \boxed{\begin{smallmatrix}
    \circ & \circ & \bullet \\
    \bullet & \bullet & \bullet
    \end{smallmatrix}}
    \;,\;
    \boxed{\begin{smallmatrix}
    \circ & \bullet & \bullet \\
    \bullet & \circ & \bullet
    \end{smallmatrix}}
    \;,\;
    \boxed{\begin{smallmatrix}
    \bullet & \circ & \bullet \\
    \circ & \bullet & \bullet
    \end{smallmatrix}}
    \;,\;
    \boxed{\begin{smallmatrix}
    \bullet & \bullet & \bullet \\
    \circ & \circ & \bullet
    \end{smallmatrix}}
\right)
\end{aligned}
\end{equation}
This subspace contains three frozen states that are annihilated by the Hamiltonian, which span:
\begin{equation}
\begin{aligned}
\mathcal{K}_{\text{EFS}} =& 
\text{span}
\left(
J^{1}
\boxed{\begin{smallmatrix}
\bullet & \bullet & \bullet \\
\circ & \circ& \bullet
\end{smallmatrix}}
-
J^{2}
\boxed{\begin{smallmatrix}
\bullet & \circ &\bullet \\
\circ & \bullet & \bullet
\end{smallmatrix}}
\right)
\oplus
\text{span}
\left(
J^{1}
\boxed{\begin{smallmatrix}
\circ & \bullet & \bullet \\
\bullet & \circ& \bullet
\end{smallmatrix}}
-
J^{2}
\boxed{\begin{smallmatrix}
\circ & \circ &\bullet \\
\bullet & \bullet & \bullet
\end{smallmatrix}}
\right)
\\
\oplus &
\text{span}
\left(
2J^{1} J^{2}
\boxed{\begin{smallmatrix}
\bullet & \circ & \circ \\
\bullet & \circ & \circ
\end{smallmatrix}}
-
J^{1} J^{2}
\boxed{\begin{smallmatrix}
\circ & \circ & \bullet \\
\bullet & \bullet & \bullet
\end{smallmatrix}}
-
J^{1} J^{1}
\boxed{\begin{smallmatrix}
\circ & \bullet & \bullet \\
\bullet & \circ & \bullet
\end{smallmatrix}}
-
J^{2} J^{1}
\boxed{\begin{smallmatrix}
\bullet & \bullet & \bullet \\
\circ & \circ & \bullet
\end{smallmatrix}}
-
J^{2} J^{2}
\boxed{\begin{smallmatrix}
\bullet & \circ & \bullet \\
\circ & \bullet & \bullet
\end{smallmatrix}}
\right)
\end{aligned}
.\end{equation}
Projecting out $\mathcal{K}_{\text{EFS}}$, one obtains two distinct mobile quantum Krylov subspaces, each of dimension two:
\begin{equation}
\begin{aligned}
\mathcal{K}_q =& 
\text{span} \left(
J^{2} 
\boxed{\begin{smallmatrix}
\circ & \bullet & \circ \\
\bullet & \bullet & \circ
\end{smallmatrix}}
- 
J^{1}
\boxed{\begin{smallmatrix}
\bullet & \bullet & \circ \\
\circ & \bullet & \circ
\end{smallmatrix}}
\,,\,
J^{2} J^{1}
\boxed{\begin{smallmatrix}
\circ & \circ & \bullet \\
\bullet & \bullet & \bullet
\end{smallmatrix}}
+
J^{2} J^{2}
\boxed{\begin{smallmatrix}
\circ & \bullet & \bullet \\
\bullet & \circ & \bullet
\end{smallmatrix}}
-
J^{1} J^{1}
\boxed{\begin{smallmatrix}
\bullet & \circ & \bullet \\
\circ & \bullet & \bullet
\end{smallmatrix}}
-
J^{1} J^{2}
\boxed{\begin{smallmatrix}
\bullet & \bullet & \bullet \\
\circ & \circ & \bullet
\end{smallmatrix}}
\right)
\\
\otimes&
\text{span} \left(
J^{1} 
\boxed{\begin{smallmatrix}
\circ & \bullet & \circ \\
\bullet & \bullet & \circ
\end{smallmatrix}}
+
J^{2}
\boxed{\begin{smallmatrix}
\bullet & \bullet & \circ \\
\circ & \bullet & \circ
\end{smallmatrix}}
\,,\,
J^{1} J^{1}
\boxed{\begin{smallmatrix}
\circ & \circ & \bullet \\
\bullet & \bullet & \bullet
\end{smallmatrix}}
+
J^{1} J^{2}
\boxed{\begin{smallmatrix}
\circ & \bullet & \bullet \\
\bullet & \circ & \bullet
\end{smallmatrix}}
+
J^{2} J^{1}
\boxed{\begin{smallmatrix}
\bullet & \circ & \bullet \\
\circ & \bullet & \bullet
\end{smallmatrix}}
+
J^{2} J^{2}
\boxed{\begin{smallmatrix}
\bullet & \bullet & \bullet \\
\circ & \circ & \bullet
\end{smallmatrix}}
+
(J^{1} J^{1} + J^{2} J^{2})
\boxed{\begin{smallmatrix}
\bullet & \circ & \circ \\
\bullet & \circ & \circ
\end{smallmatrix}}
\right)
\end{aligned}
\end{equation}

\section{Quantum Hilbert space fragmentation in the East model}
\label{appendix: east model}

The East model is another model in which quantum Hilbert space fragmentation is reported~\cite{Brighi:2022kca,ganguli2025aspects,PhysRevB.108.144308}. We show that the mechanism of quantum fragmentation there is different from what is discussed in this Letter, and as a result, the quantum fragmentation is weaker than in the models we proposed.

We consider the range-2 particle-conserving East model, whose local Hamiltonian term can be expressed as
\begin{equation}
h_i = \left[t_1 n_{i-1} + t_2(1-n_{i-1})n_{i-2}\right]\left(c_i^\dagger c_{i+1}+h.c.\right).
\end{equation}
In the language of the semigroup word problem, this model has two equivalence classes,
\begin{align}
R_1:& 110 \sim 101, \\
R_2:& 1010\sim 1001.
\end{align}
Seeing $h_i$ as a whole, $R_1$ can act on the first three sites or the last three sites in its four-site support. Therefore, $h_i$ can be written as
\begin{align}
h_i & = t_1 (|1100\rangle \langle 1010| + h.c.) + t_1 (|1101\rangle \langle 1011| + h.c.)  + t_1 (|0110\rangle \langle 0101| + h.c.) + t_1 (|1110\rangle \langle 1101| + h.c.) \nonumber \\
& + t_2 (|1010\rangle \langle 1001| + h.c.).
\end{align}
This is block-diagonal with respect to the number of $1$'s and the position of the first $1$. In the first block,
\begin{equation}
\left. g^{w^\prime w} \right|_{1110,1101,1011} = \begin{pmatrix}
 & t_1 & \\ t_1 & & t_1 \\ & t_1 &
\end{pmatrix}.
\end{equation}
This leads to the entangled frozen state
\begin{equation}
|f_1\rangle = |1110\rangle - |1011\rangle.
\end{equation}
In the second block,
\begin{equation}
\left. g^{w^\prime w} \right|_{1100,1010,1001} = \begin{pmatrix}
 & t_1 & \\ t_1 & & t_2 \\ & t_2 &
\end{pmatrix}.
\end{equation}
This leads to the entangled frozen state
\begin{equation}
|f_2\rangle = t_2 |1100\rangle - t_1 | 1001\rangle.
\end{equation}
In the third block,
\begin{equation}
\left. g^{w^\prime w} \right|_{0110,0101} = \begin{pmatrix}
 & t_1 \\ t_1 & 
\end{pmatrix}.
\end{equation}
This gives no entangled frozen states.

The problem remains whether these entangled frozen state can properly ``concatenate'' on a system with $L>4$. In fact, we show this is not possible for $|f_1\rangle$. We have
\begin{align}
h_{i+1} |f_1\rangle \otimes |0\rangle & = -t_1 |10101\rangle, \\
h_{i+1} |f_1\rangle \otimes |1\rangle & = t_1 |11110\rangle.
\end{align}
Therefore, we cannot make the state frozen on the next four sites. In fact, for $|f_1\rangle \otimes |0\rangle$, we get $|11100\rangle$ and $|10110\rangle$. Since $0110$ lies in a sector with no entangled frozen state, no addition can make $|f_1\rangle\otimes |0\rangle$ frozen. Similarly, in $|f_1\rangle \otimes |1\rangle$, we get $|11101\rangle$, where $1101$ is a pattern that does not appear in any frozen state.

By contrast, $|f_2\rangle$ can be extended to larger systems. To begin with, one can always pad $|f_2\rangle$ with zeros to the right. More generally, one can always form states like
\begin{equation}
|f_2\rangle \otimes |0\rangle^{m_1} \otimes |f_2\rangle \otimes |0\rangle^{m_2}\otimes |f_2\rangle \dots,
\end{equation}
where $m_i\geq 2$. These are guaranteed to be entangled frozen states, albeit the entanglement is purely localized.

More generally, we can construct an entangled frozen state $|\text{EFS}\rangle = \sum_b c_b |b\rangle$, where $b$ are bit-strings, as follows:
\begin{itemize}
    \item Whenever $b$ contains a pattern $1100$ at four consecutive bits, there must be a $b^\prime$ equal to $b$ but with $1100$ replaced by $1001$, and that $t_1 c_b + t_2 c_{b^\prime}=0$.
    \item Any $b$ that contains a pattern $101$ must have $c_b=0$.
\end{itemize}
Note that the second conditions guarantees that the only update rules that can happen is $1100\to 1010$ and $1001\to 1010$, which combined with the first condition ensures that $|\text{EFS}\rangle$ is frozen.

We notice that the constraint $t_1 c_b + t_2 c_{b^\prime}=0$ can always be satisfied given a set of basis states connected by the $1100\leftrightarrow 1001$ rule. In fact, we can define $\mu$ as the ``dipole moment'' of $1$, defined as the sum of the positions of $1$ in the string. Then obviously, $\mu_{b^\prime}=\mu_{b}+2$. Therefore, we can assign the coefficients such that $c_b \propto (-t_1/t_2)^{\mu_b/2}$.

\section{Sub-volume Law of entangled frozen states in GHZ and Asymmetric Projector Model}
\label{app:efs-entanglement}

In this appendix, we demonstrate that the EFS constructed in the GHZ projector and the asymmetric model exhibit an interesting feature: their bipartite entanglement entropy is related to combinatoric properties of the Krylov subspace dimensions. In particular, the EFS in the largest Krylov subspaces will scale as $\sqrt{L}$, in stark contrast to the expectation that Krylov subspaces in HSF systems should be weakly-entangled. The mechanism bears similarity to the power-law entangled frustration-free ground state of the colored Motzkin spin chain~\cite{motzkin}.

\subsection{Setup: the EFS ansatz}
\label{subsec:efs-setup}

In the GHZ projector model, each classical mobile Krylov sector contains $k \geq 1$ mobile triplets and an $|\text{N3C}\rangle$ frozen tail of length $L - 3k$. The entangled frozen state (EFS) in such a sector is the signed uniform superposition
\begin{equation}
\label{eq:EFS-general}
|\mathrm{EFS}\rangle = \frac{1}{\sqrt{D(L)}} \sum_{w} (-1)^{N_1(w)} |w\rangle,
\end{equation}
where the sum runs over all length-$L$ strings $w$ in the sector, $D(L)$ is the classical Krylov dimension, and $N_1(w)$ counts the $1$'s in $w$. The sign $(-1)^{N_1}$ flips under any local $000 \leftrightarrow 111$ update, so each pair of related strings contributes $(|000\rangle - |111\rangle) \otimes |\text{rest}\rangle \propto |\text{GHZ}^-\rangle$ on the three-site window. Since $|\text{GHZ}^-\rangle \perp |\text{GHZ}\rangle$, every local projector annihilates $|\text{EFS}\rangle$.

\subsection{The group $\mathbb{Z}_3 * \mathbb{Z}_3$ and its Cayley graph}
\label{subsec:z3star}

The semigroup $\langle 0, 1 \mid 000 = 111 \rangle$ lifts to a group by declaring both triplets $000$ and $111$ to equal the identity $e$. The generator $0$ then has order three ($0 \cdot 0 \cdot 0 = e$) and generates a copy of $\mathbb{Z}_3$, likewise $1$ generates another $\mathbb{Z}_3$. Since there is no relation mixing the two generators, the resulting group is the free product
\begin{equation}
G = \mathbb{Z}_3 * \mathbb{Z}_3.
\end{equation}
Because of the relation $000 = 111 = e$, each element of $G$ can be
brought into a unique reduced form by repeatedly deleting any $000$
or $111$ substring from the word until no further deletion is
possible. The resulting string contains no three consecutive
identical characters, and therefore consists of alternating blocks of
$0$'s and $1$'s with each block of length $1$ or $2$. These reduced
words coincide exactly with the N3C frozen strings, establishing a
bijection between $G$ and the set of N3C frozen strings. For instance,
\begin{equation}
  0011100 \;\xrightarrow{\,111 \to e\,}\; 0000
  \;\xrightarrow{\,000 \to e\,}\; 0,
\end{equation}
so the reduced form of $0011100$ is $0 \in G$.

The
classical Krylov sector containing $w$ is therefore the entire
preimage $f^{-1}(f(w)) \cap \{0,1\}^L$. This sets up a bijection
between classical Krylov sectors of the length-$L$ chain and N3C
strings of length $L-3k$ ($k\geq 0$). The all-mobile sector corresponds to the identity $e \in G$
and exists when $3 \mid L$. The \textbf{Cayley graph} of $G$ is a graph where nodes are all the group elements of $G$, and directed edges correspond to the application of the two generating elements, $0$ and $1$. More specifically, an edge exists pointing from $g \in G$ to $h\in G$ if $h$ can be obtained by appending a generator to $g$, i.e., $h$ is equal to $g0$ or $g1$. The edges are colored corresponding to the generator applied. Since both $0$ and $1$ are generators of $\mathbb Z_3$, the Cayley graph would locally consist of triangles, as applying the same generator thrice returns to the original group element. The full graph can be spanned by repeated attaching such triangles, as demonstrated in Fig.~\ref{fig:cayley}. The identity element $e$ has two triangles attached to it, corresponding to two copies of the $\mathbb Z_3$. Each existing triangle has two triangles of the opposite color attached to it. If each triangle is abstracted to the point, the Cayley graph would have the structure of a binary tree.

The Cayley graph provides a visualization of words in the group: a word of length $L$ is exactly a path on the Cayley graph starting from $e$ and walking $L$ steps along the directed edges. The corresponding group element would be the node on which the walk ends. The depth of a word is defined as the minimum number of steps it would take to reach the corresponding node from the origin $e$; equivalently, it is the length of the N3C string.

\begin{figure}[t]
\centering
\resizebox{0.6\linewidth}{!}{%
\begin{tikzpicture}[
  vertex/.style={fill=black, circle, inner sep=0pt, minimum size=3pt},
  identity/.style={fill=white, draw=black, very thick, circle, inner sep=0pt, minimum size=6pt},
  edgeZ0/.style={draw=blue!65!black, thick, -Stealth},
  edgeZ1/.style={draw=red!65!black, thick, -Stealth},
]

\begin{scope}

\node[identity] (e) at (0,0) {};
\node[font=\normalsize, anchor=north] at (0, -0.15) {$e$};

\node[vertex] (r0) at (1.7, 0.7) {};
\node[vertex] (r00) at (1.7, -0.7) {};
\node[font=\normalsize, anchor=south] at (1.8, 0.8) {$0$};
\node[font=\normalsize, anchor=north] at (1.8, -0.9) {$00$};
\draw[edgeZ0] (e) -- (r0);
\draw[edgeZ0] (r00) -- (e);
\draw[edgeZ0] (r0) -- (r00);

\node[vertex] (l1) at (-1.7, 0.7) {};
\node[vertex] (l11) at (-1.7, -0.7) {};
\node[font=\normalsize, anchor=south] at (-1.8, 0.8) {$1$};
\node[font=\normalsize, anchor=north] at (-1.8, -0.9) {$11$};
\draw[edgeZ1] (e) -- (l1);
\draw[edgeZ1] (l11) -- (e);
\draw[edgeZ1] (l1) -- (l11);

\node[vertex] (r01) at (3.4, 1.6) {};
\node[vertex] (r011) at (3.4, 0.2) {};
\node[font=\normalsize, anchor=west] at (3.5, 1.6) {$01$};
\node[font=\normalsize, anchor=west] at (3.5, 0.2) {$011$};
\draw[edgeZ1] (r0) -- (r01);
\draw[edgeZ1] (r011) -- (r0);
\draw[edgeZ1] (r01) -- (r011);

\node[vertex] (r001) at (3.4, -0.2) {};
\node[vertex] (r0011) at (3.4, -1.6) {};
\node[font=\normalsize, anchor=west] at (3.5, -0.2) {$001$};
\node[font=\normalsize, anchor=west] at (3.5, -1.6) {$0011$};
\draw[edgeZ1] (r00) -- (r001);
\draw[edgeZ1] (r0011) -- (r00);
\draw[edgeZ1] (r001) -- (r0011);

\node[vertex] (l10) at (-3.4, 1.6) {};
\node[vertex] (l100) at (-3.4, 0.2) {};
\node[font=\normalsize, anchor=east] at (-3.5, 1.6) {$10$};
\node[font=\normalsize, anchor=east] at (-3.5, 0.2) {$100$};
\draw[edgeZ0] (l1) -- (l10);
\draw[edgeZ0] (l100) -- (l1);
\draw[edgeZ0] (l10) -- (l100);

\node[vertex] (l110) at (-3.4, -0.2) {};
\node[vertex] (l1100) at (-3.4, -1.6) {};
\node[font=\normalsize, anchor=east] at (-3.5, -0.2) {$110$};
\node[font=\normalsize, anchor=east] at (-3.5, -1.6) {$1100$};
\draw[edgeZ0] (l11) -- (l110);
\draw[edgeZ0] (l1100) -- (l11);
\draw[edgeZ0] (l110) -- (l1100);

\node[font=\normalsize, text=gray!60] at (4.5, 0.5) {$\cdots$};
\node[font=\normalsize, text=gray!60] at (4.5, -1.0) {$\cdots$};
\node[font=\normalsize, text=gray!60] at (-4.5, 0.5) {$\cdots$};
\node[font=\normalsize, text=gray!60] at (-4.5, -1.0) {$\cdots$};

\node[font=\large, anchor=west] at (-4.7, -2.6) {\textcolor{blue!65!black}{blue edges}: $\mathbb{Z}_3^{(0)}$\quad \textcolor{red!65!black}{red edges}: $\mathbb{Z}_3^{(1)}$};
\end{scope}

\end{tikzpicture}%
}
\caption{Cayley graph of $G = \mathbb{Z}_3 * \mathbb{Z}_3$ with generators $\{0, 1\}$: blue triangles are the $\mathbb{Z}_3^{(0)} = \langle 0 \mid 0^3 = e\rangle$ cosets and red triangles the $\mathbb{Z}_3^{(1)}$ cosets. Each non-identity vertex attaches a triangle of the opposite color. Collapsing each triangle to a point gives a binary tree.}
\label{fig:cayley}
\end{figure}
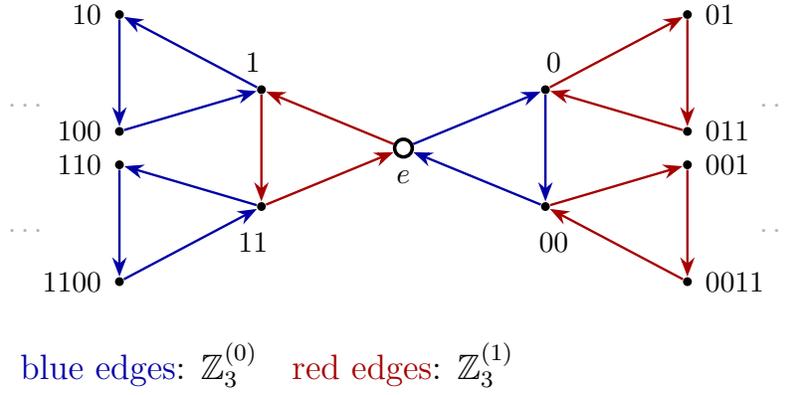

\subsection{Schmidt decomposition of EFS}
\label{subsec:schmidt-general}

We bisect the chain at any position $L_A \in \{1,\ldots,L-1\}$, writing
$w = w_A w_B$ with $w_A \in \{0,1\}^{L_A}$ and $w_B \in \{0,1\}^{L_B}$
where $L_B = L - L_A$. The EFS then takes the form
\begin{equation}
\label{eq:EFS-bipart}
  |\mathrm{EFS}\rangle 
  = \frac{1}{\sqrt{D(L)}}
    \sum_{(w_A, w_B)} (-1)^{N_1(w_A) + N_1(w_B)}\,
    |w_A\rangle \otimes |w_B\rangle,
\end{equation}
the sum restricted by the condition $f(w_A w_B) = c_f$ with $c_f$ the reduced N3C frozen strings labeling the classical Krylov subspace. Since
$f$ is a homomorphism, the constraint factorizes:
\begin{equation}
\label{eq:bipart-constraint}
  f(w_A) \cdot f(w_B) = c_f,
  \qquad \Leftrightarrow \qquad
  f(w_B) = f(w_A)^{-1} c_f.
\end{equation}
For each prefix reduction $f(w_A)$, the suffix reduction $f(w_B)$ is uniquely determined. We will shortly see that this single algebraic fact will lead to
an exact Schmidt decomposition. We illustrate the procedure of calculating $f(w_A)$ and $f(w_B)$ in Fig.~\ref{fig:efs-boundary}
\begin{figure}[h]
\centering
\begin{tikzpicture}[
  x=0.62cm, y=0.62cm, >=Stealth,
  site0/.style={draw, thick, minimum size=0.58cm, inner sep=0pt, font=\small, fill=blue!12},
  site1/.style={draw, thick, minimum size=0.58cm, inner sep=0pt, font=\small, fill=red!12},
  masked/.style={draw, thick, minimum size=0.58cm, inner sep=0pt, font=\small,
                 fill=gray!15, text=gray!50, draw=gray!40},
  brace/.style={decorate, decoration={brace, amplitude=4pt, raise=4pt}},
  cut/.style={ultra thick, orange ,dashed}
]
\node[anchor=east, font=\small] at (-0.5, 0) {$w =$};
\node[masked] (s0) at (0,0) {0};
\node[masked] (s1) at (1,0) {0};
\node[masked] (s2) at (2,0) {0};
\node[site0]  (s3) at (3,0) {0};
\node[site0]  (s4) at (4,0) {0};
\node[site1]  (s5) at (5,0) {1};
\node[site1]  (s6) at (6,0) {1};
\node[site1]  (s7) at (7,0) {1};
\node[site0]  (s8) at (8,0) {0};
\node[masked] (s9) at (9,0) {0};
\node[masked] (s10) at (10,0) {0};
\node[masked] (s11) at (11,0) {0};
\draw[gray!50, thick] (s0.south west) -- (s2.north east);
\draw[gray!50, thick] (s9.south west) -- (s11.north east);
\draw[cut] (5.5, -0.7) -- (5.5, 1.0);
\node[red!70!black, font=\scriptsize, anchor=south] at (5.5, 1.0) {cut at $L_A=6$};
\draw[brace] (s0.north west) -- node[above=8pt, font=\small\itshape] {$w_A$} (s5.north east);
\draw[brace] (s6.north west) -- node[above=8pt, font=\small\itshape] {$w_B$} (s11.north east);
\draw[->, thick, gray!60] (4, -0.55) -- node[right=2pt, font=\scriptsize, gray!60!black] {$f$} (4, -1.55);
\draw[->, thick, gray!60] (7, -0.55) -- node[right=2pt, font=\scriptsize, gray!60!black] {$f$} (7, -1.55);
\node[site0, ultra thick] at (3, -2.0) {0};
\node[site0, ultra thick] at (4, -2.0) {0};
\node[site1, ultra thick] at (5, -2.0) {1};
\node[site1, ultra thick] at (6, -2.0) {1};
\node[site1, ultra thick] at (7, -2.0) {1};
\node[site0, ultra thick] at (8, -2.0) {0};
\node[font=\small, anchor=east] at (3-0.4, -2.0) {};
\node[font=\small, anchor=west] at (5+0.4, -2.0) {};
\node[font=\small, anchor=east] at (6-0.4, -2.0) {};
\node[font=\small, anchor=west] at (8+0.4, -2.0) {};
\draw[cut] (5.5, -2.88) -- (5.5, -1.1);
\node[font=\small] at (5.5, -3.3)
  {$f(w_A)  f(w_B)=e$ };
\end{tikzpicture}
\caption{Bipartition of $w = 000001\,110000$ at the cut $L_A = 6$
in the all-mobile sector ($c_f = e$). After removing the characters $000$ by the reduction $f$, the remaining
sites give the boundary labels $f(w_A) = 001$ and $f(w_B) = 110$. }
\label{fig:efs-boundary}
\end{figure}
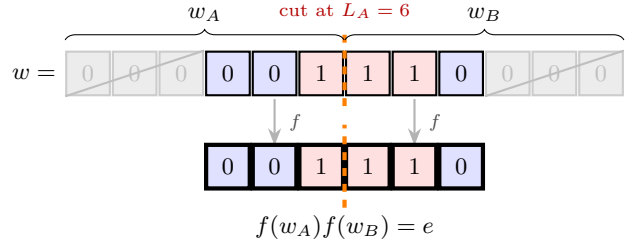

The group inverse has an explicit combinatorial description. Each
reduced word $g \in G$ has a unique block decomposition
\begin{equation}
  g = c_1^{k_1} c_2^{k_2} \cdots c_d^{k_d},
  \qquad c_i \in \{0,1\},\ c_i \neq c_{i+1},\ k_i \in \{1,2\},
\end{equation}
with $d$ called the depth of $g$. Reversing the block order and
swapping each length $k \leftrightarrow 3-k$ yields the inverse:
\begin{equation}
\label{eq:closure}
  g^{-1} = c_d^{3-k_d} c_{d-1}^{3-k_{d-1}} \cdots c_1^{3-k_1}.
\end{equation}
Since $3 - k_i \in \{1,2\}$ whenever $k_i \in \{1,2\}$, the right-hand
side is itself a reduced word, so the inverse of a frozen string is
again a frozen string. For instance, $(011)^{-1} = (0^1 1^2)^{-1} = 
1^{1} 0^{2} = 100$, and one checks directly that $011 \cdot 100 = 
011100 \to 0111\,00 \to 0\,00 \to e$. 

Let $c_A = f(w_A)$ act as the boundary label across the cut. It records the group element of the left half of the chain. Grouping terms in Eq.~\eqref{eq:EFS-bipart} by $c_A$ and using Eq.~\eqref{eq:closure} to identify the constraint on $w_B$,
\begin{equation}
\label{eq:EFS-Schmidt}
|\mathrm{EFS}\rangle = \frac{1}{\sqrt{D(L)}} \sum_{c_A} \left(\sum_{\substack{|w_A|=L_A\\ f(w_A)=c_A}} (-1)^{N_1(w_A)} |w_A\rangle\right) \otimes \left(\sum_{\substack{|w_B|=L_B\\ f(w_B)=c_A^{-1} c_f}} (-1)^{N_1(w_B)} |w_B\rangle\right).
\end{equation}
Each word has a unique N3C reduction, so the sets
$\{w_A \in \{0,1\}^{L_A} : f(w_A) = c_A\}$ partition $\{0,1\}^{L_A}$
as $c_A$ ranges over $G$. The left vectors corresponding to
different $c_A$ are therefore supported on disjoint basis subsets
and are mutually orthogonal. Similar arguments can be used to show the orthogonal of the right vectors.  Equation~\eqref{eq:EFS-Schmidt} is therefore an exact
Schmidt decomposition.

Normalizing the left and right factors, the Schmidt weights are
\begin{equation}
\label{eq:Schmidt-weights}
p_{c_A} = \frac{D_{c_A}(L_A)\, D_{c_{A}^{-1} c_f}(L_B)}{D_{c_f}(L)},
\end{equation}
where $D_c(\ell)$ is the number of length-$\ell$ words whose reduction equals $c$. The bipartite entanglement entropy is exactly the Shannon entropy $H(p)$ of the distribution $\{p_{c_A}\}$:
\begin{equation}
\label{eq:S-is-H}
S(L_A) = H(p) = -\sum_{c_A} p_{c_A} \log p_{c_A}.
\end{equation}

As a concrete example, we consider $L = 9$ with the cut $L_A = 4$,
$L_B = 5$. The all-mobile sector has $D(9) = 38$ states. There are
ten possible boundary labels $c_A = f(w_A) \in G$, i.e. ten distinct
N3C reductions of length-$4$ prefixes that arise from sector
elements. For each label $c_A$, Eq.~\eqref{eq:closure} fixes the suffix reduction
$f(w_B) = c_A^{-1}e = c_A^{-1}$, and the Schmidt weight is
$p_{c_A} = D_{c_A}(L_A)\, D_{c_A^{-1}}(L_B) / D(L)$:

\begin{center}
\begin{tabular}{cccccc}
\toprule
$c_A = f(w_A)$ & $c_A^{-1}$ & $D_{c_A}(4)$ & $D_{c_A^{-1}}(5)$ & $D_{c_A}\, D_{c_A^{-1}}$ & $p_{c_A}$ \\
\midrule
$0$    & $00$    & 3 & 4 & 12 & $12/38$ \\
$1$    & $11$    & 3 & 4 & 12 & $12/38$ \\
$0011$ & $10$    & 1 & 4 &  4 & $\phantom{0}4/38$ \\
$1100$ & $01$    & 1 & 4 &  4 & $\phantom{0}4/38$ \\
$0010$ & $00110$ & 1 & 1 &  1 & $\phantom{0}1/38$ \\
$0100$ & $01100$ & 1 & 1 &  1 & $\phantom{0}1/38$ \\
$0110$ & $00100$ & 1 & 1 &  1 & $\phantom{0}1/38$ \\
$1001$ & $11011$ & 1 & 1 &  1 & $\phantom{0}1/38$ \\
$1011$ & $10011$ & 1 & 1 &  1 & $\phantom{0}1/38$ \\
$1101$ & $11001$ & 1 & 1 &  1 & $\phantom{0}1/38$ \\
\bottomrule
\end{tabular}
\end{center}
The probabilities sum to $1$
as required, and the Shannon entropy is
\begin{equation}
  S \;=\; -\sum_c p_c \log_2 p_c
  \;\approx\; 2.563 ,
\end{equation}
in agreement with direct singular-value decompositiosn of the EFS
amplitude matrix. The explicit Schmidt decomposition is given by
\begin{align}
\sqrt{D(9)}\,\ket{\mathrm{EFS}}
  \;=\;
  &\bigl(\ket{0000} - \ket{0111} - \ket{1110}\bigr)
   \otimes
   \bigl(\ket{00000} - \ket{00111} - \ket{01110} - \ket{11100}\bigr)
   \nonumber\\
  +\;
  &\bigl(-\ket{0001} - \ket{1000} + \ket{1111}\bigr)
   \otimes
   \bigl(\ket{00011} + \ket{10001} + \ket{11000} - \ket{11111}\bigr)
   \nonumber\\
  +\;
  &\ket{0011}
   \otimes
   \bigl(-\ket{00010} - \ket{10000} + \ket{10111} + \ket{11110}\bigr)
   \nonumber\\
  +\;
  &\ket{1100}
   \otimes
   \bigl(-\ket{00001} - \ket{01000} + \ket{01111} + \ket{11101}\bigr)
   \nonumber\\
  +\;
  &\ket{0010}\otimes \bigl(-\ket{00110}\bigr)
   \;+\;
   \ket{0100}\otimes\bigl(-\ket{01100}\bigr)
   \;+\;
   \ket{0110}\otimes\bigl(-\ket{00100}\bigr)
   \nonumber\\
  +\;
  &\ket{1001}\otimes\ket{11011}
   \;+\;
   \ket{1011}\otimes\bigl(-\ket{10011}\bigr)
   \;+\;
   \ket{1101}\otimes\bigl(-\ket{11001}\bigr).
\end{align}

\subsection{The $\sqrt{L}$ scaling}
\label{subsec:sqrtL}

The identity $S(L_A) = H(p)$ reduces the entanglement calculation to
a purely combinatorial question: the distribution of
$p_{c_A} = D_{c_A}(L_A)\,D_{c_A^{-1} c_f}(L_B)/D_{c_f}(L)$ over group elements
$c_A \in G$. We will show that, for the all-mobile sector $c_f = e$
and the bipartition $L_A = L/2$, this distribution gives a sub-volume
entanglement scaling $S(L_A) \sim \sqrt{L}$.

The count $D_c(\ell)$ is the dimension of the classical Krylov subspace. For the triplet-flip model, Krylov dimension only depends on $\ell$ and the number of triplets. Equivalently speaking, the count $D_c(\ell)$ depends only on the depth of $c$ in the Cayley graph, not on the specific element. Writing $|c| = \ell - 3k$, so that $k$ is the number of mobile triplets in a length-$\ell$ string reducing to $c$, one has
\begin{equation}
\label{eq:Dk-closed}
D_k(\ell) \equiv D_c(\ell)\big|_{|c| = \ell - 3k} = \binom{\ell}{k} - \sum_{j=0}^{k-1}\binom{\ell}{j},
\end{equation}
which will be derived in Appendix~\ref{sec:D-cl-tripflip}. Substituting Eq.~\eqref{eq:Dk-closed} into the Shannon entropy Eq.~\eqref{eq:S-is-H} and applying Stirling's formula yields $S \sim \sqrt{L}$ by a saddle-point analysis. The detailed calculation will be carried out in Appendix~\ref{eq:subsec-efs-entanglement}.

It is illuminating to recognize that the Schmidt weight $p_{c_A}$ has a direct probabilistic interpretation. The numerator $D_{c_A}(L_A) D_{c_A^{-1} c_f}(L_B)$ counts pairs $(w_A, w_B)$ such that $f(w_A) = c_A$ and $f(w_B) = c_A^{-1} c_f$, i.e., concatenated strings $w = w_A w_B$ that pass through the group element $c_A$ at time $t = L_A$ on their way from $e$ at $t=0$ to $c_f$ at $t=L$. The denominator $D_{c_f}(L)$ counts \emph{all} such length-$L$ strings ending at $c_f$. Therefore
\begin{equation}
\label{eq:p-as-bridge}
p_{c_A} = \frac{\#\{w \in \{0,1\}^L : f(w) = c_f,\; f(w_{[1:L_A]}) = c_A\}}{\#\{w \in \{0,1\}^L : f(w) = c_f\}} = \Pr\!\left( X_{L_A} = c_A \,\Big|\, X_0 = e,\; X_L = c_f \right),
\end{equation}
where $X_t \equiv f(w_{[1:t]})$ is the position of the walker on the Cayley graph after $t$ steps. The Schmidt weights are therefore exactly the time-$L_A$ marginal of a uniform random bridge~\cite{revuz2013continuous} on the Cayley graph of $G = \mathbb{Z}_3 * \mathbb{Z}_3$, conditioned to start at $e$ and end at $c_f$ after $L$ steps. The entanglement entropy
\begin{equation}
S(L_A) = H\bigl(\Pr(X_{L_A} = \cdot \mid \text{bridge from $e$ to $c_f$})\bigr)
\end{equation}
is therefore the Shannon entropy of the bridge-walk position at time $L_A$. This identification is the conceptual bridge between the combinatorial closed form above and the random-walk argument that follows.

We now extract a quantitative relation between the entropy $S(L_A)$ and the average distance $\langle |c_A| \rangle$ travelled by the bridge walk. The Cayley graph of $G = \mathbb{Z}_3 * \mathbb{Z}_3$ is tree-like as shown in Fig.~\ref{fig:cayley}, and we organize each vertex $c_A$ by its position on this tree using two coordinates:
\begin{itemize}
\item The radial coordinate $r \equiv |c_A|$ is the graph distance from the identity $e$ to $c_A$, equivalently the length of the reduced-word representation of $c_A$.
\item The angular coordinate $\theta(c_A)$ distinguishes between the $N_r$ different vertices that all sit at the same depth $r$. For instance, at depth $r=3$ there are $N_3 = 6$ reduced words $\{001, 010, 011, 100, 101, 110\}$: they are all the same radial distance from $e$, but $\theta$ specifies which of these six length-$3$ words we have. In general, since a reduced word $c_A = c_1^{k_1} c_2^{k_2} \cdots c_d^{k_d}$ is uniquely determined by its block structure, we can take $\theta$ to be the sequence of pairs $(c_1, k_1), \ldots, (c_d, k_d)$.
\end{itemize}
Since $r = |c_A|$ is a deterministic function of $c_A$, the Shannon entropy of the bridge distribution $p_{c_A}$ obeys the standard chain rule $H(c_A) = H(|c_A|) + H(c_A \mid |c_A|)$:
\begin{equation}
\label{eq:S-decomp}
S(L_A) = H_{\rm radial} + H_{\rm angular},
\end{equation}
where the radial and angular pieces are
\begin{align}
H_{\rm radial} &\equiv H(|c_A|) = -\sum_{r \geq 0} P_r \log P_r,\\
H_{\rm angular} &\equiv \sum_{r \geq 0} P_r\, H\bigl(c_A \mid |c_A| = r\bigr),
\end{align}
with $P_r \equiv \Pr(|c_A| = r)$ the marginal of the radial distance.

We first discuss the radial piece. Numerically, the bridge-walk depth $|X_t|$ at $L=24$ has midpoint mean $\langle |X_{L/2}|\rangle \approx 0.74 \sqrt{L}$ with fluctuations of the same order as shown in Fig.~\ref{fig:bridge-walks}. Intuitively, the radial process $|X_t| \equiv |f(w_{[1:t]})|$ is a one-dimensional random walk on $\mathbb{Z}_{\geq 0}$ with bounded increments, i.e. each character appended changes $|X_t|$ by $+1$ or $-2$, and the strings in our sector correspond to paths conditioned on $X_0 = 0$ and $X_L = |c_f|$. This is precisely a one-dimensional bridge walk, whose mid-walk profile follows the classical Brownian-bridge envelope. The rigorous version of this statement \cite{lalley1993finite,bjorklund2010central,gouezel2014local} yields
\begin{equation}
\label{eq:bridge-CLT}
\langle |c_A| \rangle \sim \sigma \sqrt{\tfrac{2}{\pi}\,\tfrac{L_A(L-L_A)}{L}}.
\end{equation}
An independent rigorous derivation of this mean via saddle-point analysis on the closed-form $D_k(\ell)$ in Eq.~\eqref{eq:Dk-closed} is given in Appendix~\ref{eq:subsec-efs-entanglement}. Hence $P_r$ is supported on only $O(\sqrt{L})$ consecutive values of $r$, and so $H_{\rm radial} \leq \log(\#\text{support}) = O(\log L)$.

We now discuss the angular piece. Conditional on $|c_A| = r$, the bridge measure on the Cayley graph is rotationally homogeneous, so all $N_r$ elements at depth $r$ are equally weighted. The count $N_r$ is the number of length-$r$ N3C strings, which from the main text Eq.~\eqref{eq:dim-frozen} equals $2F_{r+1}$ and grows exponentially as $\phi^r$ with $\phi = (1+\sqrt{5})/2$. Therefore
\begin{equation}
H(c_A \mid |c_A| = r) \approx \log N_r \sim r \log\phi,
\qquad
H_{\rm angular} \approx \langle |c_A| \rangle \log\phi.
\end{equation}

Substituting into Eq.~\eqref{eq:S-decomp},
\begin{equation}
\label{eq:S-vs-cA}
S(L_A) \approx \langle |c_A| \rangle \log\phi + O(\log L)
\end{equation}
to leading order in $L$. The bipartite entanglement entropy is, up to a logarithmic correction, proportional to the average reduced-word length $\langle |c_A| \rangle$. Combining with Eq.~\eqref{eq:bridge-CLT} gives the announced $\sqrt{L}$ scaling.

\begin{figure}[h]
\centering
\includegraphics[width=0.65\linewidth]{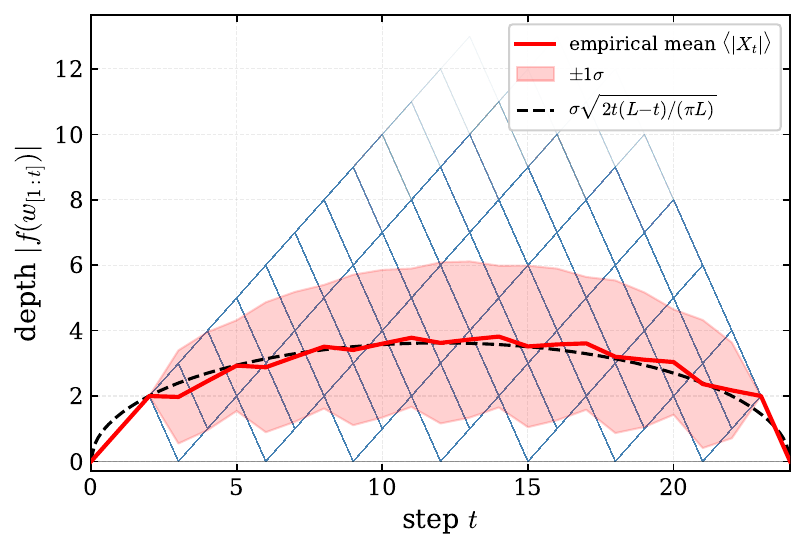}
\caption{Numerical verification of Eq.~\eqref{eq:bridge-CLT}. Light blue curves: $5000$ uniform bridge walks in the all-mobile sector at $L = 24$, drawn from the $199316$ length-$24$ strings reducing to $e$. Red curve: empirical mean depth $\langle |X_t| \rangle$ over the $5000$ samples. Pink band: $\pm 1\sigma$ envelope. Dashed black: Brownian-bridge profile $\sigma\sqrt{2t(L-t)/(\pi L)}$ with the empirically fitted $\sigma \approx 1.85$.}
\label{fig:bridge-walks}
\end{figure}

\subsection{Extensions}
\label{subsec:efs-general}

For the asymmetric projector with coupling ratio $\gamma = a/b$, the EFS ansatz generalizes by weighting each string by $\gamma^{N_1(w)/3}$:
\begin{equation}
|\mathrm{EFS}^\gamma\rangle \propto \sum_w (-\gamma^{1/3})^{N_1(w)} |w\rangle.
\end{equation}
The Schmidt decomposition takes the same form as Eq.~\eqref{eq:EFS-Schmidt} with the signs replaced by $(-\gamma^{1/3})^{N_1}$. Defining the weighted sector size
\begin{equation}
D_c^\gamma(\ell) = \sum_{\substack{|w|=\ell\\ f(w) = c}} \gamma^{(2/3) N_1(w)},
\end{equation}
the Schmidt weights become $p_c = D_c^\gamma(L_A) D_{c^{-1} c_f}^\gamma(L_B) / D_{c_f}^\gamma(L)$. The Cayley-graph bridge-walk structure is unchanged with only the step weights on the walk are modified. So the $\sqrt{L}$ scaling persists.

\section{Classical Krylov structure of the triplet flip model}
\label{app: proof of dimension}

\subsection{Krylov structure}

The triplet-flip model is a generalization of the pair-flip model. It is generated by local Hamiltonian terms of the form $|aaa\rangle \langle bbb|$. Note that in the literature, the pair-flip model is usually accompanied by on-site potential terms $n_i^a$. The presence of such terms lifts the degeneracy of the computational basis, thus kills any quantum fragmentation. However, as far as classical fragmentation is concerned, the presence of the on-site potentials are irrelevant. We will consider the model without on-site potential terms.

We consider the triplet-flip model on a chain with local Hilbert space dimension $q$, corresponding to the semigroup $\tilde{G}_1 = \langle 0,1,\ldots,q{-}1 \,|\, a^3 = b^3 \text{ for all digits } a,b\rangle$, which at $q=2$ reduces to $\langle 0,1\,|\,000=111\rangle$ studied in the main text.

Let $X=a^3$ for any digit $a$; this is well-defined since all $a^3$ are equivalent. It is easy to show that $X$ centralizes the semigroup, as
\begin{equation}
Xa = a^4 = aX
\end{equation}
for any digit $a$. For any word, we can recursively extract occurrences of three consecutive identical digits and replace them with $X$ until this cannot be done further. This proves that any word can be written as $X^kw$, where $w$ is a word with no three consecutive characters (N3C). It can be proven (see Theorem~\ref{thm:triplet-flip-uniqueness}) that this decomposition is unique. We call $k$ the number of mobile triplets, and $w$ the frozen string. Each Krylov subspace thus corresponds to a $(k,w)$ pair. It is obvious that Krylov subspaces with the same $k$ and total length $L$ have the same structure. For a given system size $L$, we call $D_k(L)$ the dimension of the Krylov subspaces with $k$ mobile triplets, and $d_k(L)$ the number of such subspaces. Obviously, such quantities are defined only when $3k \leq L$.

\subsection{Krylov degeneracy} \label{subsec:krylov-degeneracy-triplet-flip}

Let the number of N3C strings of length $L$ be $d_L$. We can find a recurrence relation for $d_L$ by breaking down $d_L = d^{xx}_L+d^{xy}_L$, where $d_L^{xx}$ means N3C strings that end with two identical characters, and $d_L^{xy}$ ends with distinct characters (assume that $L\geq 2$). Then, we get two recurrence relations: (1) $d_{L+1}^{xx}=d_L^{xy}$, since we cannot append another $x$ to a string that already ends with $xx$, hence a $L+1$ string that ends with two identical characters can only be obtained by appending an $y$ to a $L$ string ending with $xy$; (2) $d_{L+1}^{xy} = (q-1)d_L$, since we can append any character that is distinct with the last character of the $L$ string. Combining these two, we have
\begin{equation}
d_{L+2} = (q-1)\left[d_{L+1}+d_L\right].
\end{equation}
With the initial conditions $d_1=q$ and $d_2=q^2$, this solves to
\begin{equation}
d_L = \frac{q}{2\sqrt{q-1}} \left[\frac{1}{\sqrt{q-1}}\left(\lambda_+^L+\lambda_-^L\right)+\frac{1}{\sqrt{q+3}}\left(\lambda_+^L - \lambda_-^L\right)\right],
\end{equation}
where
\begin{equation}
\lambda_\pm = \frac{q-1\pm\sqrt{(q+3)(q-1)}}{2}.
\end{equation}
For $q=2$, this gives $\{2,4,6,10,16,26,\dots\}$; for $q=3$, this gives $\{3,9,24,66,180,492,\dots\}$. At large $L$, this scales as
\begin{equation}
d_L \sim \left(\frac{q-1+\sqrt{(q+3)(q-1)}}{2}\right)^{L}.
\end{equation}
At $q=2$, the exponent is $\phi = \frac{1+\sqrt{5}}{2}$; at $q=3$, the exponent is $1+\sqrt{3}$.

\subsection{Krylov dimension} \label{sec:D-cl-tripflip}

The Krylov dimensions $D_k(L)$ should satisfy the recurrence relation
\begin{equation}
D_k(L+1) = D_k(L)+(q-1)D_{k-1}(L). \label{eq:recurrence-relation}
\end{equation}
To see this, consider a string $s$ of length $L+1$ that reduces to the normal form $X^k w$, where $w$ is non-empty. Let $w=av$, where $a$ is its first character, and $v$ is a frozen string of length $L-3k$. Similarly, let $s=bt$, where $b$ is the first character, and $t$ is of length $L$. Then,
\begin{enumerate}
    \item If $a=b$, then $t$ reduces to normal form $X^k v$;
    \item If $a\neq b$, then $t$ reduces to normal form $X^{k-1}bbw$.
\end{enumerate}
This can be proven by assuming that $t$ has normal form $X^{m} u$, then matching $bt=bX^m u=X^m bu$ to $X^k w$. Eq.~\eqref{eq:recurrence-relation} is then a direct consequence of this: the string $s$ in our subspace $X^k w$ is either $a$ plus a string in the subspace $X^k v$, or any $b\neq a$ (there are $q-1$ choices of $b$) plus a string in the subspace $X^{k-1}bbw$.

Based on Eq.~\eqref{eq:recurrence-relation}, all $D_k(L)$ can be determined from initial conditions, which are $D_k(3k)$. This corresponds to Krylov subspaces where the frozen string is empty, or words that can be constructed by starting from the empty string and recursively inserting three consecutive identical characters. We call such strings all-mobile strings. The formula is
\begin{equation}
D_k(3k) = \frac{q}{k} \sum_{j=0}^{k-1}(q-1)^j(k-j) \begin{pmatrix}3k \\ j\end{pmatrix}. \label{eq:num-of-all-mobile-strings}
\end{equation}
This number is documented in the online encyclopedia of integer sequences (OEIS) as series A213028~\cite{oeisA213028}. For $q=2$ specifically, the above expression coincides with the series A047098 ($a(n)$ in the notation on the website) in OEIS \cite{oeisA047098}. Interestingly, this formula also arises from the counting of the number of words generated by $(aba)^n$ under the 3-strand braid semi-group dynamics $B_3^+=\langle a,b|aba=bab\rangle$ \cite{Cornwell2008COUNTINGFP}. Compared with our semi-group dynamics, we notice that there exists an bijection between the elements in $B_3^+$ and $\tilde G_1$. Define $\Phi:\{0,1\}^L \rightarrow \{a,b\}^L.$:
\begin{itemize}
    \item if $j$ is odd: $0 \mapsto a, 1 \mapsto b$ .
    \item if $j$ is even: $0 \mapsto b, 1 \mapsto a$ .
\end{itemize}
Under this map, $0^{3n}$ maps to $(aba)^n$. The equivalence relation $000=111$ maps to $aba=bab$. Therefore, Eq.~\eqref{eq:num-of-all-mobile-strings} (at $q=2$) gives the dimension of the Krylov generated by $|000\cdots 0\rangle$. For a proof of the general case of this formula, see Theorem~\ref{thm:all-mobile-string-number}.

Combining Eq.~\eqref{eq:recurrence-relation} and Eq.~\eqref{eq:num-of-all-mobile-strings}, we can arrive at the following expression,
\begin{equation}
D_k(L) = (q-1)^k \begin{pmatrix}L \\ k\end{pmatrix} - (2q-3)\sum_{j=0}^{k-1}(q-1)^j\begin{pmatrix}L \\ j\end{pmatrix}. \label{eq:general-DkL-triplet-flip}
\end{equation}
Taking $q=2$, this reduces to the formula for $D_k(L)$ utilized in the main text. For the first few $k$, we have
\begin{align}
D_0(L) & = 1, \\
D_1(L) & = (q-1)L-(2q-3), \\
D_2(L) & = \frac{(q-1)^2}{2}L^2 -\frac{(q-1)(5q-7)}{2}L - (2q-3).
\end{align}

\section{Classical Krylov structure of the cyclic qutrit model}

The cyclic qutrit model is defined by the semigroup $\tilde{G}_3 = \langle 0,1,2 \,|\, 012=120=201,\; 021=102=210\rangle$ studied in the main text. The first equivalence class consists of even permutations of $(0,1,2)$, and the second of odd permutations. Throughout this section, we write $a,b,c$ for three distinct digits taken in cyclic order, i.e., $(a,b,c)$ is any cyclic permutation of $(0,1,2)$; in this notation the two equivalence classes read $abc=012=120=201$, and $acb=021=102=210$. We describe a complete classification of the Krylov subspaces of this model on a chain of length $L$.

\subsection{Frozen states}

We begin the discussion with frozen states, or one-dimensional Krylov subspaces. A frozen state would be a string such that no three consecutive characters form a permutation of $\{0,1,2\}$. The number of frozen states can be obtained by a recurrence relation, similar to that in the triplet flip model. Consider $d_L = d_L^0+d_L^++d_L^-$, with strings broken down into three classes based on their last two digits: $0$ means the two digits are the same, $+$ means the last two digits are in $\{01,12,20\}$, and $-$ means $\{10,21,02\}$. With an argument essentially identical to that in Section.~\ref{subsec:krylov-degeneracy-triplet-flip}, we get
\begin{align}
d_{L+1}^0 & = d_L; \\
d_{L+1}^+ & = d_L^0 + d_L^-; \\
d_{L+1}^- & = d_L^0 + d_L^+. \\
\end{align}
In total, this produces
\begin{equation}
d_{L+1} = 2d_{L}+d_{L-1}.
\end{equation}
With $d_1=3$ and $d_2=9$, the solution is
\begin{equation}
d_L = \frac{3}{2}\left[(\sqrt 2+1)^L + (1-\sqrt 2)^L\right].
\end{equation}
The sequence is $\{3,9,21,51,123,297,\dots\}$.

\subsection{Conserved quantities and symmetries}
\label{sec:cyclic-conserved-quantities}

Before discussing the detailed Krylov space structure of the model, we identify some conserved quantities in the system.

Let us denote $n_i^a$ as the number of $a$ symbol on the $i$-th site; that is, $n_i^a=1$ if the $i$-th site is an $a$, and $n_i^a=0$ otherwise. Then, $n_i^0+n_i^1+n_i^2=1$. We can see that the dynamics conserves
\begin{equation}
N^a = \sum_i n_i^a,
\end{equation}
for $a=0,1,2$. Moreover, it conserves
\begin{equation}
N_\text{ord} = \sum_{i<j} (n_i^0 n_j^1+n_i^1 n_j^2 + n_i^2 n_j^0).
\end{equation}
It is easy to verify that as strings of three symbols, $abc$ triplets have $N_\text{ord}=2$, and $acb$ triplets all have $N_\text{ord}=1$. It is convenient to define
\begin{equation}
D = (N^0N^1+N^1N^2+N^2N^0)-2N_\text{ord} = \sum_{i<j} \left(n_i^1 n_j^0 - n_i^0 n_j^1 + n_i^2 n_j^1 - n_i^1 n_j^2 + n_i^0 n_j^2 - n_i^2 n_j^0\right).
\end{equation}
Then, $abc$ triplets would have $D=-1$, and $acb$ triplets have $D=+1$. $D$ also behaves nicely under concatenation: consider $w=uv$, where $u$ and $v$ are two substrings, then
\begin{equation}
D_w = D_u + D_v + \left(N^1_uN^0_v - N_u^0N_v^1 + N^2_uN^1_v - N_u^1N_v^2 + N^0_uN^2_v - N_u^2N_v^0\right). \label{eq:D-concatenation}
\end{equation}
Specifically, if $N^0_u=N_u^1=N_u^2$, then $D_w=D_u+D_v$.

The rewriting rules of the system exhibit a $D_3$ symmetry, which is the semidirect product of $\mathbb Z_3: 0\to 1 \to 2 \to 0$ and $\mathbb Z_2 :0 \leftrightarrow 1$. The $\mathbb Z_3$ symmetry permutes the numbers $(N^0,N^1,N^2)$ while leaving $D$ invariant. The transposition swaps $N^0$ and $N^1$ and renders $D\to -D$. This symmetry implies that if $w$ generates a Krylov subspace, then $w^\prime$ obtained by acting a symmetry operation on $w$ should generate a subspace degenerate to (if not the same as) that of $w$.

\subsection{Triplets and quasi-Krylovs}

We call $X=abc$ an $X$ (even) triplet, and $Y=acb$ a $Y$ (odd) triplet; explicitly, the $X$ triplets are $012,120,201$ and the $Y$ triplets are $021,102,210$. It can be easily shown that $X$ and $Y$ centralize the semigroup, i.e., they commute with any string. For example, $Xa=(abc)a=a(bca)=aX$. Notably, if we only apply one pair of the update rules, for example, consider the semigroup $\langle 0,1,2\,|\, 012=120=201\rangle$ with only the even class, then the system would be isomorphic to the $q=3$ triplet flip model, via the mapping $s_i^\prime = s_i-i \mod 3$. Therefore, the way $X$ triplets behave in the string is exactly the same as the triplets in the $q=3$ triplet flip model if the update rules associated with $Y$ triplets are not invoked. This also shows that the cyclic qutrit model is strictly less fragmented than the triplet flip model.

Due to this structure, we can first exclusively invoke the $X$ update rules to reduce a string to $w=X^ku$, where $u$ is a frozen string with respect to the $X$ update rules. Then, we can apply the $Y$ update rules to $u$, such that $u=Y^{k^\prime} v$. Doing this iteratively, we can reduce every string to $w = X^{m_X} Y^{m_Y} w^\prime$, where $w^\prime$ is a frozen string that does not contain any $X$ or $Y$ triplet patterns. We call this a normal form of the strings in this model. The set of all the strings with a given normal form $(m_X,m_Y,w)$, or equivalently, all the strings that can be obtained by inserting $m_X$ $X$ triplets and $m_Y$ $Y$ triplets into $w^\prime$, is referred to a quasi-Krylov of this model.

The term ``quasi-Krylov'' is used because they are often not actual Krylov subspaces due to the fact that one string can have more than one normal forms. For example, the string $abcba$ can be represented by two different normal forms, $Xba$ and $Yab$. More generally, the following equivalence relations exist between normal forms:
\begin{align}
Yab & =Xba, \label{eq:fusion-1} \\
YYa & =Xbaac,  \label{eq:fusion-2}  \\
YYY & = Xbaaccb .  \label{eq:fusion-3}
\end{align}
Therefore, a Krylov subspace would in general be a union of several quasi-Krylovs connected by the rewrite rules above.

\subsection{Single-triplet sectors}

We separate non-frozen Krylov subspaces into two broad classes, called the single-triplet sector and multi-triplet sector. The single-triplet sector are defined as Krylov subspaces that consist of one or several quasi-Krylovs such that each of them have $m_X+m_Y=1$. We will soon see that single-triplet Krylov subspaces are qualitatively different from multi-triplet ones. In this subsection, we discuss the single-triplet sector.

\paragraph{Single quasi-Krylov. --- } The simplest single-triplet Krylov is a quasi-Krylov that is also a Krylov subspace, i.e., that does not tunnel into other quasi-Krylovs. Without loss of generality, assume the quasi-Krylov is characterized by $(m_X=1,m_Y=0,w)$. Then, we require that none of the rules Eq.~\eqref{eq:fusion-1}, \eqref{eq:fusion-2}, or \eqref{eq:fusion-3} can apply to the string $Xw$. This happens when $w$ does not contain any $ba$ patterns, which we will call $Y$-activation patterns. Therefore, such a Krylov would be entirely consists of strings where $X$ is inserted at different positions in $w$, similar to the triplet flip model. The dimension of such Krylov subspaces would be equal to the $D_k(L)$ in the triplet flip model with $k=1$ and $q=3$, which is $2L-3$.

The Krylov degeneracy is simply equal to the number of frozen strings that contain no $Y$-activation patterns. In such strings, any two consecutive characters are either the same, or an $X$-activation pattern, $ab$. A simple recurrence relation can be derived: if a string ends with $aa$, then we can append either $a$ or $b$; if a string ends $ab$, then we can only append $b$. This shows that the number of frozen strings with no $Y$-activation patterns satisfies a Fibonacci recurrence relation. Specifically, the number of length-$(L-3)$ frozen strings with no $Y$-activation patterns is $3F_{L-2}$, where $F$ is the Fibonacci sequence. Further considering that there are symmetric sectors $Yw$, where $w$ contains no $X$-activation patterns, we get the total degeneracy of this kind of subspace is $6F_{L-2}$.

Generally, a single-triplet Krylov would contain more than one quasi-Krylovs. In this case, the rule Eq.~\eqref{eq:fusion-1} could apply, but not Eq.~\eqref{eq:fusion-2} or Eq.~\eqref{eq:fusion-3}, since that contradicts our definition of single-triplet. Therefore, a single-triplet Krylov is a set of $Xw$ and $Yw^\prime$ connected via application of Eq.~\eqref{eq:fusion-1}, subject to the constraint that it cannot connect to a Krylov that creates a second triplet within the frozen string. We propose two families of $w$ that are guaranteed to remain single-triplet under this dynamics.

\paragraph{First family. --- } The first family consists of $Xw$ or $Yw$ where $w$ contains only two kinds of characters, $\{a,b\}$, which implies that $N^c=1$. In this case, Eq.~\eqref{eq:fusion-2} or Eq.~\eqref{eq:fusion-3} are automatically prohibited, since their right-hand sides require $N^c \geq 2$.  Therefore, the only process that can happen is $Xba\leftrightarrow Yab$ and its equivalents. It can be proven that in such case, all the strings sharing the same invariants $(N^0,N^1,N^2,D)$ lie in the same Krylov subspace (see Theorem~\ref{thm:single-triplet-family-1-connectivity}). Therefore, given a set of invariants $(N^0, N^1, N^2, D)$, if $\mathrm{min}(N^0,N^1,N^2)=1$, then all the non-frozen strings with the same invariants form a Krylov subspace.

\paragraph{Second family. --- } The second family is a generalization of the single quasi-Krylov subspace. We introduce it with an example. Consider the Krylov subspace generated by the string $Xb^2ab^2 c$. Repeatedly applying rule Eq.~\eqref{eq:fusion-1}, this can be transformed into $\{Ybab^3 c, Xbab^2 cb,Yab^3cb,Xab^2 cb^2\}$. The union of these five quasi-Krylovs forms a Krylov subspace. To see the general construction, notice that in all the strings above, the suffix can be obtained by acting transpositions that replace an $X$-activation pattern by an $Y$-activation pattern on $ab^4 c$. For example, $b^2 ab^2 c$ can be obtained by $ab^4 c \to bab^3 c \to b^2 ab^2 c$. We call a string moment-$\mu$ variant of $ab^4 c$ if it can be obtained from acting $\mu$ such transpositions on $ab^4 c$. Observing that $D$ is conserved throughout the process, and $D(Xw)=D(w)-1$, $D(Yw)=D(w)+1$, and that if $w$ is a moment-$\mu$ string, then $D(w)=D(ab^4 c)-2\mu$; we see that for all strings of the form $Xw$ in a Krylov, the $w$ part must have the same $\mu$. In fact, we see that the three suffixes of $X$ in this Krylov, $\{b^2ab^2 c, bab^2 cb, ab^2 cb^2\}$, are exactly the three strings with moment $\mu=2$, and the two suffixes of $Y$ are exactly the two strings with moment $\mu=1$.

Based on the same string $ab^4c$, we can also construct a Krylov containing $X$ plus moment-1 strings and $Y$ plus moment-0 strings, which is $\{Xbab^3 c,Yab^4c,Xab^3cb\}$. We can also construct $X$ plus moment-0 strings, in which case there is no value $\mu$ value for $Y$, hence the Krylov subspace consists of a single quasi-Krylov $Xab^4 c$. However, we cannot further increase $\mu$ from $\mu=2$, since $b^3 abc$ would be a valid moment-3 string, yet it contains an $abc=X$ pattern, such that $Xb^3abc=X^2b^3$, violating the assumption that the Krylov subspace is single-triplet.

More generally, take any suffix that does not contain $Y$-activation patterns. Without loss of generality, assume that it begins with $a$, then it should have the form $w_0=a^{m_1} b^{m_2} c^{m_3} a^{m_4} \dots x ^{m_{K}}$, where the symbols appear in the sequence $abcabc\dots$, hence $x$ is a symbol $a,b,c$ depending on $K \mod 3$, and the exponents are all positive integers. Then the union of quasi-Krylovs $Xw$ where $w$ are moment-$\mu$ variants of $w_0$ and $Yw$ where $w$ are moment-$(\mu-1)$ variants of $w_0$ would form a Krylov subspace. To avoid any triplet patterns, we require that $m_2,\dots,m_{K-1}\geq \mu+2$.

We can show that the number of second-family strings is exponential in system length. It suffices to show that the number of base strings $a^{m_1} b^{m_2} c^{m_3} a^{m_4} \dots x ^{m_{K}}$ is exponential. If $\mu=0$, the base strings are exactly the frozen strings with no $Y$-activation patterns. We have shown that it satisfies the recurrence relation
\begin{equation}
d_{\mu=0}(L+2) = d_{\mu=0}(L+1) + d_{\mu=0}(L).
\end{equation}
Generally, for non-zero $\mu$, we can show that
\begin{equation}
d_{\mu}(L+\mu+2) = d_{\mu}(L+\mu+1) + d_{\mu}(L). \label{eq:general-mu-recurrence}
\end{equation}
In fact, consider a string of length $L+\mu+1$. We can always append a character that is the same as the last character of the current string to form a valid string of length $L+\mu+2$. If we want to append a different character, however, the constraint $m_i \geq \mu+2$ requires that the last $\mu+2$ characters of the current string all be the same. This leads to Eq.~\eqref{eq:general-mu-recurrence}. This implies that $d_{\mu}(L) \sim (\lambda_\mu)^L$, with
\begin{equation}
(\lambda_\mu)^{\mu+2} = (\lambda_\mu)^{\mu+1}+1.
\end{equation}
Therefore, the number of Krylov subspaces with any given $\mu$ scales exponentially with system size. For the first case distinct from the single quasi-Krylov subspaces, $\mu=1$, the exponent is the supergolden ratio $\psi \approx 1.47$.

\subsection{Multi-triplet sectors}

The case where strings can contain more than one triplet appears to be more complex. However, we find the exact opposite: Krylov subspaces in the multi-triplet sectors are fully characterized by the invariants we identified in Section.~\ref{sec:cyclic-conserved-quantities}. Formally, all the strings with the same $(N^0,N^1, N^2, D)$ that are not frozen or belong to a single-triplet subspace would lie in the same Krylov subspace. We have confirmed this numerically for all system sizes up to $L=21$; an argument based on conserved quantities and a reduction to canonical form is given in Sec.~\ref{sec:multi-triplet-argument}. This implies that the size of multi-triplet subspaces can scale exponentially, while the number of them scales at most polynomially with respect to $L$.

\subsection{Summary}\label{app:cyclic_table}

We summarize the full Krylov subspace structure of the classical cyclic qutrit model in Table~\ref{tab:cyclic-qutrit-all-krylovs}.

\begin{table}[!h]
    \centering
    \begin{tabular}{|c|c|c|c|}\hline
         Type of Krylov&  Example & Number & Symmetry\\ \hline
         Frozen& 
     $aabbcc$&$\sim(\sqrt 2+1)^L$ &None\\\hline
 Single-triplet, F1& $abcabab$&$\mathrm{poly}(L)$ &$\mathbb Z_2$ if $D=0$ and $N^a=N^b$\\\hline
 Quasi-Krylov (F2, $\mu=0$) & $abcaabbcc$&$6F_{L-2} \sim \phi^L$ &None\\\hline
 Single-triplet, F2& $abcbabbc$&$\sim (\lambda_\mu)^L$ & None\\\hline
 Multi-triplet& $abcacb$ &$\mathrm{poly}(L)$ & \makecell{$\mathbb Z_3$ if $N^0=N^1=N^2$;  $\mathbb Z_2$ if $D=0$ and $N^a=N^b$;  $D_3$ if both}\\ \hline\end{tabular}
    \caption{All kinds of Krylov subspaces for the classical cyclic qutrit model. $\phi=\frac{\sqrt 5+1}{2}$ is the golden ratio.}
    \label{tab:cyclic-qutrit-all-krylovs}
\end{table}

As a specific example, we show all the Krylov sectors of this model at $L=9$ in Table.~\ref{tab:L9}.

\begin{table*}[!h]
\caption{Classical and quantum Krylov subspaces for the cyclic qutrit model at $L=9$.
$N_{\mathrm{frozen}} = 4179$.
The first row lists product-frozen sectors.
Each non-frozen, non-$\mathbb{Z}_3$-stable sector has two isospectral partners
generated by the digit shift $\hat{X}^{\otimes L}$.
$\mathbb{Z}_3$-stable sectors ($\star$) have $\hat{X}^{\otimes L}$ acting within $\Kq$,
splitting it by charge.}
\label{tab:L9}
\begin{tabular}{llcccccc}
\toprule
root states of classical Krylov subspaces
   &Type& $D_{\mathrm{cl}}$ & \# EFS & $D_q$
  & generator & Deg. & \# sectors \\
\midrule
$\ket{{\rm Frozen}_{\rm prod}}$
   &Frozen& 1 & N/A & 1 & N/A & 1 & 4179 \\[3pt]
\texttt{000000012}, \texttt{000000021}, \ldots, \texttt{002122222}
   &Quasi-Krylov& 15 & 8 & 7 & $\hat{X}^{\otimes L}$ & 3 & 26 \\[2pt]
\texttt{000000122}, \texttt{000000211}, \ldots, \texttt{001222222}
   &Single-triplet& 28 & 14 & 14 & $\hat{X}^{\otimes L}$ & 3 & 16 \\[2pt]
\texttt{000012110}, \texttt{000012202}, \ldots, \texttt{001222221}
   &Single-triplet& 41 & 20 & 21 & $\hat{X}^{\otimes L}$ & 3 & 12 \\[2pt]
\texttt{000012220}, \texttt{000021110}, \texttt{001211000}, \texttt{001211101}
   &Single-triplet& 54 & 26 & 28 & $\hat{X}^{\otimes L}$ & 3 & 4 \\[2pt]
\texttt{000121100}, \texttt{000121110}, \texttt{000122001}, \texttt{000122020},
   &\multirow{2}{*}{Single-triplet}& \multirow{2}{*}{67} & \multirow{2}{*}{32} & \multirow{2}{*}{35}
  & \multirow{2}{*}{$\hat{X}^{\otimes L}$} & \multirow{2}{*}{3} & \multirow{2}{*}{8} \\
\texttt{000122200}, \texttt{000122202}, \texttt{000211002}, \texttt{000211100}
   && & & & & & \\[2pt]
\texttt{001211010}
   &Single-triplet& 78 & 36 & 42 & $\hat{X}^{\otimes L}$ & 3 & 1 \\[2pt]
\texttt{000122220}, \texttt{000211110}
   &Single-triplet& 80 & 38 & 42 & $\hat{X}^{\otimes L}$ & 3 & 2 \\[4pt]
\texttt{000001212}, \texttt{000002121}
   &Multi-triplet& 123 & 49 & 74 & $\hat{X}^{\otimes L}$ & 3 & 2 \\[2pt]
\texttt{000012121}, \texttt{000012221}, \texttt{000121211}, \texttt{000122212}
   &Multi-triplet& 136 & 55 & 81 & $\hat{X}^{\otimes L}$ & 3 & 4 \\[2pt]
\texttt{000001221}
   &Multi-triplet& 156 & 50 & 106 & $\hat{X}^{\otimes L}$ & 3 & 1 \\[2pt]
\texttt{000012211}, \texttt{000012212}, \texttt{000121220}, \texttt{000122122}
   &Multi-triplet& 226 & 74 & 152 & $\hat{X}^{\otimes L}$ & 3 & 4 \\[2pt]
\texttt{000012122}, \texttt{000021211}
   &Multi-triplet& 271 & 87 & 184 & $\hat{X}^{\otimes L}$ & 3 & 2 \\
\midrule
$\star$\; \texttt{000121212}, \texttt{000122211}
   &Multi-triplet& 252 & 92 & $54{+}53{+}53$
  & N/A & 1 & 2 \\[2pt]
$\star$\; \texttt{000121221}, \texttt{000122121}
   &Multi-triplet& 432 & 120 & $106{+}103{+}103$
  & N/A & 1 & 2 \\
\bottomrule
\end{tabular}
\end{table*}

\section{Mathematical proofs}

This section compiles the mathematical proofs related to the structure of the Krylov subspaces in the triplet flip model and the cyclic qutrit model.

\subsection{Triplet flip model: Uniqueness of decomposition} 
\label{subsec:math-triplet-flip}

\begin{theorem}
In the triplet flip model, every string can be uniquely written as $X^kw$, where $w$ is a frozen string. \label{thm:triplet-flip-uniqueness}
\end{theorem}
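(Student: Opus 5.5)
Existence is already established in the text: repeatedly deleting a run of three identical characters and recording it as a factor $X$ (which is central, since $Xa=a^4=aX$) must terminate, because each deletion shortens the word by three. So the content of the statement is uniqueness. That is, if two strings are related by the rewriting rules $aaa\leftrightarrow bbb$, they have the same pair $(k,w)$. It is clear that $k$ is determined by the length. What must be shown is that $w$ is an invariant of the equivalence relation generated by the rules.

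\textbf{Step 1: deletion is confluent.} Consider the reduction system on words whose only rule is deletion of any factor $aaa$, for any digit $a$. This system is length-decreasing, hence terminating. By Newman's lemma (equivalently Huet's critical-pair criterion~\cite{huet1980confluent}), it therefore suffices to check local confluence on overlaps. Two deletions that act on disjoint positions commute trivially. Two deletions $aaa$ that overlap must use the same digit, since the overlapping positions carry a single letter. Such an overlap lives inside a run $a^m$ with $4\le m\le 5$, and deleting either triple leaves the same word $a^{m-3}$ in that run. So every critical pair is joinable, and each string $s$ has a unique normal form $f(s)$, which is an N3C string.

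\textbf{Step 2: $f$ is invariant under the rewriting rules.} Suppose $s=u\,aaa\,v$ and $s'=u\,bbb\,v$. Since the normal form is unique, it may be computed from any reduction sequence. In $s$ we may first delete the displayed $aaa$, and in $s'$ we may first delete the displayed $bbb$; both steps produce $uv$. Hence $f(s)=f(uv)=f(s')$. Taking the transitive closure, $f$ is constant on each Krylov sector, and $k=(|s|-|w|)/3$ is fixed as well.

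\textbf{Step 3: conclude.} Suppose $s\sim X^kw\sim X^{k'}w'$, where $w$ and $w'$ are N3C strings. Represent $X^kw$ by the concrete string $0^{3k}w$. Its normal form is $w$, because deleting the leading triples leaves $w$, which admits no further deletion. Likewise the normal form of $X^{k'}w'$ is $w'$. Step 2 then gives $w=w'$, and equal lengths force $k=k'$.

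The only delicate point is Step 1, namely checking that deletion is locally confluent. The overlap analysis is short. One must still make sure that deletions in different runs cannot interact: after a deletion, two runs of the same letter can merge into a new $aaa$. This creates new redexes but not a failure of confluence, because Newman's lemma only requires joinability of one-step divergences. Alternatively, the result follows from the group picture already used in the paper. The map to $\mathbb{Z}_3*\mathbb{Z}_3$ is invariant under the rules, and the normal form theorem for free products gives a unique reduced word, so $w$ is the reduced word of the group element.
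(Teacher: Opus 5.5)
Your proof is correct and rests on the same critical-pair (Huet/Newman) confluence argument as the paper's. The paper adjoins $X$ as a letter with rules $a^3\to X$ and $aX\to Xa$, so its normal forms are literally $X^kw$ and its rewriting congruence is the semigroup congruence; this requires checking the critical pairs $a^4$, $a^5$ and $a^3X$. You instead use pure deletion $aaa\to\varepsilon$, which leaves only the $a^4$, $a^5$ overlaps and makes termination trivial, and then supply the Step~2 invariance under $aaa\leftrightarrow bbb$ and the length count to recover $(k,w)$. This variant also directly justifies the reduction map $f$ that the paper later uses in the entanglement appendix.
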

To prove this, we introduce the \textit{critical pair theorem}~\cite{huet1980confluent}. The theorem can be summarized as follows:
\begin{theorem}
Consider a string rewriting system $(\Sigma,R)$, where $\Sigma$ is a finite alphabet and $R$ is a finite set of local rewriting rules of the form $\ell \to r$, where $\ell$ and $r$ are words over $\Sigma$. A normal form is defined as a word that cannot be further rewritten under $R$. Any word over $\Sigma$ can be reduced to a unique normal form under $R$ if the following conditions are satisfied:
\begin{itemize}
    \item \textbf{Termination:} Applying the rules of $R$ starting from any word, a normal form must be reached in a finite number of steps.
    \item \textbf{Joinability of critical pairs:} Consider two (not necessarily distinct) rewriting rules $\ell_1\to r_1$ and $\ell_2\to r_2$. An overlap critical pair occurs when $\ell_1=ab, \ell _2=bc$, with $b$ non-empty; the critical pair is $(r_1c,ar_2)$, two strings reduced from the string $abc$. An inclusion critical pair occurs when $\ell_2=b$, $\ell_1=abc$, and the pair is $(r_1,ar_2c)$. We require all critical pairs to reduce to a common normal form (i.e. joinable).
\end{itemize}
\end{theorem}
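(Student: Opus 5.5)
The plan is the classical two-stage argument: first show that joinability of critical pairs implies \emph{local confluence}, then use termination to upgrade local confluence to uniqueness of normal forms (Newman's lemma). Write $u\to v$ for a single application of a rule of $R$ at some position, and $\to^*$ for its reflexive--transitive closure. Existence of a normal form for every word is immediate from termination: keep rewriting until no rule applies, which must happen in finitely many steps. The whole content is uniqueness.

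\emph{Step 1 (local confluence).} Suppose $w\to w_1$ by applying $\ell_1\to r_1$ at one occurrence of $\ell_1$, and $w\to w_2$ by applying $\ell_2\to r_2$ at an occurrence of $\ell_2$. I will show that $w_1$ and $w_2$ have a common descendant, splitting according to the relative position of the two occurrences.
\begin{itemize}
\item \textbf{Disjoint occurrences.} Write $w=x\ell_1y\ell_2z$. Then both $w_1=xr_1y\ell_2z$ and $w_2=x\ell_1yr_2z$ rewrite in one step to $xr_1yr_2z$.
\item \textbf{Overlap.} Write $w=x\,abc\,z$ with $\ell_1=ab$, $\ell_2=bc$ and $b$ non-empty. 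Then $w_1=x\,r_1c\,z$ and $w_2=x\,ar_2\,z$. By hypothesis $r_1c\to^* n$ and $ar_2\to^* n$ for some common $n$.
\item \textbf{Inclusion.} Here one occurrence contains the other, and the inclusion critical pair hypothesis applies in the same way.
\end{itemize}
The key observation is that rewriting is compatible with context: if $u\to^* v$, then $xuz\to^* xvz$, since each rule is applied locally. Hence in the overlap and inclusion cases, $w_1\to^* xnz$ and $w_2\to^* xnz$. The identical-occurrence case ($\ell_1=\ell_2$ at the same position) is either trivial or covered by the inclusion case.

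\emph{Step 2 (Newman's lemma).} Termination means the relation $\to^{+}$ is well-founded, so I can use Noetherian induction. The claim to prove is: every $w$ has a unique normal form. Assume it holds for all proper descendants of $w$.
\begin{itemize}
\item If $w$ is itself a normal form, there is nothing to prove.
\item Otherwise, let $w\to^* n_1$ and $w\to^* n_2$ with $n_1,n_2$ normal forms. Both reductions have a first step, say $w\to w_1$ and $w\to w_2$.
\item By Step 1 there is $v$ with $w_1\to^* v$ and $w_2\to^* v$. Let $n$ be a normal form of $v$, which exists by termination.
\item Now $w_1$ is a proper descendant of $w$, and $w_1\to^* n_1$ and $w_1\to^* n$. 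By the induction hypothesis its normal form is unique, so $n_1=n$.
\item Symmetrically, applying the hypothesis to $w_2$ gives $n_2=n$.
\end{itemize}
Therefore $n_1=n_2$, completing the induction.

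I expect the main obstacle to be bookkeeping rather than any deep step: setting up the Noetherian induction correctly, on descendants rather than on word length (rules such as $000\to\emptyset$ shrink words, but general rules need not). The case analysis in Step 1 must also be exhaustive, including occurrences that overlap on both sides, which reduce to the inclusion case. When applying the theorem to Theorem~\ref{thm:triplet-flip-uniqueness}, one must additionally check termination for the chosen orientation (e.g.\ length decreases) and verify its finitely many critical pairs, such as $0000$ and $1111$, directly.
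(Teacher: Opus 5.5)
Your proof is correct. The paper gives no argument of its own here and simply cites Huet~\cite{huet1980confluent}, and your argument is the standard proof behind that citation: a case analysis of redex positions (disjoint, overlap, inclusion, lifted through contexts) gives local confluence, and Newman's lemma, proved by Noetherian induction on the well-founded relation $\to^{+}$, then gives uniqueness. One aside on your closing remark: in the paper's actual application the rule $aX\to Xa$ preserves length, so proving termination there needs a lexicographic measure, for instance length first and then the number of digit--$X$ pairs with the digit on the left; also, the critical pair from $a^3X$ must be checked alongside $a^4$ and $a^5$.
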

\begin{proof}[Proof of Theorem~\ref{thm:triplet-flip-uniqueness}]
We apply this theorem to the alphabet $\Sigma=\{0,1,\ldots,q{-}1\}\cup\{X\}$ and rewriting rules $R=\{a^3 \to X,\; aX\to Xa\}$ for every digit $a$. This rule set is obviously terminating. To show that critical pairs are joinable, we find that eligible critical pairs include $a^4 \to (Xa,aX) \to Xa$, $a^5 \to (Xa^2,a^2X)\to Xa^2$, and $a^3X \to (XX,a^2Xa) \to XX$, all of which are joinable. This proves the uniqueness of the normal form.
\end{proof}

\subsection{Triplet flip model: Size of all-mobile Krylov subspaces} 

We will offer a proof of Eq.~\eqref{eq:num-of-all-mobile-strings}, the formula for the number of all-mobile strings, by giving an explicit constructive characterization of all-mobile strings. We then verify that this implies Eq.~\eqref{eq:general-DkL-triplet-flip}.

Given a character $a$, we define \textit{$a$-encompassed strings} as all-mobile strings such that it does not contain any suffix that is an $a$-leading all-mobile string. That is, $w$ is $a$-encompassed if it cannot be written as $w=w_1w_2$, where both $w_1$ and $w_2$ are all-mobile and $w_2$ begins with $a$. In ``$w_1$ is all-mobile'', we allow $w_1$ to be empty. This convention carries on to this entire section.
\begin{theorem}
An all-mobile string starting with the character $a$ can be uniquely represented as $aw_1 aw_2\dots aw_{3n}$, where each $w_i$ is an $a$-encompassed string . \label{thm:all-mobile-string-number}
\end{theorem}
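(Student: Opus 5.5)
The plan is to translate the statement into the language of walks on a Cayley graph, exactly as in Appendix~\ref{subsec:z3star}, and read off the decomposition from the first-level triangle that the walk must traverse.

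\textbf{Step 1: all-mobile strings are closed walks.} Consider the group $G_q=\mathbb{Z}_3*\cdots*\mathbb{Z}_3$ ($q$ factors), obtained from $\tilde G_1$ by setting $X=a^3=e$ for every digit $a$, and let $f$ be the reduction homomorphism. By Theorem~\ref{thm:triplet-flip-uniqueness}, every string has a unique normal form $X^k w$ with $w$ N3C. I will check that $f(s)=w$, where $w$ is viewed as a reduced word of $G_q$. Reduced words of $G_q$ are exactly the N3C strings, so the N3C part of the normal form is the group reduction. Hence $s$ is all-mobile iff $f(s)=e$, i.e. iff the associated directed walk on the Cayley graph of $G_q$ starting at $e$ returns to $e$.

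\textbf{Step 2: tree structure of the Cayley graph.} The Cayley graph is built from directed $3$-cycles (triangles): one triangle of colour $c$ at each vertex for each generator $c$, and vertices shared only as cut vertices. Collapsing each triangle to a point gives a tree. Consequently, if one removes the edges of the $a$-triangle $T$ through $e$ (with vertices $e,a,a^2$ and edges $e\to a\to a^2\to e$), the graph splits into components $C_e,C_a,C_{a^2}$ attached at these three vertices. Any walk from a vertex of $T$ back into $C_{a^j}$ returns to the rest of the graph only through $a^j$.

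\textbf{Step 3: existence of the decomposition.} Let $s$ be all-mobile and start with $a$, so its walk begins with the edge $e\to a$ of $T$. Record the successive times at which the walk traverses an edge of $T$. Because the edges of $T$ are directed and the walk ends at $e$, these traversals go around $T$ cyclically, and their number is $3n$ with $n\ge 1$. Each such traversal is a letter $a$. Between the $i$-th traversal and the next one (or the end of the string), the walk lies in $C_{a^j}$ and starts and ends at $a^j$. So the intervening substring $w_i$ reduces to $e$, i.e. it is all-mobile, and at no intermediate time at which it sits at $a^j$ does it read the letter $a$. That is, there is no splitting $w_i=uv$ with $u$ all-mobile and $v$ beginning with $a$; since $w_i$ and $u$ are all-mobile, $v$ is automatically all-mobile. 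This is exactly the condition that $w_i$ is $a$-encompassed (empty $w_i$ is allowed). This gives $s=aw_1aw_2\cdots aw_{3n}$.

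\textbf{Step 4: converse and uniqueness.} Conversely, any such concatenation with $a$-encompassed $w_i$ reduces to $a^{3n}=e$ and starts with $a$, so it is all-mobile. For uniqueness, take any representation of this form. Its explicit $a$'s are precisely the letters read while the walk sits at a vertex of $T$ (the $w_i$ being excursions that never leave through $T$). Hence these $a$'s coincide with the $T$-edge traversals, which are determined by $s$ alone. An induction on $i$ shows that the $i$-th explicit $a$ must be the first occurrence of an $a$ read at a vertex of $T$ after the previous one, since $w_i$ cannot contain such an occurrence.

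\textbf{Main obstacle.} The delicate point is the equivalence in Step 3. I need to show that "$w_i$ never reads $a$ while at its base vertex $a^j$" is the same as the paper's suffix condition. In particular, I must argue carefully that an all-mobile prefix $u$ of $w_i$ corresponds exactly to a return of the walk to $a^j$, using Step 1 applied to $u$, and that an $a$ read at $a^j$ is necessarily an edge of $T$ rather than of some other $a$-triangle. I expect this to follow from the uniqueness of the triangle of each colour at each vertex in the Cayley graph.
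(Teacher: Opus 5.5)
Your proof is correct, but it takes a different route from the paper's. The paper stays inside the string-rewriting framework. It first proves a lemma: an all-mobile string beginning with $a$ has a unique shortest prefix $aw_1aw_2a$ with $w_1,w_2$ all-mobile, and these $w_1,w_2$ are automatically $a$-encompassed, since otherwise a shorter prefix of that form would exist. The theorem then follows by peeling off this prefix, taking the longest $a$-leading all-mobile suffix of the remainder, calling the leftover piece $w_3$, and recursing.

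You instead pass to the free product $\mathbb{Z}_3^{*q}$ and characterize the explicit $a$'s intrinsically. They are the positions $t$ with $s_t=a$ and $f(s_1\cdots s_{t-1})\in\{e,a,a^2\}$, i.e.\ the traversals of the $a$-triangle at the identity. This gives the same decomposition: the paper's shortest prefix ends at the third such traversal, and its longest $a$-leading suffix starts at the fourth. Your characterization delivers existence and uniqueness at once. The paper's write-up only makes the first block explicitly canonical, and leaves uniqueness of the full representation implicit in the ``longest suffix'' choice.

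The paper's route buys economy of tools. It needs only the semigroup normal form already established, and the same lemma is reused directly for the next theorem ($bw_1bw_2bw_3$ for encompassed strings), which feeds the ternary-forest bijection. Yours imports the free-product normal form theorem. In return it is more conceptual, an excursion decomposition on a tree, and matches the Cayley-graph picture the paper uses for the entanglement analysis.

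The obstacle you flag is routine. The walker's position after a prefix $p$ is $f(p)$, and the only $a$-edge out of $a^j$ is $a^j\to a^{j+1}\in T$. Until a $T$-edge is used inside $w_i$, the walk stays in $C_{a^j}$, whose only vertex of $T$ is $a^j$.

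Two minor fixes:
\begin{itemize}
\item For $q\ge 3$, the claim that collapsing each triangle to a point gives a tree is inaccurate: the $q$ triangles at a vertex are pairwise adjacent. The correct statement is that the vertex--triangle incidence graph is a tree. What you actually use, $C_{a^j}=\{a^ju : u \text{ reduced, not beginning with } a\}$, follows directly from the free-product normal form.
\item Step~1 does not need the semigroup uniqueness theorem. Any complete reduction of $s$ yields an N3C word equal to $f(s)$, and that word is empty iff $f(s)=e$.
\end{itemize}
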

For example,
\begin{itemize}
    \item $w=aaa$ has $n=1$ and all $w_i$ empty.
    \item $w=abbbaccca$ has $w_1=bbb$ and $w_2=ccc$ being $a$-encompassed strings.
    \item $w=abbcccbaa$ has $w_1=bbcccb$ being an $a$-encompassed string.
\end{itemize}
To prove this, we will first establish a lemma.
\begin{lemma}
For any all-mobile string starting with $a$, there is a unique shortest prefix of the string that has the form $aw_1aw_2a$ with $w_1$ and $w_2$ being all-mobile strings. For this shortest prefix, the strings $w_1$ and $w_2$ will be $a$-encompassed strings.\label{thm:lemma-all-mobile}
\end{lemma}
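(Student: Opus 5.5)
The plan is to work in the group picture of Appendix~\ref{subsec:z3star}, generalized to $q$ digits. Declare every $a^3$ to be the identity, which gives the free product $G=\mathbb{Z}_3^{*q}$. Let $f$ be the reduction map (a homomorphism). By Theorem~\ref{thm:triplet-flip-uniqueness}, a string is all-mobile if and only if $f(w)=e$. I read a string $s$ as a walk $X_t=f(s_{[1:t]})$ on the Cayley graph. This graph is a cactus: directed triangles $g\to ga\to ga^2\to g$, one per coset of each $\mathbb{Z}_3^{(a)}$, glued at vertices in a tree-like fashion. The key structural fact I will use is that every vertex is a cut vertex separating the triangles attached to it. Edges are directed by appending a letter, so within the $a$-triangle at $e$ the only moves are $e\xrightarrow{a}a\xrightarrow{a}a^2\xrightarrow{a}e$.

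\textbf{Existence.} Let $s=at$ be all-mobile, so $X_1=a$ and $X_{|s|}=e$. Let $T$ be the first time $t\ge1$ with $X_t=e$. Before time $T$ the walk lives in the branch $B_a$ of the graph hanging off $e$ through the $a$-triangle; this is $\{a,a^2\}$ together with the subtrees hanging at $a$ and at $a^2$. By the cut-vertex property, the only edge from $B_a$ into $e$ is $a^2\xrightarrow{a}e$. Also, inside $B_a$ the only way to pass from the subtree at $a$ to the subtree at $a^2$ is the edge $a\xrightarrow{a}a^2$. Going from $a^2$ back to $a$ would require passing through $e$, which is excluded before $T$.

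Let $\tau$ be the first time with $X_\tau=a^2$. Then:
\begin{itemize}
\item $X_{\tau-1}=a$ and $s_\tau=a$;
\item $X_{T-1}=a^2$ and $s_T=a$;
\item between times $1$ and $\tau-1$ the walk is an excursion from $a$ back to $a$ inside the subtree at $a$;
\item between times $\tau$ and $T-1$ it is an excursion from $a^2$ to $a^2$ inside the subtree at $a^2$.
\end{itemize}
Writing $w_1=s_{[2:\tau-1]}$ and $w_2=s_{[\tau+1:T-1]}$, we get $f(w_1)=f(w_2)=e$, so both are all-mobile and $s_{[1:T]}=aw_1aw_2a$.

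\textbf{Uniqueness of the shortest prefix.} Any prefix of the form $aw_1aw_2a$ with $w_1,w_2$ all-mobile satisfies $f=a\cdot a\cdot a=e$, so its length is a return time of the walk to $e$ and hence at least $T$. Thus the length-$T$ prefix is the unique shortest one. Its internal decomposition is also forced: in any such decomposition, the letter $a$ after $w_1$ is a step from $a$ to $a^2$ and the final $a$ is a step from $a^2$ to $e$. Within $[1,T]$ the only $a\to a^2$ step occurs at time $\tau$, since the walk never re-enters the subtree at $a$ after reaching $a^2$. The only $a^2\to e$ step occurs at time $T$.

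\textbf{$a$-encompassedness.} This I expect to be the only delicate point, since it needs the minimality argument done carefully. Suppose $w_1=uv$ with $u,v$ all-mobile and $v$ beginning with $a$. Then $X_{1+|u|}=a\cdot f(u)=a$, and the next letter $a$ moves the walk to $a^2$ at time $2+|u|<\tau$. This contradicts the definition of $\tau$ as the first visit to $a^2$. Similarly, if $w_2=uv$ with such $u,v$, then $X_{\tau+|u|}=a^2$ and the next letter $a$ sends the walk to $e$ at time $\tau+|u|+1<T$. This contradicts the minimality of $T$. Hence both $w_1$ and $w_2$ are $a$-encompassed.

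The main care needed throughout is justifying the cut-vertex (cactus/tree) property rigorously. I would derive it from uniqueness of reduced words: block-reduced words give a unique geodesic, so any path between vertices in different branches at a vertex must pass through that vertex.
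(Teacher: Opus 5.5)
Your proof is correct, but it takes a different route from the paper's.

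\textbf{The paper's route.} The paper works directly with strings.
\begin{itemize}
\item Existence: it tracks which $aaa$ deletes the leading $a$ during a full reduction. The letters between those three $a$'s must have been removed by triplets lying inside the gaps, so they are all-mobile.
\item Uniqueness of the decomposition: a second decomposition $au_1au_2a$ with $|u_1|>|w_1|$ forces $u_1=w_1av$ with $v$ reducing to $aa$. Hence $v=sa\cdots$ with $s$ all-mobile, which gives the strictly shorter prefix $aw_1asa$.
\item Encompassedness: it applies the existence statement to an offending $a$-leading all-mobile suffix, which again produces a shorter prefix.
\end{itemize}

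\textbf{Your route.} You identify everything canonically. The shortest prefix ends at the first return time $T$ to $e$, and the middle $a$ sits at the first hitting time $\tau$ of $a^2$. Uniqueness and encompassedness then follow in one line from the minimality of $\tau$ and $T$. You also only need $u$ all-mobile with $v$ beginning with $a$, not a full decomposition of the suffix.

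\textbf{What each buys.} Your argument is more transparent, and it makes explicit a fact the paper uses without proof. The paper's step ``$v$ reduces to $aa$, hence $v$ begins with $sa$ for some all-mobile $s$'' is exactly your cut-vertex property at $e$. The paper's route has the advantage that, on its face, it needs nothing beyond uniqueness of the normal form $X^kw$.

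\textbf{One caution about your last paragraph.} Uniqueness of geodesics does not by itself force every path between two branches through the vertex. A graph can have unique geodesics and still contain longer paths that avoid a given vertex; an odd cycle is an example. A cleaner justification is the last-block dynamics. Appending a letter to a reduced word $c_1^{k_1}\cdots c_d^{k_d}$ can only do one of three things: append a new block, raise $k_d$ from $1$ to $2$, or delete the last block. So the first block can change only when $d\le 1$, and the only such moves are $e\xrightarrow{b}b$, $c\xrightarrow{c}c^2$, and $c^2\xrightarrow{c}e$. Take the ``subtree at $a$'' (respectively $a^2$) to be the reduced words whose first block is exactly $a^1$ (respectively $a^2$). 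This yields every separation claim you use, with no further geometry of the Cayley graph required.
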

\begin{proof}[Proof of Lemma.~\ref{thm:lemma-all-mobile}]
We first establish that there is a unique shortest prefix $aw_1aw_2a$ where $w_1$ and $w_2$ are all-mobile, then proceed to show that $w_1$ and $w_2$ are $a$-encompassed.

Since a finite string has a finite number of prefixes, the existence of a shortest prefix follows directly from the existence of at least one prefix of the given form. This existence is guaranteed since the string must be emptied by recursive removals of triplets, and the leading $a$ must be removed as part of an $aaa$ at some step; these three $a$'s would be the $a$'s in the pattern $aw_1aw_2a$, and $w_1$ and $w_2$ would just be the characters that were stacked between them at beginning, which must be all-mobile by construction. Hereby, we have established that a shortest prefix $aw_1aw_2a$ where $w_1$ and $w_2$ are all-mobile exists.

We further prove that the representation $a w_1 a w_2 a$ is unique for this prefix, since if $a w_1 a w_2 a = a u_1 a u_2 a$, assume $|w_1|<|u_1|$, then we must have $u_1=w_1 a v$, from which we can write $w=aw_1 avau_2 a$. Since $w_1$ and $u_2$ are all-mobile, we can remove them, and the remaining string $aavaa$ must still be all-mobile. This means that $v$ must have the normal form $X^maa$, which then follows that $v$ must be prefixed by $sa$, where $s$ is an all-mobile string. But then $w$ is prefixed by $aw_1 asa$, which is a shorter prefix of the desired form than $aw_1aw_2a$. This establishes the uniqueness of representation.

Finally, we show that $w_1$ cannot have a suffix that is a mobile string which begins with $a$. Suppose that is not true, then that suffix can be written as $a w_{11} aw_{12} aw_{13}$, where $w_{11}$, $w_{12}$, $w_{13}$ are all-mobile. Therefore, $w_1 = w_{10}a w_{11} aw_{12} aw_{13}$. But then $a w_{10} aw_{11} a$ would be a prefix of our desired pattern but of a shorter length. The same reasoning goes for $w_2$.
\end{proof}

\begin{proof}[Proof of Thm.~\ref{thm:all-mobile-string-number}]
A string that begins with $a$ must have a prefix $aw_1aw_2a$. Now consider the rest of the string after removing $aw_1 a w_2 a$. If it contains any suffix that is an all-mobile string that begins with $a$, take the longest of such suffix. The part not contained in this suffix would be an $a$-encompassed mobile string, which we call $w_3$. Hereby, we have decomposed our target string into $a w_1 a w_2 aw_3$, where $w_i$ are all $a$-encompassed strings, plus a shorter $a$-leading mobile string. 
\end{proof}

We then show a key property about encompassed strings.
\begin{theorem}
An $a$-encompassed string can be canonically represented as $bw_1bw_2bw_3$, where $b\neq a$, $w_1$ and $w_2$ are $b$-encompassed, and $w_3$ is $a$-encompassed. \label{thm:encompassed-string-ternary-tree}
\end{theorem}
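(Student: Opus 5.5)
The plan is to read off the decomposition from Lemma~\ref{thm:lemma-all-mobile}, applied with the first letter of the string playing the role of $a$ there. After that, two closure facts about all-mobile strings should give the remaining properties. Throughout, let $w$ be a nonempty $a$-encompassed string; the empty string is the trivial case and is excluded from the representation.

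\emph{Step 1: the first letter is not $a$.} Write $w=w_1w_2$ with $w_1$ empty and $w_2=w$. Both pieces are all-mobile, since the empty string counts as all-mobile. So if $w$ began with $a$, this splitting would violate $a$-encompassedness. Hence $w$ begins with some $b\neq a$.

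\emph{Step 2: the prefix.} Since $w$ is an all-mobile string starting with $b$, Lemma~\ref{thm:lemma-all-mobile} applies with $b$ in place of $a$. It gives a unique shortest prefix $P=bw_1bw_2b$ in which $w_1,w_2$ are all-mobile, and moreover $b$-encompassed. Define $w_3$ by $w=Pw_3$.

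\emph{Step 3: $w_3$ is all-mobile.} This is the one step that needs the algebra rather than the lemma, and I expect it to be the main (if mild) obstacle. First, $P$ is all-mobile: removing the all-mobile pieces $w_1,w_2$ leaves $bbb\to X$, so $P$ has normal form $X^m$ by centrality of $X$. Next, suppose $w_3$ has normal form $X^ju$ with $u$ frozen. Then by centrality $w=Pw_3$ has normal form $X^{m+j}u$. Since $w$ is all-mobile, its normal form is a pure power of $X$. Uniqueness of normal forms (Theorem~\ref{thm:triplet-flip-uniqueness}) therefore forces $u$ to be empty, so $w_3$ is all-mobile. For $q=2$ one can say this more quickly via the homomorphism to $\mathbb{Z}_3*\mathbb{Z}_3$: $f(w_3)=f(P)^{-1}f(w)=e$.

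\emph{Step 4: $w_3$ is $a$-encompassed.} Suppose instead that $w_3=u_1u_2$ with $u_1,u_2$ all-mobile and $u_2$ beginning with $a$. Then $w=(Pu_1)u_2$. Here $Pu_1$ is all-mobile, since its normal form is $X^{m}X^{j'}$, and $u_2$ is an all-mobile suffix of $w$ beginning with $a$. This contradicts the $a$-encompassedness of $w$.

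\emph{Step 5: canonicity.} The letter $b$ is forced as the first letter of $w$. The prefix $P$ and its splitting into $w_1,w_2$ are unique by the lemma. Finally, $w_3$ is determined as the complement of $P$ in $w$. Together these fix the representation $bw_1bw_2bw_3$ uniquely.
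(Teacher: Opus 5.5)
Your proof is correct and follows the paper's argument: the first letter $b\neq a$ comes from the definition, the prefix $bw_1bw_2b$ with $w_1,w_2$ $b$-encompassed comes from Lemma~\ref{thm:lemma-all-mobile}, $w_3$ inherits $a$-encompassedness from the whole string, and canonicity follows from the uniqueness in that lemma. Your Steps 3 and 4 also fill in two points the paper leaves implicit, namely that $w_3$ is itself all-mobile and that the prefix $Pu_1$ complementary to an offending suffix is all-mobile, both by uniqueness of the normal form $X^k w$.
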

\begin{proof}
It follows from the definition that an $a$-encompassed string cannot begin with $a$, therefore its first character $b\neq a$. Applying Lemma~\ref{thm:lemma-all-mobile}, it must have the form $bw_1 bw_2 bw_3$, where $w_1$ and $w_2$ are $b$-encompassed. That $w_3$ is $a$-encompassed follows from the fact that the whole string is $a$-encompassed, since any suffix of $w_3$ is an suffix of the whole string. Since the construction from Lemma~\ref{thm:lemma-all-mobile} is unique, the entire decomposition is canonical.
\end{proof}

Combining Thm.~\ref{thm:all-mobile-string-number} and Thm.~\ref{thm:encompassed-string-ternary-tree}, we obtain the following characterization of all-mobile strings.
\begin{theorem}
A bijection exists between all-mobile strings of length $3k$ over an alphabet of $q$ characters and colored ternary forests (A ternary tree is a tree where all internal (non-leaf) nodes have exactly three children. A ternary forest is an ordered collection of ternary trees) with $k$ internal nodes, where each internal node is colored with one of the $q$ characters, according to the following rules:
\begin{itemize}
  \item All roots nodes of the trees in the forests must have the same color;
  \item An internal node that is the first or second child of its parent must have different color from its parent;
  \item An internal node that is the third child of its parent must have different color from its ``effective parent'', defined recursively as follows:
  \begin{itemize}
    \item The effective parent of a root node is itself;
    \item The effective parent of an internal node that is the first or second child of its parent is its parent;
    \item The effective parent of an internal node that is the third child of its parent is the same as the effective parent of its parent.
  \end{itemize}
\end{itemize}
\label{thm:mobile-string-to-tree}
\end{theorem}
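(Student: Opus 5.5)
The bijection will be built recursively by unfolding Theorem~\ref{thm:all-mobile-string-number} and Theorem~\ref{thm:encompassed-string-ternary-tree}, and induction on length will show it is well defined and invertible.

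\textbf{Forward map.} Take an all-mobile string $s$ starting with $a$. Theorem~\ref{thm:all-mobile-string-number} writes $s=aw_1aw_2\cdots aw_{3n}$ uniquely, with each $w_i$ an $a$-encompassed string. Group the $3n$ blocks into $n$ consecutive triples $(aw_{3m-2},aw_{3m-1},aw_{3m})$. Each triple becomes one tree: its root is an internal node colored $a$, and its three children are the trees of $w_{3m-2},w_{3m-1},w_{3m}$ (an empty $w_i$ gives a leaf).

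\textbf{Trees of encompassed strings.} Let $w$ be a nonempty $c$-encompassed string. By Theorem~\ref{thm:encompassed-string-ternary-tree}, $w=bu_1bu_2bu_3$ canonically, with $b\neq c$, $u_1,u_2$ $b$-encompassed, and $u_3$ $c$-encompassed. The tree of $w$ is an internal node colored $b$ whose children are the trees of $u_1,u_2,u_3$. This node carries the $c$-constraint: $u_1,u_2$ inherit constraint $b$ (their parent's color), while $u_3$ inherits the same $c$ as $w$.

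\textbf{Checking the coloring rules.} This inheritance is exactly the ``effective parent'' rule. A node that is a first or second child must differ from its parent's color. A third child must differ from whatever its parent had to differ from, and unwinding that recursion gives the effective-parent chain. At the top level, children of a root colored $a$ must avoid $a$, since the root is its own effective parent. All roots share the color $a$. Every string character $b$ comes in three copies per internal node, so $|s|=3\cdot\#\text{internal}$ and $k$ internal nodes correspond to length $3k$.

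\textbf{Inverse map.} Given a forest satisfying the rules, reassemble it by reading each internal node of color $b$ with children $T_1,T_2,T_3$ as $b\,\mathrm{str}(T_1)\,b\,\mathrm{str}(T_2)\,b\,\mathrm{str}(T_3)$, and concatenating the trees in order. Two facts have to be shown, by induction on the number of internal nodes:
\begin{enumerate}
\item the string of a subtree whose root obeys constraint $c$ is genuinely $c$-encompassed and all-mobile;
\item the canonical decompositions of the two theorems recover the same tree.
\end{enumerate}
Fact 2 follows from the uniqueness statements in Lemma~\ref{thm:lemma-all-mobile} and Theorem~\ref{thm:encompassed-string-ternary-tree}, once Fact 1 holds.

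\textbf{Main obstacle.} The hard part is the converse direction in Fact 1. Given $b\neq c$, $b$-encompassed $u_1,u_2$, and $c$-encompassed $u_3$, I must show that $bu_1bu_2bu_3$ is $c$-encompassed. It is clearly all-mobile. I will argue by contradiction: suppose it has an all-mobile suffix starting with $c$. That suffix cannot lie inside $u_3$, since $u_3$ is $c$-encompassed. So it must start within $bu_1bu_2b$. Cut it at the last $b$ and reduce using the normal form (Theorem~\ref{thm:triplet-flip-uniqueness}) together with the group structure. The $b$ prefix pieces then have to cancel against a tail of $u_3$, so $u_3$ has a suffix whose normal form contains an isolated $b$, which contradicts $u_3$ being all-mobile.

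A second subtlety arises at the top level of the forest: the decomposition must split into blocks of exactly three, and the $n$ trees must correspond to the repeated application of Lemma~\ref{thm:lemma-all-mobile}. Both points come directly from the proof of Theorem~\ref{thm:all-mobile-string-number}.
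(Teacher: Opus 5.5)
Your construction is the paper's: unfold the top-level decomposition into triples of $a$-encompassed blocks, split each $c$-encompassed block recursively as $bu_1bu_2bu_3$, and pass the $c$-constraint to the third child. You are also right that the inverse direction needs your Fact 1. The paper's proof only checks that forest strings are all-mobile and that every all-mobile string yields a valid forest. It never shows that decomposing the string of a valid forest returns that forest.

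\textbf{The gap is in your argument for Fact 1.} The contradiction cannot come from $u_3$.
\begin{itemize}
\item Since $u_3$ is all-mobile, a suffix $v\,u_3$ (with $v$ a nonempty suffix of $bu_1bu_2b$) is all-mobile if and only if $v$ alone is. So nothing cancels against $u_3$.
\item A suffix of an all-mobile string may have nontrivial N3C part without contradicting anything. For example, the $c$-encompassed string $u_3=bbb$ has the suffix $b$.
\item Your argument never uses that $u_1,u_2$ are $b$-encompassed, and that hypothesis is indispensable. Take $c=a$, $u_1=bbaaab$ (all-mobile but not $b$-encompassed), and $u_2,u_3$ empty. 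Then $bu_1bu_2bu_3=bbbaaabbb$ has the $a$-leading all-mobile suffix $aaabbb$.
\end{itemize}

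\textbf{A repair.} Let $t$ be an all-mobile suffix of $bu_1bu_2bu_3$ beginning with $c$. It cannot start at one of the three $b$'s, since $c\neq b$, and it cannot start inside $u_3$.
\begin{itemize}
\item If $t$ starts inside $u_1=u_1'u_1''$, then $u_1''\,bb$ has empty N3C part. Hence $u_1''$ has N3C part $b$, and since $u_1$ is all-mobile, $u_1'$ has N3C part $bb$.
\item If $t$ starts inside $u_2=u_2'u_2''$, then $u_2''$ has N3C part $bb$, and so $u_2'$ has N3C part $b$.
\end{itemize}
Now use that the N3C part of successive prefixes behaves like a stack. Appending $x$ either pushes $x$, or, if the N3C part ends in $xx$, pops $xx$ and adds one factor of $X$. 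So the first letter of a nonempty N3C part is fixed at the last moment a letter was pushed onto an empty N3C part. Hence $u_1$ (resp.\ $u_2$) has an all-mobile prefix immediately followed by the letter $b$. Because $u_1$ (resp.\ $u_2$) is itself all-mobile, the rest of it is an all-mobile suffix beginning with $b$. This contradicts $b$-encompassedness.

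With Fact 1 proved this way, your Fact 2 goes through. If a representation $bu_1bu_2b\cdots$ with $b$-encompassed $u_1,u_2$ were not the shortest prefix from Lemma~\ref{thm:lemma-all-mobile}, then either $u_1$ or $u_2$, or the corresponding block of the shortest prefix, would split as (all-mobile)$\cdot b\cdots$. That violates encompassedness.
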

\begin{proof}
The coloring rules of the trees are exactly tailored to match the structures of all-mobile strings established in Thm.~\ref{thm:all-mobile-string-number} and Thm.~\ref{thm:encompassed-string-ternary-tree}. We define the mapping between forests and strings as follows:
\begin{itemize}
  \item Leaf nodes correspond to empty strings;
  \item A tree with the root node colored $a$ and three leaf nodes corresponds to the string $aaa$;
  \item A tree where the root node has color $a$, and the sub-trees on the three children corresponding to strings $w_1$, $w_2$, $w_3$, respectively, corresponds to the string $aw_1 aw_2 aw_3$;
  \item If the trees in a forest correspond to strings $w_1$, $w_2$, etc., then the forest corresponds to the string $w_1 w_2 w_3 \dots$.
\end{itemize}
By construction, the string corresponding to any such forest must be all-mobile. It suffices to show that each all-mobile string can map to a tree. To this end, we use Thm.~\ref{thm:all-mobile-string-number}, and decompose the all-mobile string into $aw_1 aw_2 aw_3 \dots a w_{3t}$. Then, we can construct $t$ trees, with root nodes all colored by $a$. Each $w_i$ can further be decomposed using its leading character. Notice how the coloring rules play their role here. The different color for first and second child rule is guaranteed since when we decompose $w=bu_1 bu_2 bu_3$, $u_1$ and $u_2$ cannot begin with $b$. The effective parent rule is satisfied since if $w$ if $c$-encompassed, then so is $u_3$. As we can iteratively use Thm.~\ref{thm:encompassed-string-ternary-tree}, we can always reduce the strings $w_i$ to a tree structure, and the argument above guarantees that the coloring rules are satisfied.
\end{proof}

Finally, we prove Eq.~\eqref{eq:num-of-all-mobile-strings}. The proof uses the language of \emph{generating functions}: given a sequence $\{a_n\}_{n\geq 0}$, we encode it as the formal power series $A(x)=\sum_n a_n x^n$ and write $[x^n]\,A(x)=a_n$ for its $n$-th coefficient. A recursion among the $a_n$ translates into an algebraic equation for $A(x)$, from which closed-form coefficients can be extracted. The key tool we will need is the following:

\begin{lemma}[Lagrange inversion]
\label{lem:lagrange}
If a formal power series $w(x)$ satisfies $w(x) = x\,\phi(w(x))$ with $\phi(0)\neq 0$, then for any analytic $f$ and $n\geq 1$,
\begin{equation}
[x^n]\,f(w(x)) = \frac{1}{n}\,[w^{n-1}]\,f'(w)\,\phi(w)^n\,.
\end{equation}
For a proof, see \eg\ Ref.~\cite{Stanley_Fomin_1999}.
\end{lemma}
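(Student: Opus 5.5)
The plan is the standard formal-residue proof. Throughout, $f$ is treated as a formal power series, since only finitely many of its coefficients enter $[x^n]\,f(w(x))$. Write $\operatorname{res}_x g$ for the coefficient of $x^{-1}$ in a formal Laurent series $g$.

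\emph{Step 0: the series $w(x)$ and its inverse.} Because $\phi(0)\neq 0$, the equation $w = x\,\phi(w)$ determines the coefficients of $w$ recursively, starting from $w(x) = \phi(0)\,x + O(x^2)$. So $w$ has order exactly one. Its compositional inverse is explicit: $x(w) = w/\phi(w)$, which is again a power series of order one.

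\emph{Step 1: reduce to $f'$.} Since $n \ge 1$, I would differentiate in $x$ and write
\begin{equation}
n\,[x^n]\,f(w(x)) = [x^{n-1}]\,\frac{d}{dx}f(w(x)) = \operatorname{res}_x\, f'(w(x))\,w'(x)\,x^{-n}.
\end{equation}
Using $x = w(x)/\phi(w(x))$, the integrand becomes $g(w(x))\,w'(x)$, where $g(w) = f'(w)\,\phi(w)^n\,w^{-n}$ is a formal Laurent series in $w$.

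\emph{Step 2: change of variables for residues (the main obstacle).} The key lemma is that, for $w$ of order one and any formal Laurent series $g$,
\begin{equation}
\operatorname{res}_x\, g(w(x))\,w'(x) = \operatorname{res}_w\, g(w).
\end{equation}
Both sides are linear in $g$, and only finitely many negative powers occur, so it suffices to check monomials $g = w^k$.
\begin{itemize}
\item For $k \neq -1$, the integrand $w^k w'$ equals $\frac{d}{dx}\bigl(w^{k+1}/(k+1)\bigr)$, and a derivative has zero residue.
\item For $k = -1$, writing $w = c\,x\,(1 + O(x))$ gives $w'/w = 1/x + (\text{power series})$, whose residue is $1$.
\end{itemize}
The delicate points are that $w^{-1}$ is a genuine Laurent series (true because $w$ has order one) and that the sum over $k$ is finite in negative degrees. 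I would handle both carefully, because this is where the formal manipulations need justification.

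\emph{Step 3: conclude.} Applying Step 2 to the expression from Step 1 gives
\begin{equation}
n\,[x^n]\,f(w(x)) = \operatorname{res}_w\, f'(w)\,\phi(w)^n\,w^{-n} = [w^{n-1}]\,f'(w)\,\phi(w)^n.
\end{equation}
Dividing by $n$ gives the claimed formula.
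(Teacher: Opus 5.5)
Your formal-residue argument is correct. The paper does not prove this lemma itself; it cites Stanley, whose proof is essentially the same standard argument, resting on the two facts you use: a derivative has zero residue, and $\operatorname{res}_x w'/w = 1$ for $w$ of order one. In Step 2 you should state explicitly that the infinite nonnegative-power part of $g$ composes to an honest power series in $x$. It therefore contributes zero residue on both sides, which makes the reduction to finitely many monomials legitimate.
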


\begin{proof}[Proof of Eq.~\eqref{eq:num-of-all-mobile-strings}]
By Theorem~\ref{thm:mobile-string-to-tree}, $D_k(3k)$ equals $q$ times the number of valid colored ternary forests with $k$ internal nodes and a fixed root color. We now count these forests.

Let $T_m$ denote the number of encompassed strings with $m$ internal nodes (equivalently, the number of valid colored sub-trees with $m$ internal nodes in the bijection of Theorem~\ref{thm:mobile-string-to-tree}), and define $T(x)=\sum_{m\geq 0} T_m\,x^m$. By convention $T_0=1$, corresponding to the empty string (a leaf). For $m\geq 1$, Theorem~\ref{thm:encompassed-string-ternary-tree} decomposes each encompassed string as $bw_1 bw_2 bw_3$, where $b$ can be any of $(q-1)$ colors distinct from the forbidden one, and $w_1,w_2,w_3$ are themselves encompassed strings with a total of $m-1$ internal nodes. Summing over all ways to distribute these nodes among the three sub-strings, the recursion reads
\begin{equation}
T_m = (q-1)\sum_{m_1+m_2+m_3=m-1}T_{m_1}\,T_{m_2}\,T_{m_3}\,,
\end{equation}
which translates into
\begin{equation}
T(x) = 1+(q-1)\,x\,T(x)^3. \label{eq:T-functional}
\end{equation}

Now we count forests. By Theorem~\ref{thm:all-mobile-string-number}, an all-mobile string starting with character $a$ decomposes as $aw_1\,aw_2\,aw_3\cdots aw_{3n}$, which groups into $n$ trees: $(aw_1\,aw_2\,aw_3)(aw_4\,aw_5\,aw_6)\cdots$. Let $F_k$ be the number of such forests with $k$ internal nodes and a fixed root color, so that $D_k(3k) = q\,F_k$. A single tree has $k_i\geq 1$ internal nodes: one from the root $a$ and $k_i-1$ distributed among its three encompassed sub-trees $w_1,w_2,w_3$. Summing over all such distributions, the number of single-tree configurations with $k_i$ internal nodes is $\sum_{m_1+m_2+m_3=k_i-1}T_{m_1}T_{m_2}T_{m_3} = [x^{k_i-1}]\,T(x)^3$, or equivalently $[x^{k_i}]\,x\,T(x)^3$. A forest of $n$ trees partitions $k$ nodes among the trees, so
\begin{equation}
F_k = \sum_{n\geq 0}\sum_{k_1+\cdots+k_n=k}\prod_{i=1}^{n}[x^{k_i}]\,xT^3 = [x^k]\,\frac{1}{1-xT(x)^3}\,.
\end{equation}
Therefore,
\begin{equation}
\sum_{k\geq 0}D_k(3k)\,x^k = \frac{q}{1-x\,T(x)^3}\,.
\end{equation}
Expanding the geometric series,
\begin{equation}
D_k(3k)=q\sum_{n=0}^{k}[x^{k-n}]\,T(x)^{3n}\,. \label{eq:Dk-geometric}
\end{equation}
To apply Lemma~\ref{lem:lagrange}, set $w(x)=T(x)-1$ so that Eq.~\eqref{eq:T-functional} becomes $w(x) = x\,(q-1)(1+w(x))^3$, which is in the standard form $w(x)=x\,\phi(w(x))$ with $\phi(w)=(q-1)(1+w)^3$. Taking $f(w)=(1+w)^{3n}=T^{3n}$, the lemma gives
\begin{equation}
[x^m]\,T(x)^{3n} = \frac{1}{m}\,[w^{m-1}]\,3n(1+w)^{3n-1}\cdot(q-1)^m(1+w)^{3m} = \frac{3n}{m}(q-1)^m\binom{3m+3n-1}{m-1}\,. \label{eq:lagrange-applied}
\end{equation}
Using the identity $\frac{1}{m}\binom{3m+3n-1}{m-1}=\frac{1}{3m+3n}\binom{3m+3n}{m}$, this simplifies to $[x^m]\,T^{3n}=\frac{n}{m+n}\binom{3m+3n}{m}(q-1)^m$.
Substituting into Eq.~\eqref{eq:Dk-geometric},
\begin{equation}
D_k(3k) = q\sum_{n=0}^{k}\frac{n}{k}\binom{3k}{k-n}(q-1)^{k-n} = \frac{q}{k}\sum_{j=0}^{k-1}(k-j)(q-1)^j\binom{3k}{j}\,,
\end{equation}
where the $j=k$ term vanishes since $k-j=0$.
\end{proof}

Finally, we verify that this implies the formula Eq.~\eqref{eq:general-DkL-triplet-flip} for general $D_k(L)$.

\begin{proof}[Proof of Eq.~\eqref{eq:general-DkL-triplet-flip}]
One can verify that Eq.~\eqref{eq:general-DkL-triplet-flip} satisfies $D_0(L)=1$. The result is to verify Eq.~\eqref{eq:recurrence-relation} and Eq.~\eqref{eq:num-of-all-mobile-strings}.

Let $\Delta_L D_k(L) = D_k(L+1)-D_k(L)$. Using the identity $\binom{L+1}{m} - \binom{L}{m} = \binom{L}{m-1}$, 
\begin{align}
    \Delta_L D_k(L) &= (q-1)^k \binom{L}{k-1} - (2q-3) \sum_{j=1}^{k-1} (q-1)^j \binom{L}{j-1} \nonumber \\
    &= (q-1) \left[ (q-1)^{k-1} \binom{L}{k-1} -(2q-3) \sum_{j=0}^{k-2} (q-1)^j \binom{L}{j} \right] \nonumber \\
    &= (q-1) D_{k-1}(L).
\end{align}
This exactly matches the required Eq.~\eqref{eq:recurrence-relation}.

Next, we verify that $D_k(L=3k)$ reduces to the form of  Eq.~\eqref{eq:num-of-all-mobile-strings}. Notice that $2q-3=2(q-1)-1$, we can write
\begin{align}
D_k(L=3k) & = \sum_{j=0}^{k}(q-1)^j\begin{pmatrix}3k \\ j\end{pmatrix} - 2\sum_{j=0}^{k-1}(q-1)^{j+1}\begin{pmatrix}3k \\ j\end{pmatrix}. \nonumber \\
& = 1 + \sum_{j=1}^k (q-1)^j \left[\begin{pmatrix}3k \\ j\end{pmatrix} - 2 \begin{pmatrix}3k \\ j-1\end{pmatrix}\right]
\end{align}
At the same time, Eq.~\eqref{eq:num-of-all-mobile-strings} can be written as
\begin{align}
D_k(3k) & = \frac{(q-1)+1}{k} \sum_{j=0}^{k-1}(q-1)^j(k-j) \begin{pmatrix}3k \\ j\end{pmatrix} \nonumber \\
& =1 + \sum_{j=1}^{k}(q-1)^j \left[\left(1-\frac{j}{k}\right) \begin{pmatrix}3k \\ j\end{pmatrix} + \left(1-\frac{j-1}{k}\right) \begin{pmatrix}3k \\ j-1\end{pmatrix}\right].
\end{align}
By direct comparison,
\begin{align}
& \left[\left(1-\frac{j}{k}\right) \begin{pmatrix}3k \\ j\end{pmatrix} + \left(1-\frac{j-1}{k}\right) \begin{pmatrix}3k \\ j-1\end{pmatrix}\right] - \left[\begin{pmatrix}3k \\ j\end{pmatrix} - 2 \begin{pmatrix}3k \\ j-1\end{pmatrix}\right] \nonumber \\
= & \left(3-\frac{j-1}{k}\right)\begin{pmatrix}3k \\ j-1\end{pmatrix} - \frac{j}{k} \begin{pmatrix}3k \\ j\end{pmatrix} \nonumber \\
= & \frac{1}{k} \left[(3k-j+1) \frac{(3k)!}{(j-1)!(3k-j+1)!} - j \frac{(3k)!}{j!(3k-j)!}\right] \nonumber \\ = & 0.
\end{align}
Therefore, $D_k(L=3k)$ reduces to Eq.~\eqref{eq:num-of-all-mobile-strings}.
\end{proof}

\subsection{Cyclic qutrit model: Connectivity of Single-triplet Krylovs}

\begin{theorem}
If $N^a,N^b\geq 1$ and $N^c=1$, then all the strings that are not frozen with the same $(N^0,N^1,N^2,D)$ lie in the same orbit. \label{thm:single-triplet-family-1-connectivity}
\end{theorem}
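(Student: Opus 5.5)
Since $N^c=1$, every string in question is a word over $\{a,b\}$ with a single $c$ inserted somewhere. A triplet pattern (a permutation of $abc$) can only occur in the three-site windows containing that $c$. The plan is to reduce each such string to a canonical representative determined solely by $(N^a,N^b,D)$, and to show that every non-frozen string reaches it by the rewrite rules.

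\textbf{Step 1: the moves.} A non-frozen string contains a window $xcy$, $cxy$ or $xyc$ with $\{x,y\}=\{a,b\}$. The rules permute this window cyclically within its class ($abc\to bca\to cab$, and $acb\to cba\to bac$). Combined, these let the $c$ hop two sites, carrying the pair $xy$ across it, possibly with order changes. I would also use the fusion relation Eq.~\eqref{eq:fusion-1}, $Yab=Xba$: in context, a string $\dots abc\,ba\dots$ can be re-read around the $c$ so that the adjacent pair becomes $ab$, and a transposition $ba\leftrightarrow ab$ is realized next to the triplet. So the $c$ acts as a mobile catalyst.

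\textbf{Step 2: the invariant.} Since $N^c=1$, Eq.~\eqref{eq:D-concatenation} gives $D$ as a function of the position of $c$ together with the inversion count of the $\{a,b\}$ word (the number of pairs $b$ before $a$, relative to $a$ before $b$). The $X\leftrightarrow Y$ conversion changes the inversion count by one while compensating through the triplet type. So for fixed $(N^a,N^b)$, $D$ pins down one linear combination of (position of $c$, inversion number).

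\textbf{Step 3: canonical form.} The candidate is $c$ at the left end followed by a fixed word, e.g.\ $cab\,a^{p}b^{r}$ arranged to have the required inversion number (or $cba\dots$ when the parity forces it). The procedure for an arbitrary non-frozen string has three parts:
\begin{itemize}
\item[(i)] move the active triplet leftward by cyclic moves, dragging the $c$ to the front. This is always possible because $c$ sits in an active window, and moving it keeps a window active as long as the two neighbouring letters differ.
\item[(ii)] if the two letters carried along coincide (e.g.\ $aac$), first use a fusion/transposition to pick up a different letter. This works because $N^a,N^b\ge1$ guarantees a differing letter exists somewhere, and the catalyst can travel to it.
\item[(iii)] sort the $\{a,b\}$ remainder by transpositions performed at the catalyst, sweeping it back and forth, so the remainder reaches the sorted form dictated by $D$.
\end{itemize}
Because every move is reversible, two strings with the same invariants reduce to the same canonical word and therefore lie in one orbit.

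\textbf{Main obstacle.} The hard part is step (ii)/(iii): showing the catalyst never gets stuck in a frozen configuration mid-journey (e.g.\ between long runs $aaa\,c\,bbb$), and that the reachable inversion numbers realize every value compatible with $D$. I would first handle stuck situations by backtracking the triplet to a boundary between an $a$-run and a $b$-run. Such a boundary exists whenever the string is non-frozen, because an active window already contains both $a$ and $b$. I would then verify by induction on $L$ that any adjacent transposition of the $\{a,b\}$ word can be effected, decomposing it as a product of the moves in Step 1.
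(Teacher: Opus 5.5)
You have the right ingredients: for $N^c=1$ every non-frozen string is $Xw$ or $Yw$ with $w\in\{a,b\}^{L-3}$, the relation $Xba=Yab$ trades one transposition of $w$ for a flip of the triplet type, and $D$ is conserved. But the proposal stops exactly where the content of the theorem begins. Write $D=D_T+D_w$ with $D_X=-1$, $D_Y=+1$ and $D_w=\#(ba)-\#(ab)=2\mu(w)-m(\ell+1)$, where $\mu(w)$ is the sum of the positions of the $m$ letters $a$ in $w$ and $\ell=|w|$. The theorem then amounts to showing that all words $w$ with the same length, $a$-count and $\mu$ are mutually reachable, and that is precisely what you defer to your ``main obstacle''. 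The plan you sketch there cannot work. A single adjacent transposition of $w$ shifts $D_w$ by $\pm 2$, so it is never realizable at fixed triplet type, and you cannot ``sort'' $w$ toward a canonical word by transpositions. From $Xw$ the only available move is some $ba\to ab$ (landing in a $Y$-form), and from $Y\tilde w$ only some $ab\to ba$ (back to an $X$-form). So the net moves between $X$-forms are pairs consisting of one $ab\to ba$ and one disjoint $ba\to ab$. What is missing is a lemma that these $\mu$-preserving paired swaps act transitively on each level set of $\mu$; an induction on single transpositions has nothing to induct on.

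The difficulty you anticipate (a stuck catalyst) is not real. The triplets are central: the string $abca$ reads both as $abc\cdot a$ and as $a\cdot bca$, and $abcb\to bcab=b\cdot cab$. So a triplet, which always contains one $a$, one $b$ and the $c$, passes through every letter. The two halves of a paired swap can therefore sit anywhere in $w$, and no backtracking or tracking of the position of $c$ is needed. Your Step~2 formula in terms of the position of $c$ is also incomplete, since $D$ then depends as well on how many $a$'s precede $c$; the decomposition $D=D_T+D_w$ does not depend on where the triplet sits.

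The lemma follows from the greedy argument the paper uses. Take $w_1\neq w_2$ with equal $\mu$ and sorted $a$-positions $a^{w_1}_i$, $a^{w_2}_i$. Both $J=\{j:a^{w_1}_j>a^{w_2}_j\}$ and $K=\{k:a^{w_1}_k<a^{w_2}_k\}$ are nonempty. For $j=\max J$ the site $a^{w_2}_j+1$ of $w_2$ holds a $b$, giving an $ab$; for $k=\min K$ the site $a^{w_2}_k-1$ holds a $b$, giving a $ba$. The two windows are disjoint: if $j<k$ then $a^{w_2}_j+1\le a^{w_1}_j<a^{w_1}_k\le a^{w_2}_k-1$, and if $j>k$ the $ba$ window lies to the left. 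Turning that $ab$ into $ba$ and that $ba$ into $ab$ lowers $\sum_i|a^{w_1}_i-a^{w_2}_i|$ by $2$.

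To finish you also need to connect $X$- and $Y$-forms with the same $D$. If $w$ contains an $ab$, then $Yw=Xw'$ for some $w'$. If it does not, then $w=b^{N^b-1}a^{N^a-1}$ has maximal $\mu$, so no $X$-form shares its $D$.
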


Since we consider non-frozen strings, and that $N^c=1$, any such string must be representable as $Xw$ or $Yw$, where $w$ contains only $a$ and $b$.
Let us consider starting with a string $Xw$. Then any string $Xw^\prime$ that lies in its Krylov must be attainable by repeatedly going through $Xw \to Y \tilde w \to X \tilde{\tilde w}$, where $\tilde w$ is $w$ with an $ba$ replaced by $ab$, and $\tilde{\tilde w}$ is $\tilde w$ with an $ab$ replaced by $ba$. Therefore, $\tilde{\tilde{w}}$ would always be equal to choosing an $ab$ pattern and a $ba$ pattern (they have to be disjoint) and then swapping them. We call this the non-local dipole-conserving update rule. For convenience, we will define the dipole moment of $a$ in $w$ as $\mu(w)=\sum_i a_i^w$, in which $a^w_1,\dots,a^w_m$ are the positions of $a$ characters in a string $w$, sorted in ascending order. For example, if $w=abbab$, then $m=2$, $a^w_1=1$ and $a^w_2=4$. The dipole moment is related to $D$ as follows. Since $w$ contains only $a$ and $b$, the general expression for $D$ reduces to $D_w = \sum_{i<j}(n_i^b n_j^a - n_i^a n_j^b)$, which counts $(ba)$ pairs minus $(ab)$ pairs. For each $a$ at position $a^w_k$, there are $a^w_k - k$ copies of $b$ preceding it, so
\begin{equation}
D_w = 2\sum_{k=1}^{m}(a^w_k - k) - m(\ell - m) = 2\mu(w) - m(\ell+1)\,,
\end{equation}
where $\ell = |w|$ is the length of the string $w$. By Eq.~\eqref{eq:D-concatenation}, $D_{Xw} = D_w - 1$ (similarly $D_{Yw} = D_w + 1$). Writing $\ell = L - 3$ and $m = N^a - 1$, we obtain
\begin{equation}
\mu(w) = \frac{D \pm 1 + (N^a - 1)(L-2)}{2}\,,
\end{equation}
where the sign is $+1$ for $Xw$ and $-1$ for $Yw$. Therefore, the dipole moment $\mu$ is essentially $D$ shifted by another constant.

\begin{theorem}
Two strings $w_1$ and $w_2$ consisting of characters $a$ and $b$ can be cast into each other via repeated application of the non-local dipole-conserving update rule if and only if they are of the same length, have the same number of $a$ characters, and the same dipole moment for $a$.
\end{theorem}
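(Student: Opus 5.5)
Encode each binary string $w$ of length $\ell$ with $m$ copies of $a$ by its sorted position vector $(a^w_1,\dots,a^w_m)$, with $1\le a^w_1<\dots<a^w_m\le \ell$ and $\mu(w)=\sum_k a^w_k$. In this language, replacing an $ab$ at positions $(i,i+1)$ by $ba$ moves one $a$ from $i$ to $i+1$. Replacing a disjoint $ba$ at $(j,j+1)$ by $ab$ moves one $a$ from $j+1$ to $j$. The non-local dipole-conserving update therefore shifts one $a$ right by one and another $a$ left by one, and the ordering of the $a$'s is preserved. This proves the ``only if'' direction immediately: length and $m$ are obviously invariant, and $\mu$ changes by $+1-1=0$.

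For the converse, I would fix $(\ell,m,\mu)$ and show that every admissible vector can be connected to a single canonical one. The canonical vector would be the ``most balanced'' configuration: among all increasing vectors with sum $\mu$, the lexicographically largest. Equivalently, the $a$'s are packed as far right as possible from the left end subject to the sum. The reduction step is as follows. Given a vector $x$ that is not canonical, find an index $k$ such that $x_k+1$ is free (either $x_{k+1}>x_k+1$ or $k=m$ with $x_m<\ell$) and an index $k'\neq k$ such that $x_{k'}-1$ is free. The pattern at $x_k$ is then $ab$, the pattern at $x_{k'}-1$ is $ba$, and we perform the update. I would monotonically increase a potential such as $\sum_k k\,x_k$, or move toward lexicographic order, and check that the chosen pair makes progress. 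Alternatively, I would move toward a compact ``two-block'' normal form: the $a$'s occupy a contiguous block except for at most one gap. I would then argue that the normal form is determined uniquely by $(\ell,m,\mu)$, because sliding a contiguous block with one hole realizes each attainable sum exactly once.

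The main obstacle is the disjointness requirement. The $ab$ and $ba$ patterns used in one update must not overlap, so a pattern like $aba$, where the single $b$ is shared, cannot be used directly. A further issue is edge cases where only one ``mobile'' $a$ exists. For example, all $a$'s may form a contiguous block except a single gap, or the strings may be extreme ones with $\mu$ minimal or maximal, which are unique anyway. To handle these cases, I would show that whenever two distinct vectors share $(m,\mu)$, there exist indices $k<k'$ with $y_k>x_k$ and $y_{k'}<x_{k'}$. I would then choose the \emph{first} index where $x$ and $y$ differ and the \emph{last} such index. After that, I would verify that the needed right-move and left-move are available and disjoint. If they would overlap, in the $aba$ situation, I would substitute a neighboring $a$ in the same run. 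Where necessary I would do induction on $\sum_k|x_k-y_k|$, which the update decreases by $2$. Small cases ($m=1$, or $m=\ell-1$) I would check separately, since there the invariants already pin down the string uniquely.
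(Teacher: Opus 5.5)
\emph{Verdict: the proposal has a genuine gap.} Your ``only if'' argument is fine. Your fallback plan---induction on $\sum_k|x_k-y_k|$, with each update moving one $a$ right and another left toward their targets---is exactly the paper's strategy.

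\textbf{The gap.} The converse rests entirely on one step: for every $x\neq y$ with equal invariants, you must exhibit an update on $x$ whose right-move and left-move are both available (the target site is a $b$), disjoint, and directed toward $y$. The devices you propose for this step do not work.
\begin{itemize}
\item The claim that there always exist $k<k'$ with $y_k>x_k$ and $y_{k'}<x_{k'}$ is false, e.g.\ $x=(2,3)$, $y=(1,4)$. Equality of the sums only gives that $J=\{k:y_k>x_k\}$ and $K=\{k:y_k<x_k\}$ are both nonempty. Your claim does hold after possibly exchanging $x$ and $y$, but availability is still unaddressed.
\item Pairing the first and last differing indices need not give opposite directions. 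For $x=(1,5,9)$, $y=(2,3,10)$, both $x_1$ and $x_3$ must move right, so no dipole-conserving update acts on that pair.
\item The first index of $J$ can be blocked. For $x=(1,2,6)$, $y=(2,3,4)$, the site $x_1+1$ is occupied.
\item Your canonical-form route leaves the same step open (``check that the chosen pair makes progress''). Moreover, the potential $\sum_k k\,x_k$ increases only when the later $a$ moves right. It therefore drives toward the most spread-out configuration, the opposite of the lexicographically largest (most compact) one you declare canonical.
\end{itemize}

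\textbf{The missing idea} is to take indices at the ends of runs.

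Choose $j_*\in J$ with $j_*+1\notin J$ (e.g.\ $j_*=\max J$). Then $x_{j_*}<y_{j_*}<y_{j_*+1}\le x_{j_*+1}$, so site $x_{j_*}+1$ is a $b$; if $j_*=m$, use $x_m<y_m\le\ell$. This is the correct form of your ``substitute a neighboring $a$ in the same run'': if $k\in J$ and $x_{k+1}=x_k+1$, then $y_{k+1}>y_k>x_k$ forces $k+1\in J$ too. It is needed for blocking, not overlap.

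Symmetrically, choose $k_*\in K$ with $k_*-1\notin K$, so site $x_{k_*}-1$ is a $b$.

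Disjointness is then automatic:
\begin{itemize}
\item If $j_*>k_*$, the left-move window lies entirely to the left of the right-move window.
\item If $j_*<k_*$, then $x_{j_*}<y_{j_*}<y_{k_*}<x_{k_*}$ gives $x_{k_*}-x_{j_*}\ge 3$. So the $aba$ overlap you worried about can never occur between a right-mover and a later left-mover.
\end{itemize}
This is exactly how the paper makes $\sum_k|x_k-y_k|$ drop by $2$ at each step.

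Your canonical-form route can also be closed. While the span $[x_1,x_m]$ contains at least two $b$'s, move right the $a$ just left of the first such $b$, and move left the $a$ just right of the last one. These two $a$'s are at distance at least $3$, and $\sum_k x_k^2$ drops by at least $4$. Updates are reversible, and the terminal configuration (a contiguous block with at most one interior hole) is unique for given $(\ell,m,\mu)$, so connectivity follows.
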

\begin{proof}
``Only if'' is obvious. To show ``if'', we show that we can design a greedy algorithm to convert string $w_2$ into $w_1$ if they have the same conserved quantities. Let both strings be of length $L$, have $m$ characters of $a$, and the positions of $a$ characters in a string $w$ be $a^w_1,\dots,a^w_m$. Let us define the distance between two strings as $d(w_1,w_2)=\sum_i |a_i^{w_1}-a_i^{w_2}|$. We show that whenever $d(w_1,w_2)>0$, we can find an update on $w_2$ such that the resulting string $w_2^\prime$ has $d(w_1,w_2^\prime)<d(w_1,w_2)$.

To this end, define $J=\{j|a_j^{w_1}>a_j^{w_2}\}$ and $K=\{k|a_k^{w_1}<a_k^{w_2}\}$. Then, suppose we choose $j _\ast\in J$ and $k_\ast\in K$, and we swap $ab$ at positions $(a_{j_\ast}^{w_2},a_{j_\ast}^{w_2}+1)$ with $ba$ at positions $(a_{k_\ast}^{w_2}-1,a_{k_\ast}^{w_2})$ for string $w_2$, then the resulting string would have a strictly smaller distance with $w_1$. We then show such $j_\ast$ and $k_\ast$ must exist, and $|a_{j_\ast}^{w_2}+1-a_{k_\ast}^{w_2}| \geq 2$.

In fact, suppose that $j_\ast \in J$ and $j_\ast+1\notin J$, then $a_{j_\ast}^{w_2} < a_{j_\ast}^{w_1} < a_{{j_\ast}+1}^{w_1} \leq a_{{j_\ast}+1}^{w_2}$, meaning that $a_{{j_\ast}+1}^{w_2} - a_{j_\ast} ^{w_2} > 1$, so the character in $w_2$ at position $a_{j_\ast}^{w_2}+1$ must be $b$. If $m\in J$, then $a_m^{w_2}+1\leq L$, and it must contain a $b$, so $m$ is also a valid choice of $j_\ast$. Therefore, there exists at least one $j_\ast\in J$ such that $(a_{j_\ast}^{w_2},a_{j_\ast}^{w_2}+1)$ is $ab$.  Similarly, if $k_\ast\in K$ and $k_\ast-1\notin K$, or $k_\ast = 1 \in K$, then $(a_{k_\ast}^{w_2}-1,a_{k_\ast}^{w_2})$ must be $ba$.

We then show that there exists at least one choice of $j_\ast$ and $k_\ast$ such that $|a_{j_\ast}^{w_2}+1-a_{k_\ast}^{w_2}|\geq 2$. Notice that $J$ and $K$ are distinct, $j_\ast \neq k_\ast$, so $a_{j_\ast}^{w_2}+1-a_{k_\ast}^{w_2}\neq 1$. Also, since we have shown $a_{k_\ast}^{w_2}-1$ is a $b$, $a_{j_\ast}^{w_2}-(a_{k_\ast}^{w_2}-1)\neq 0$. Also, if $j<k$, then $a_{j}^{w_2}<a_j^{w_1}<a_k^{w_1}<a_k^{w_2}$, which means that $a_j^{w_2} < a_k^{w_2}-2$, therefore $a_{j_\ast}^{w_2}+1-a_{k_\ast}^{w_2}\neq -1$. This establishes the desired result.
\end{proof}

This proof can be naturally generalized to show that the moment-$\mu$ modifications of a base string lie in the same Krylov in the second family of strings.

\subsection{Cyclic qutrit model: Multi-triplet sectors}
\label{sec:multi-triplet-argument}

In this subsection, we argue that all non-frozen, non-single-triplet strings with the same $(N^0,N^1,N^2,D)$ lie in the same Krylov subspace. Since the rewriting rules preserve $(N^0,N^1,N^2,D)$, strings with different invariants are certainly disconnected. The non-trivial direction is to show that strings sharing the same invariants \emph{are} connected. We do this by constructing a reduced representation of the dynamics and sketching a reduction to canonical form.

For any string, take one of its canonical form $X^{m_X} Y^{m_Y} w$. We can always treat $w$ as a suffix akin to the construction in the second family single-triplet sectors: let it be obtained from a base string $a^{m_1} b^{m_2} \dots$ by $\mu $ transpositions. We will lift the constraint $m_i \geq \mu+2$, and allow the $m_i$ sequence to extend in both directions. In this way, any string can be represented by a state $(m_X,m_Y,\mu,(m_i))$, where $m_X$ and $m_Y$ count the number of $X$ and $Y$ triplets, $(m_i)_{i\in\mathbb{Z}}$ is a finitely supported sequence of non-negative integers encoding the frozen part, and $\mu$ is the inversion number of the frozen string relative to a perfectly ordered configuration. All entries must remain non-negative. The invariants $(N^0,N^1,N^2,D)$ are determined by these variables: the total mass $M = m_X+m_Y+\frac{1}{3}\sum_i m_i$ and the mod-3 distribution of $(m_i)$ encode $N^0,N^1,N^2$, while $D$ is a function of $m_X,m_Y,\mu$ and the $(m_i)$.

The dynamics in this representation consists of four processes (assuming the base string is ordered as $abcabc\dots$; the opposite case is analogous):
\begin{enumerate}
    \item $m_X \to m_X-1$, $m_Y\to m_Y+1$, $\mu \to \mu-1$, $(m_i)$ unchanged;
    \item $m_X \to m_X+1$, $m_Y\to m_Y-1$, $\mu \to \mu+1$, $(m_i)$ unchanged (the inverse of 1);
    \item $m_X \to m_X-1$, $(m_{i-1},m_i,m_{i+1})\to (m_{i-1}+1,m_i+1,m_{i+1}+1)$, $\mu \to \mu+m_i$ (expanding a triplet into the frozen string);
    \item $m_X \to m_X+1$, $(m_{i-1},m_i,m_{i+1})\to (m_{i-1}-1,m_i-1,m_{i+1}-1)$, $\mu \to \mu+1-m_i$ (absorbing a triplet from the frozen string; inverse of 3).
\end{enumerate}
Processes~1 and~2 interconvert $X$ and $Y$ triplets. Processes~3 and~4 transfer mass between the triplet reservoir and the frozen string. In the single-triplet sector, process~4 is forbidden by the constraint $m_i\geq\mu+2$; relaxing this constraint is precisely what allows multi-triplet dynamics. Notice that all these processes can only happen when $m_X+m_Y \geq 1$, since for frozen strings there is no mobility at all.

We now sketch a reduction showing that any multi-triplet state can be brought to a canonical form depending only on $(N^0,N^1,N^2,D)$. First, if $m_Y > 0$, we can repeatedly apply process~2 until $m_Y=0$. This is always legal since $m_Y\geq 1$ is the only prerequisite; $\mu$ increases and remains non-negative. By our assumption, one must have $m_X \geq 2$.

We will show that by repeatedly applying processes 3 and 4, we can reduce the support of $(m_i)$ to contain only $(m_1,m_2,m_3)$, at most. This suffices to prove our statement, as the fragmentation in the second-family single-triplet sector is exactly due to the fact that we can have arbitrarily long $(m_i)$.

In fact, consider applying process 3 on sites $(m_i,m_{i+1},m_{i+2})$, then applying process 4 on sites $(m_{i+1},m_{i+2},m_{i+3})$. This would result in $m_i\mapsto m_i+1$, $m_{i+3}\mapsto m_{i+3}-1$. This process will result in $\mu \mapsto \mu + m_{i+1} - m_{i+2}$, and is only possible when the final $\mu$ is positive. If $m_{i+1} \geq m_{i+2}$, this is always possible. Therefore, whenever $m_{i+1} \geq m_{i+2}$, we can always repeatedly do this to send $(m_i,m_{i+1},m_{i+2},m_{i+3}) \mapsto (m_i+m_{i+3},m_{i+1},m_{i+2},0)$. Doing this on sites $i+1$ towards $i+4$, and using $m_{i+2} \geq 0$, this will send $(m_i,m_{i+1},m_{i+2},m_{i+3},m_{i+4}) \mapsto (m_i+m_{i+3},m_{i+1}+m_{i+4},m_{i+2},0,0)$. Now two consecutive zeros mean that $i+5$ and $i+2$ are essentially the same block. Therefore, we can merge them, ultimately achieving
\begin{equation}
(m_i,m_{i+1},m_{i+2},m_{i+3},m_{i+4},m_{i+5}) \mapsto (m_i+m_{i+3},m_{i+1}+m_{i+4},m_{i+2}+m_{i+5}).
\end{equation}
The $\mu$ value will not decrease throughout this process. A symmetric operation can be done if $m_{i+1} \leq m_{i+2}$. Repeatedly applying this, we can reduce the support of $(m_i)$ to contain only $(m_1,m_2,m_3)$.

\subsection{The \texorpdfstring{$m$}{m}GOE gap-ratio distribution}
\label{app:mgoe}
 
When $m$ independent GOE blocks are superposed without resolving block labels, the gap-ratio distribution $P_m(r)$ interpolates between GOE ($m=1$) and Poisson ($m\to\infty$). We present the full derivation following Ref.~\cite{Giraud2022}.

\begin{theorem} Let $X_1,\dots,X_m$ be independent stationary unfolded spectra on $\mathbb{R}$ with intensities $\mu_i > 0$, $\sum_{i=1}^m \mu_i = 1$. Assume all blocks share the same local statistics up to rescaling by their densities. Let $p_2(u,v)$ denote the joint density of the left and right consecutive gaps around a typical level of a single unfolded block, and let $p_1(L) \equiv \int_0^\infty p_2(L,v)\,dv$ be the one-gap marginal density obtained by integrating out the second spacing. Define
\begin{equation}
f(s) \equiv \int_s^\infty p_1(L)\,dL, \qquad
g(s) \equiv \int_s^\infty (L-s)\,p_1(L)\,dL,
\end{equation}
and
\begin{equation}
h(s,t) \equiv \int_s^\infty du \int_t^\infty dv\, p_2(u,v).
\end{equation}
For the mixed spectrum $X = \bigcup_{i=1}^m X_i$, let $H_m(x,y)$ be the probability that, around a typical mixed-spectrum level $E$, there is no other mixed-spectrum level in the interval $(E-x, E+y)$. Then
\begin{equation}
\label{eq:Hm}
H_m(x,y) = \sum_{i=1}^m \mu_i\, h(\mu_i x,\, \mu_i y) \prod_{j \neq i} g\!\bigl(\mu_j(x{+}y)\bigr).
\end{equation}
\end{theorem}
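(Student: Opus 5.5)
The plan is to compute $H_m(x,y)$ directly, using the Palm (typical-point) description of a superposition of independent stationary point processes. Pick a typical level $E$ of the mixed spectrum. Since the $X_i$ are independent with intensities $\mu_i$ summing to one, $E$ belongs to block $i$ with probability $\mu_i$. Conditional on $E\in X_i$, two things hold. First, block $i$ is seen from a typical point of its own. Second, every block $j\neq i$ is seen from a point chosen independently of it, so those blocks are in their stationary (non-Palm) state. This decomposition is the standard Palm calculus for independent superpositions, and I would state it as the first step. Summing over $i$ with weights $\mu_i$ then gives the outer sum in Eq.~\eqref{eq:Hm}.

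For the block containing $E$, I first use the unfolding assumption. Block $i$ has density $\mu_i$, so its spacings measured in mixed-spectrum units are the unfolded spacings divided by $\mu_i$. The event that block $i$ has no level in $(E-x,E+y)$ is the event that the left gap exceeds $\mu_i x$ and the right gap exceeds $\mu_i y$, in unfolded units. By definition this probability is $h(\mu_i x,\mu_i y)=\int_{\mu_i x}^\infty\int_{\mu_i y}^\infty p_2$.

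For a block $j\neq i$, the interval $(E-x,E+y)$ sits at a position that is independent of that block. Its length in unfolded units is $s=\mu_j(x+y)$. I then need the probability that a stationary unit-intensity process has an empty interval of length $s$. I would compute it by the length-biasing argument. The interval lies inside some gap of the process, and a typical location falls in a gap of length $\ell$ with density $\ell\,p_1(\ell)$ (because the mean spacing is $1$). Given that, the interval fits inside the gap with conditional probability $(\ell-s)_+/\ell$. Integrating gives $\int_s^\infty(\ell-s)p_1(\ell)\,d\ell=g(s)$. Independence across the blocks $j\neq i$ gives the product, which completes the formula. The function $f$ does not appear in $H_m$ itself; it only becomes relevant when derivatives of $H_m$ are taken to get the joint gap density.

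I expect the main obstacle to be making the Palm step rigorous, that is, justifying that conditioning on $E\in X_i$ leaves the other blocks stationary and independent. I would handle it through the Campbell/Mecke formula. Write the mixed-spectrum Palm expectation as a limit of $\frac{1}{T}\,\mathbb{E}\sum_{E\in X\cap[0,T]}\mathbf 1\{\text{no other level in }(E-x,E+y)\}$. Split this sum by the block label of $E$. Then apply Campbell's formula within $X_i$, and use independence to factor out the expectation over $X_j$, which by stationarity does not depend on $E$. A secondary check is that stationarity of the unfolded blocks is exactly what makes $g$ translation-free. That is the hypothesis ``same local statistics up to rescaling'' in the statement.
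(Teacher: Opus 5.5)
Your proposal is correct and follows the paper's proof. You condition on the block label of the typical level (weight $\mu_i$), use the Palm probability $h(\mu_i x,\mu_i y)$ for that block and the generic-point empty-interval probability $g(\mu_j(x+y))$ for each other block, and multiply by independence. Your length-biasing derivation of $g$ and the Campbell/Mecke justification of the Palm step make rigorous what the paper only asserts. One small correction: the translation-invariance of $g$ comes from stationarity, whereas the ``same local statistics up to rescaling'' hypothesis is what lets a single $p_2$ serve every block with arguments scaled by $\mu_i$.
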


\begin{proof}
Starting from the joint gap density $p_2(s,t)$ of the left and right spacings around a typical level, the quantities entering the main formula are obtained by successive marginalizations, illustrated in Fig.~\ref{fig:fgh}.
\begin{figure}[t]
\centering
\begin{tikzpicture}[
  x=0.85cm, y=1cm,
  >=Stealth,
  level/.style={fill=black, circle, inner sep=0pt, minimum size=4.5pt},
  ghost/.style={draw=black!50, circle, inner sep=0pt, minimum size=4.5pt, thick, densely dashed},
  refpt/.style={fill=black, shape=diamond, inner sep=0pt, minimum size=5pt},
  brace/.style={decorate, decoration={brace, amplitude=3.5pt, raise=3pt}},
  fixedzone/.style={fill=blue!10},
  marginzone/.style={pattern=north east lines, pattern color=red!25},
  annot/.style={font=\scriptsize},
]
 
\def\rowsep{2.5}
\def\LBL{-2.8}
\def\FRM{10.5}
 
\begin{scope}[yshift=0cm]
  \node[font=\small\bfseries, anchor=east] at (\LBL, 0) {$p_2(s,t)$};
  \draw[thick, black!25] (0.3, 0) -- (8.2, 0);
  \fill[fixedzone] (1,-0.2) rectangle (3.5,0.2);
  \fill[fixedzone] (3.5,-0.2) rectangle (6.5,0.2);
  \node[level, label={[annot,below=3pt]prev}] at (1,0) {};
  \node[level, label={[annot,below=3pt]ref level}] at (3.5,0) {};
  \node[level, label={[annot,below=3pt]next}] at (6.5,0) {};
  \draw[brace] (1, 0.3) -- node[above=6pt, annot] {$s$} (3.5, 0.3);
  \draw[brace] (3.5, 0.3) -- node[above=6pt, annot] {$t$} (6.5, 0.3);
  \node[anchor=west, annot, text width=4.2cm] at (\FRM, 0)
    {joint density of left and right gaps around a level};
\end{scope}
 
\begin{scope}[yshift=-\rowsep cm]
  \node[font=\small\bfseries, anchor=east] at (\LBL, 0) {$p_1(s)$};
  \draw[thick, black!25] (0.3, 0) -- (8.2, 0);
  \fill[fixedzone] (1,-0.2) rectangle (3.5,0.2);
  \fill[marginzone] (3.5,-0.2) rectangle (7.0,0.2);
  \node[level, label={[annot,below=3pt]prev}] at (1,0) {};
  \node[level, label={[annot,below=3pt]ref level}] at (3.5,0) {};
  \node[ghost, label={[annot,below=3pt,red!60!black]next}] at (7.0,0) {};
  \draw[brace] (1, 0.3) -- node[above=6pt, annot] {$s$} (3.5, 0.3);
  \draw[brace] (3.5, 0.3) -- node[above=6pt, annot, red!60!black] {$v$} (7.0, 0.3);
  \draw[->, red!60!black, thick] (7.3, 0) -- (8.0, 0)
    node[right, font=\tiny, red!60!black] {$\int_0^\infty\! dv$};
  \node[anchor=west, annot] at (\FRM, 0)
    {$\displaystyle = \int_0^\infty p_2(s,v)\,dv$};
\end{scope}

\begin{scope}[yshift=-2*\rowsep cm]
  \node[font=\small\bfseries, anchor=east] at (\LBL, 0) {$f(s)$};
  \draw[thick, black!25] (0.3, 0) -- (8.2, 0);
  \fill[fixedzone] (1,-0.2) rectangle (4.0,0.2);
  \node[above, font=\tiny, blue!50!black] at (2.5, 0.2) {empty};
  \fill[marginzone] (4.0,-0.2) rectangle (6.5,0.2);
  \node[level, label={[annot,below=3pt]level}] at (1,0) {};
  \node[ghost, label={[annot,below=3pt,red!60!black]neighbor}] at (6.5,0) {};
  \draw[brace] (1, 0.3) -- node[above=6pt, annot] {$s$} (4.0, 0.3);
  \draw[brace] (4.0, 0.3) -- node[above=6pt, annot, red!60!black] {$a$} (6.5, 0.3);
  \draw[->, red!60!black, thick] (6.8, 0) -- (8.0, 0)
    node[right, font=\tiny, red!60!black] {$\int_0^\infty\! da$};
  \node[anchor=west, annot] at (\FRM, 0)
    {$\displaystyle = \int_0^\infty p_1(s{+}a)\,da$};
\end{scope}

\begin{scope}[yshift=-3*\rowsep cm]
  \node[font=\small\bfseries, anchor=east] at (\LBL, 0) {$g(s)$};
  \draw[thick, black!25] (-0.5, 0) -- (8.5, 0);
  \fill[marginzone] (0.2,-0.2) rectangle (2.2,0.2);
  \fill[fixedzone] (2.2,-0.2) rectangle (5.7,0.2);
  \node[above, font=\tiny, blue!50!black] at (3.95, 0.2) {empty};
  \fill[marginzone] (5.7,-0.2) rectangle (7.7,0.2);
  \node[ghost, label={[annot,below=3pt,red!60!black]left level}] at (0.2,0) {};
  \node[refpt, label={[annot,below=3pt]generic pt}] at (2.2,0) {};
  \node[ghost, label={[annot,below=3pt,red!60!black]right level}] at (7.7,0) {};
  \draw[brace] (0.2, 0.3) -- node[above=6pt, annot, red!60!black] {$a$} (2.2, 0.3);
  \draw[brace] (2.2, 0.3) -- node[above=6pt, annot] {$s$} (5.7, 0.3);
  \draw[brace] (5.7, 0.3) -- node[above=6pt, annot, red!60!black] {$b$} (7.7, 0.3);
  \draw[->, red!60!black, thick] (-0.1, 0) -- (-0.8, 0)
    node[left, font=\tiny, red!60!black] {$\int_0^\infty\! da$};
  \draw[->, red!60!black, thick] (8.0, 0) -- (8.7, 0)
    node[right, font=\tiny, red!60!black] {$\int_0^\infty\! db$};
  \node[anchor=west, annot] at (\FRM, 0)
    {$\displaystyle = \!\int_0^\infty\!\! da\!\int_0^\infty\!\! db\; p_1(s{+}a{+}b)$};
\end{scope}

\begin{scope}[yshift=-4*\rowsep cm]
  \node[font=\small\bfseries, anchor=east] at (\LBL, 0) {$h(s,t)$};
  \draw[thick, black!25] (-0.5, 0) -- (8.5, 0);
  \fill[marginzone] (0.2,-0.2) rectangle (1.5,0.2);
  \fill[fixedzone] (1.5,-0.2) rectangle (3.7,0.2);
  \node[above, font=\tiny, blue!50!black] at (2.6, 0.2) {empty};
  \fill[fixedzone] (3.7,-0.2) rectangle (6.0,0.2);
  \node[above, font=\tiny, blue!50!black] at (4.85, 0.2) {empty};
  \fill[marginzone] (6.0,-0.2) rectangle (7.5,0.2);
  \node[ghost, label={[annot,below=3pt,red!60!black]prev}] at (0.2,0) {};
  \node[level, label={[annot,below=3pt]ref level}] at (3.7,0) {};
  \node[ghost, label={[annot,below=3pt,red!60!black]next}] at (7.5,0) {};
  \draw[brace] (0.2, 0.3) -- node[above=6pt, annot, red!60!black] {$a$} (1.5, 0.3);
  \draw[brace] (1.5, 0.3) -- node[above=6pt, annot] {$s$} (3.7, 0.3);
  \draw[brace] (3.7, 0.3) -- node[above=6pt, annot] {$t$} (6.0, 0.3);
  \draw[brace] (6.0, 0.3) -- node[above=6pt, annot, red!60!black] {$b$} (7.5, 0.3);
  \draw[->, red!60!black, thick] (-0.1, 0) -- (-0.8, 0)
    node[left, font=\tiny, red!60!black] {$\int_0^\infty\! da$};
  \draw[->, red!60!black, thick] (7.8, 0) -- (8.7, 0)
    node[right, font=\tiny, red!60!black] {$\int_0^\infty\! db$};
  \node[anchor=west, annot] at (\FRM, 0)
    {$\displaystyle = \!\int_0^\infty\!\! da\!\int_0^\infty\!\! db\; p_2(s{+}a,\,t{+}b)$};
\end{scope}

\begin{scope}[yshift=-4*\rowsep cm - 1.6cm]
  \node[level] at (0.5,0) {};
  \node[anchor=west] at (0.8, 0) {fixed level};
  \node[refpt] at (3.5,0) {};
  \node[anchor=west] at (3.8, 0) {generic point};
  \node[ghost] at (9.0,0) {};
  \node[anchor=west, red!60!black] at (9.2, 0) {neighbor (integrated out)};
  \fill[fixedzone] (0.5,-0.55) rectangle (0.9,-0.25);
  \draw (0.5,-0.55) rectangle (0.9,-0.25);
  \node[anchor=west] at (1.1, -0.4) {empty};
  \fill[marginzone] (5.5,-0.55) rectangle (5.9,-0.25);
  \draw (5.5,-0.55) rectangle (5.9,-0.25);
  \node[anchor=west, red!60!black] at (6.1, -0.4) {marginalized};
\end{scope}
 
\end{tikzpicture}
\caption{Marginalization hierarchy from the joint gap density $p_2(s,t)$ to the functions entering the mixed-spectrum formula in Eq.~\eqref{eq:Hm}.}
\label{fig:fgh}
\end{figure}
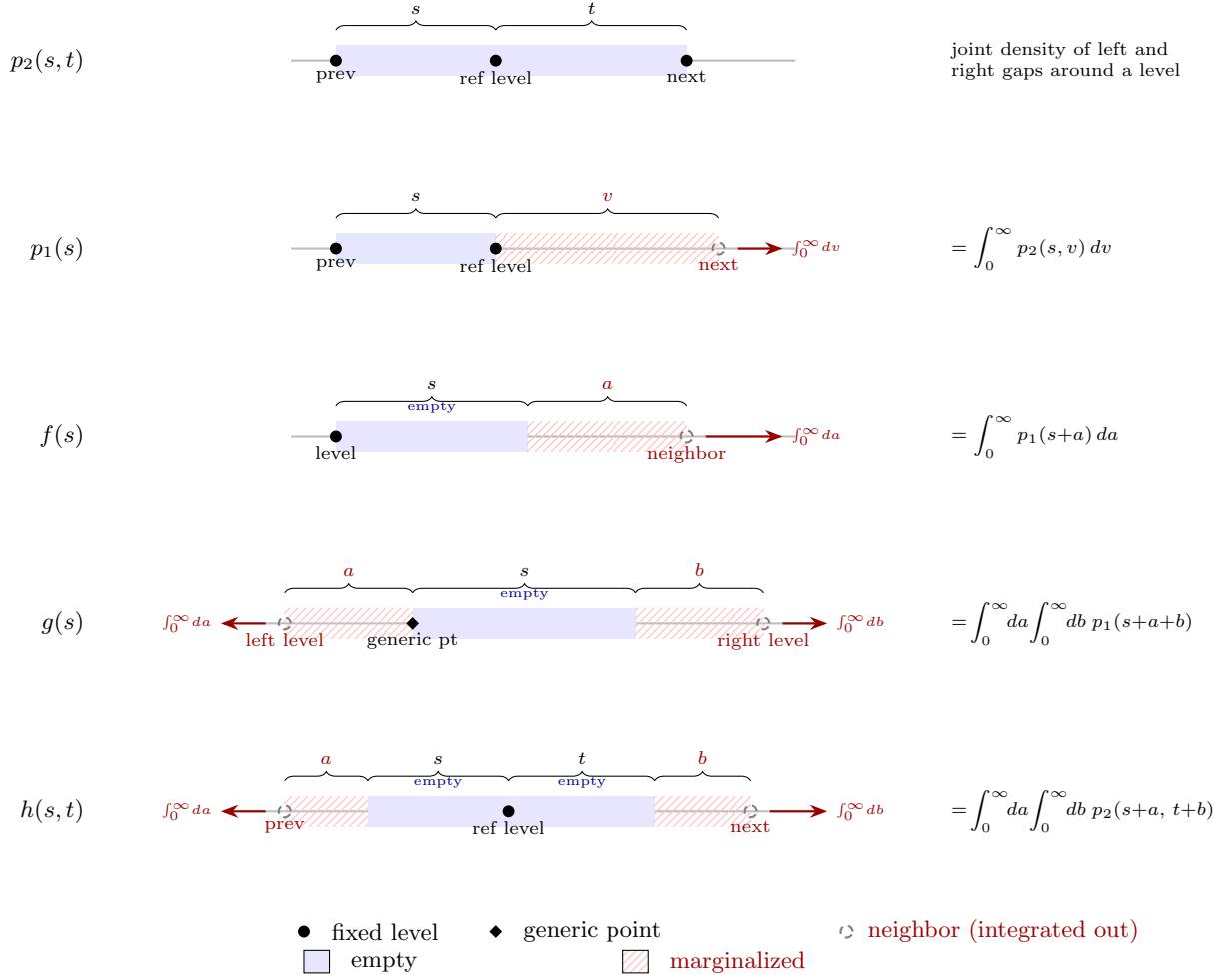
The function 
\begin{equation}
p_1(s) = \int_0^\infty p_2(s,v)\,dv
\end{equation}
describes the marginalized gap distribution where the right gap is integrated out. The can be thought as a effective single gap distribution as long as one does not care about the existence of the other gap. The function 
\begin{equation}
f(s) = \int_0^\infty p_1(s{+}a)\,da
\end{equation}
calculates the probability when the gap exceeds the given value $s$, with the excess $a$ integrated out. At this stage, the three-level picture has collapsed to a single gap and the reference point is the left energy level. Now, we are going to define the function
\begin{equation}
    g(s) = \int_0^\infty da \int_0^\infty db\, p_1(s{+}a{+}b)
\end{equation}
which calculates the probability of seeing a gap with length $s$ starting from a generic reference point not coinciding with the energy level. Therefore both margins $a$ and $b$ to the enclosing levels are integrated out. As we will see later, this is the key function for blocks that do not contain the reference energy level. Finally, we introduce the function
\begin{equation}
h(s,t) = \int_0^\infty da \int_0^\infty db\, p_2(s{+}a,\,t{+}b) ~,
\end{equation}
that calculates the probability starting from the reference energy level and having no same-block neighbor within $s$ to the left and $t$ to the right. This is the key function for the block that does contain the reference energy.

Let $X = \bigcup_{i=1}^m X_i$. Choose a typical level $E$ of the mixed spectrum, and let $M \in \{1,\dots,m\}$ be the label of the block containing this level. Since block $i$ has intensity $\mu_i$, the probability that a typical mixed-spectrum level comes from block $i$ is $\Pr(M=i) = \mu_i$. Fix $i$ and condition on $M=i$. The reference energy $E$ is a level of block~$i$ but a generic point inside a gap for every other block $j\neq i$. Therefore, for block $i$, the probability of no other $i$-level in $(E{-}x,\, E{+}y)$ is $h(\mu_i x,\, \mu_i y)$. While for each block $j \neq i$, the probability of no $j$-level in the whole interval $(E{-}x,\, E{+}y)$ of length $x+y$ is $g(\mu_j(x{+}y))$. We then immediately arrive at
\begin{equation}
    H_m(x,y) = \sum_{i=1}^m \mu_i\, h(\mu_i x,\, \mu_i y)\prod_{j \neq i} g\!\bigl(\mu_j(x{+}y)\bigr).
\end{equation}
\end{proof}

The joint density of two consecutive spacings is $P_m(x,y) = \partial_x \partial_y H_m(x,y)$. The gap ratio $r = \min(x,y)/\max(x,y) \in [0,1]$ is obtained by marginalizing over the overall scale:
\begin{equation}
\label{eq:Pm_r}
P_m(r) = 2\int_0^\infty dx\, x\, P_m(x,\, rx).
\end{equation}
For GOE distribution, $P_m$ then can be straightforwardly computed from 
\begin{equation}
    p_2(s,t) = \frac{27}{8\pi}\,st(s{+}t)\,\exp\!\left[-\frac{9}{4\pi}(s^2 + st + t^2)\right], \quad s,t > 0.
\end{equation}

\subsection{Saddle-point derivation of the EFS entanglement scaling}
\label{eq:subsec-efs-entanglement}

This appendix gives an independent rigorous derivation of $S \sim \sqrt{L}$ for the EFS in the all-mobile sector at the symmetric bisection $L_A = L/2$. Specializing the Schmidt weights of Eq.~\eqref{eq:Schmidt-weights} to $c_f = e$ and $L_A = L_B = L/2$,
\begin{equation}
\label{eq:saddle-S-start}
S = -\sum_{c \in G} \frac{D_c(L/2)\,D_{c^{-1}}(L/2)}{D(L)}\,\log \frac{D_c(L/2)\,D_{c^{-1}}(L/2)}{D(L)},
\end{equation}
where $D_c(\ell)$ counts length-$\ell$ binary strings reducing to $c \in G$.

\paragraph*{Step 1: Reduce to a sum over $(k_1, k_2)$.}
$D_c(\ell)$ depends only on $|c|$, equivalently the number of triplets $k = (\ell - |c|)/3$, with the closed form Eq.~\eqref{eq:Dk-closed}: $D_k(\ell) = \binom{\ell}{k} - \sum_{j=0}^{k-1}\binom{\ell}{j}$. Group the sum by $k_1 \equiv (L/2 - |c|)/3$ and $k_2 \equiv (L/2 - |c^{-1}|)/3$. The number of $c \in G$ with given $|c| = \ell_1$ and $|c^{-1}| = \ell_2$ is the number of reduced words with prescribed singleton/doubleton block counts:
\begin{equation}
|c| = \#\text{singletons} + 2\,\#\text{doubletons}, \qquad |c^{-1}| = 2\,\#\text{singletons} + \#\text{doubletons},
\end{equation}
so the count must contain $(2\ell_2 - \ell_1)/3$ singletons and $(2\ell_1 - \ell_2)/3$ doubletons, giving (with a factor $2$ for the choice of starting character)
\begin{equation}
\#\{c : |c| = \ell_1,\, |c^{-1}| = \ell_2\} = 2\binom{(\ell_1+\ell_2)/3}{(2\ell_1-\ell_2)/3}.
\end{equation}
Substituting $\ell_i = L/2 - 3k_i$,
\begin{equation}
S = -\sum_{k_1, k_2} 2\binom{L/3 - k_1 - k_2}{L/6 - 2k_1 + k_2}\, p_{k_1, k_2} \log p_{k_1, k_2}, \qquad
p_{k_1, k_2} = \frac{D_{k_1}(L/2)\,D_{k_2}(L/2)}{D_{L/3}(L)}.
\end{equation}

\paragraph*{Step 2: Stirling expansion near the saddle.}
Change variables to $\Delta_i \equiv L/6 - k_i$, so $\Delta_i = 0$ corresponds to the maximum-$k$ (most-mobile) sector. Stirling's formula gives, for $N \gg 1$,
\begin{equation}
\label{eq:N-delta-exponential}
\binom{N}{N/3 - \Delta} \approx \binom{N}{N/3}\, 2^{-\Delta}\, e^{-9\Delta^2/(4N)},
\end{equation}
up to $O(1)$ multiplicative corrections. Setting $\Delta_1 = x\Delta L$, $\Delta_2 = (1-x)\Delta L$ and applying Eq.~\eqref{eq:N-delta-exponential} to both numerator and denominator yields
\begin{equation}
p_{k_1, k_2} \approx \frac{1}{\sqrt{L}}\, \exp\left\{-L\left[(\log 2)\Delta + \tfrac{9}{2}\bigl(x^2 + (1-x)^2\bigr)\Delta^2\right]\right\}.
\end{equation}
The combinatorial multiplicity is also exponentially suppressed. The combinatorial factor becomes:
\begin{equation}
\binom{\Delta_1 + \Delta_2}{2\Delta_1 - \Delta_2} \approx \exp\left\{-L\bigl[(3x-1)\log(3x-1) + (2-3x)\log(2-3x)\bigr]\,\Delta\right\}.
\end{equation}
Combining,
\begin{equation}
\binom{\Delta_1 + \Delta_2}{2\Delta_1 - \Delta_2}\, p_{k_1, k_2} \approx \frac{1}{\sqrt{L}}\, e^{-L\, F(x, \Delta)},
\end{equation}
with action
\begin{equation}
F(x, \Delta) = \bigl[\log 2 + (3x-1)\log(3x-1) + (2-3x)\log(2-3x)\bigr]\,\Delta + \tfrac{9}{2}\bigl(x^2 + (1-x)^2\bigr)\Delta^2.
\end{equation}

\paragraph*{Step 3: Saddle point.}
The action $F(x, \Delta)$ has a strictly positive coefficient for $\Delta$ (linear) and $\Delta^2$ (quadratic) on its valid domain, so $F \geq 0$ with equality only at $\Delta = 0$. The saddle point is therefore at $\Delta = 0$, $x = 1/2$, where the integrand peaks. Around this point the quadratic action gives
\begin{equation}
\langle \Delta^m \rangle \sim L^{-m/2},
\end{equation}
and since $-\log p_{k_1, k_2} \sim L \cdot F(x, \Delta) \sim L \cdot \Delta$ near the saddle, we obtain
\begin{equation}
S = \langle -\log p \rangle \;\sim\; L \cdot \langle \Delta \rangle \;\sim\; L \cdot L^{-1/2} \;=\; \sqrt{L}.
\end{equation}
This rigorously establishes the $\sqrt{L}$ scaling derived heuristically in Sec.~\ref{subsec:sqrtL} via the bridge-walk argument.
\section{Numerical results}

In this section, we present the numerical algorithm we used to obtain the Krylov subspace structure, as well as the numerical results at small systems sizes.

\subsection{Algorithm for finding classical Krylov subspaces}

Finding the classical Krylov subspaces of a model is equivalence to finding the orbits of semigroup words under the equivalence relations. We exemplify this with the cyclic qutrit model.

\emph{Setup. ---}
Let $\Sigma = \{0,1,2\}$ and consider the semigroup acting on $\Sigma^L$ generated by local rewrite rules arising from cyclic permutation equivalences $012 = 120 = 201$ and $021 = 102 = 210$. A rewrite replaces any contiguous 3-letter window matching a non-canonical pattern with its canonical form ($012$ for even permutations, $021$ for odd). Two words $w, w' \in \Sigma^L$ lie in the same \emph{orbit} if $w$ can be transformed into $w'$ by a finite sequence of such rewrites. The goal is to compute the orbit size distribution $\{(s, c_s)\}$, where $c_s$ counts the number of orbits of size $s$.

\emph{Index Arithmetic. ---}
Each word $w = w_0 w_1 \cdots w_{L-1} \in \Sigma^L$ is identified with the integer
\[
\mathrm{idx}(w) = \sum_{i=0}^{L-1} w_i \cdot 3^{L-1-i} \in \{0, \ldots, 3^L - 1\}.
\]
A rewrite at position $i$ replaces $(w_i, w_{i+1}, w_{i+2})$ with $(w_i + d_0,\, w_{i+1} + d_1,\, w_{i+2} + d_2)$, where the delta vector $(d_0, d_1, d_2)$ depends on the pattern:
\[
\begin{aligned}
120 \to 012 &: \quad (-1,\, -1,\, +2), \\
201 \to 012 &: \quad (-2,\, +1,\, +1), \\
102 \to 021 &: \quad (-1,\, +2,\, -1), \\
210 \to 021 &: \quad (-2,\, +1,\, +1).
\end{aligned}
\]
The index of the rewritten word is then
\[
\mathrm{idx}(w') = \mathrm{idx}(w) + \Delta_i(c), \qquad \Delta_i(c) = d_0 \cdot 3^{L-1-i} + d_1 \cdot 3^{L-2-i} + d_2 \cdot 3^{L-3-i},
\]
where $c = 9w_i + 3w_{i+1} + w_{i+2}$ encodes the 3-digit window. The table $\Delta_i(c)$ is precomputed for all positions $i \in \{0, \ldots, L-3\}$ and codes $c \in \{0, \ldots, 26\}$.

\emph{Algorithm. ---} 
Computing orbits amounts to partitioning $\Sigma^L$ into equivalence classes under the transitive closure of the rewrite relation. We use the \emph{Union-Find} (disjoint-set) data structure, which maintains a partition of $\{0, \ldots, n-1\}$ under two operations:
\begin{itemize}
\item $\textsc{Find}(x)$: return a canonical \emph{representative} of the set containing $x$.
\item $\textsc{Unite}(x, y)$: merge the sets containing $x$ and $y$ into a single set.
\end{itemize}
Two elements $x, y$ belong to the same set if and only if $\textsc{Find}(x) = \textsc{Find}(y)$.

\paragraph{Representation.}
Each element $x$ stores a \emph{parent} pointer $\pi(x)$. A \emph{root} is an element with $\pi(x) = x$; it serves as the representative of its set. The parent pointers form a forest: following the chain $x \to \pi(x) \to \pi(\pi(x)) \to \cdots$ from any element eventually reaches the root of its tree.

\paragraph{Find with path compression.}
$\textsc{Find}(x)$ follows the parent chain to the root $r$, then sets $\pi(y) \gets r$ for every node $y$ visited along the path. This flattens the tree so that subsequent queries on the same nodes run in $O(1)$.

\paragraph{Unite by rank.}
Each root $r$ maintains a \emph{rank} $\rho(r)$, initially $0$, which upper-bounds the height of the tree rooted at $r$. To merge the sets of $x$ and $y$: find their roots $r_x, r_y$; if $r_x = r_y$, do nothing; otherwise, attach the lower-rank root under the higher-rank one. If ranks are equal, choose one as the new root and increment its rank. This keeps trees shallow.

With this structure, we use the following algorithm to find the orbits.

\begin{algorithm}[H]
\caption{Orbit size distribution}
\begin{algorithmic}[1]
\Require String length $L \geq 3$
\State Initialize Union-Find $U$ on $\{0, \ldots, 3^L - 1\}$
\State Precompute $\Delta_i(c)$ for $0 \leq i \leq L-3$, $0 \leq c \leq 26$
\For{each $w \in \Sigma^L$ (enumerated as a base-3 odometer)}
    \For{$i = 0$ \textbf{to} $L - 3$}
        \State $c \gets 9w_i + 3w_{i+1} + w_{i+2}$
        \If{$\Delta_i(c) \neq 0$}
            \State $U.\textsc{Unite}(\mathrm{idx}(w),\; \mathrm{idx}(w) + \Delta_i(c))$
        \EndIf
    \EndFor
\EndFor
\end{algorithmic}
\end{algorithm}

\subsection{Algorithm for finding quantum Krylov subspaces}

As we have established, quantum fragmentation happens when classical Krylov subspaces become reducible and further reduce into smaller entangled subspaces. Therefore, one the classical Krylov structure is obtained, we can infer the quantum fragmentation structure by looking at the reduced Hamiltonian within a classical Krylov subspace. This allows us to scale up our computation.

We employ the integer characteristic polynomial factorization (ICPF) method to detect the general Krylov subspace structure. A description of the method can be found in Refs.~\cite{Regnault:2022ocy,Chen:2026aqj}.


\subsection{Krylov subspace structure of the triplet flip model}

The following tables show the Krylov subspace structure of the triplet flip model for $q=2$ and $q=3$, at system sizes $L=4$ towards $L=9$. The tables are organized by $D_\text{cl}$, sizes of classical Krylov spaces. $d_\text{cl}$ is the number of such subspaces. The table then presents how such classical subspace splits into smaller entangled subspaces. $3^{\otimes 2}$ means $2$ subspaces of dimension $3$.

\twocolumngrid

\begin{table}[!h]
\centering
\caption{$q=2$ triplet flip model, $L = 4$}
\begin{tabular}{cc|cc}
\hline
$D_\text{cl}$ & $d_\text{cl}$ & Non-symmetric & $\mathbb{Z}_2$ symmetric \\
\hline
$1$ & $10$ & N/A & N/A \\
$3$ & $2$ & $1 \oplus 2$ & $1 \oplus 2$ \\
\hline
\end{tabular}
\end{table}

\begin{table}[!h]
\centering
\caption{$q=2$ triplet flip model, $L = 5$}
\begin{tabular}{cc|cc}
\hline
$D_\text{cl}$ & $d_\text{cl}$ & Non-symmetric & $\mathbb{Z}_2$ symmetric \\
\hline
$1$ & $16$ & N/A & N/A \\
$4$ & $4$ & $1 \oplus 3$ & $1 \oplus 3$ \\
\hline
\end{tabular}
\end{table}

\begin{table}[!h]
\centering
\caption{$q=2$ triplet flip model, $L = 6$}
\begin{tabular}{cc|cc}
\hline
$D_\text{cl}$ & $d_\text{cl}$ & Non-symmetric & $\mathbb{Z}_2$ symmetric \\
\hline
$1$ & $26$ & N/A & N/A \\
$5$ & $6$ & $1 \oplus 4$ & $1 \oplus 4$ \\
$8$ & $1$ & $1 \oplus \boldsymbol{7}$ & $1 \oplus \boldsymbol{3} \oplus \boldsymbol{4}$ \\
\hline
\end{tabular}
\end{table}

\begin{table}[!h]
\centering
\caption{$q=2$ triplet flip model, $L = 7$}
\begin{tabular}{cc|cc}
\hline
$D_\text{cl}$ & $d_\text{cl}$ & Non-symmetric & $\mathbb{Z}_2$ symmetric \\
\hline
$1$ & $42$ & N/A & N/A \\
$6$ & $10$ & $1 \oplus 5$ & $1 \oplus 5$ \\
$13$ & $2$ & $1 \oplus 12$ & $1 \oplus 12$ \\
\hline
\end{tabular}
\end{table}

\begin{table}[!h]
\centering
\caption{$q=2$ triplet flip model, $L = 8$}
\begin{tabular}{cc|cc}
\hline
$D_\text{cl}$ & $d_\text{cl}$ & Non-symmetric & $\mathbb{Z}_2$ symmetric \\
\hline
$1$ & $68$ & N/A & N/A \\
$7$ & $16$ & $1 \oplus 6$ & $1 \oplus 6$ \\
$19$ & $4$ & $1 \oplus 18$ & $1 \oplus 18$ \\
\hline
\end{tabular}
\end{table}

\begin{table}[!h]
\centering
\caption{$q=2$ triplet flip model, $L = 9$}
\begin{tabular}{cc|cc}
\hline
$D_\text{cl}$ & $d_\text{cl}$ & Non-symmetric & $\mathbb{Z}_2$ symmetric \\
\hline
$1$ & $110$ & N/A & N/A \\
$8$ & $26$ & $1 \oplus 7$ & $1 \oplus 7$ \\
$26$ & $6$ & $1 \oplus 25$ & $1 \oplus 25$ \\
$38$ & $1$ & $1 \oplus \boldsymbol{37}$ & $1 \oplus \boldsymbol{18} \oplus \boldsymbol{19}$ \\
\hline
\end{tabular}
\end{table}

\begin{table}[!h]
\centering
\caption{$q=3$ triplet flip model, $L = 4$}
\begin{tabular}{cc|cc}
\hline
$D_\text{cl}$ & $d_\text{cl}$ & Non-symmetric & $S_3$ symmetric \\
\hline
$1$ & $66$ & N/A & N/A \\
$5$ & $3$ & $1^{\otimes 3} \oplus 2$ & $1^{\otimes 3} \oplus 2$ \\
\hline
\end{tabular}
\end{table}

\begin{table}[!h]
\centering
\caption{$q=3$ triplet flip model, $L = 5$}
\begin{tabular}{cc|cc}
\hline
$D_\text{cl}$ & $d_\text{cl}$ & Non-symmetric & $S_3$ symmetric \\
\hline
$1$ & $180$ & N/A & N/A \\
$7$ & $9$ & $1^{\otimes 4} \oplus 3$ & $1^{\otimes 4} \oplus 3$ \\
\hline
\end{tabular}
\end{table}

\begin{table}[!h]
\centering
\caption{$q=3$ triplet flip model, $L = 6$}
\begin{tabular}{cc|cc}
\hline
$D_\text{cl}$ & $d_\text{cl}$ & Non-symmetric & $S_3$ symmetric \\
\hline
$1$ & $492$ & N/A & N/A \\
$9$ & $24$ & $1^{\otimes 5} \oplus 4$ & $1^{\otimes 5} \oplus 4$ \\
$21$ & $1$ & $1^{\otimes 10} \oplus \boldsymbol{11}$ & $1^{\otimes 10} \oplus \boldsymbol{3} \oplus \boldsymbol{4^{\otimes 2}}$ \\
\hline
\end{tabular}
\end{table}

\begin{table}[!h]
\centering
\caption{$q=3$ triplet flip model, $L = 7$}
\begin{tabular}{cc|cc}
\hline
$D_\text{cl}$ & $d_\text{cl}$ & Non-symmetric & $S_3$ symmetric \\
\hline
$1$ & $1344$ & N/A & N/A \\
$11$ & $66$ & $1^{\otimes 6} \oplus 5$ & $1^{\otimes 6} \oplus 5$ \\
$39$ & $3$ & $1^{\otimes 17} \oplus \boldsymbol{22}$ & $1^{\otimes 17} \oplus \boldsymbol{5^{\otimes 2}} \oplus \boldsymbol{12}$ \\
\hline
\end{tabular}
\end{table}

\begin{table}[!h]
\centering
\caption{$q=3$ triplet flip model, $L = 8$. $61_A$ indicates frozen strings of the form $aa$, $61_B$ has frozen string of the form $ab$.}
\begin{tabular}{cc|cc}
\hline
$D_\text{cl}$ & $d_\text{cl}$ & Non-symmetric & $S_3$ symmetric \\
\hline
$1$ & $3672$ & N/A & N/A \\
$13$ & $180$ & $1^{\otimes 7} \oplus 6$ & $1^{\otimes 7} \oplus 6$ \\
$61$\textsubscript{A} & $3$ & $1^{\otimes 25} \oplus \boldsymbol{36}$ & $1^{\otimes 25} \oplus \boldsymbol{6^{\otimes 3}} \oplus \boldsymbol{18}$ \\
$61$\textsubscript{B} & $6$ & $1^{\otimes 25} \oplus \boldsymbol{36}$ & $1^{\otimes 25} \oplus \boldsymbol{6^{\otimes 2}} \oplus \boldsymbol{24}$ \\
\hline
\end{tabular}
\end{table}

\begin{table}[!h]
\centering
\caption{$q=3$ triplet flip model, $L = 9$}
\begin{tabular}{cc|cc}
\hline
$D_\text{cl}$ & $d_\text{cl}$ & Non-symmetric & $S_3$ symmetric \\
\hline
$1$ & $10032$ & N/A & N/A \\
$15$ & $492$ & $1^{\otimes 8} \oplus 7$ & $1^{\otimes 8} \oplus 7$ \\
$87$ & $24$ & $1^{\otimes 34} \oplus \boldsymbol{53}$ & $1^{\otimes 34} \oplus \boldsymbol{7^{\otimes 3}} \oplus \boldsymbol{32}$ \\
$183$ & $1$ & $1^{\otimes 65} \oplus \boldsymbol{118}$ & $1^{\otimes 65} \oplus \boldsymbol{7^{\otimes 7}} \oplus \boldsymbol{19} \oplus \boldsymbol{25^{\otimes 2}}$ \\
\hline
\end{tabular}
\end{table}

\clearpage
\onecolumngrid

\subsection{Krylov subspace structure of the cyclic qutrit model}

The following tables show the Krylov subspace structure of the cyclic qutrit model at system sizes $L=4$ towards $L=9$. Meaning of table entries are consistent with the previous section. The $\mathbb Z_3$ symmetric column indicates the projector model with $\alpha \neq \beta$, and $D_3$ symmetric means $\alpha=\beta$.

\twocolumngrid

\begin{table}[!h]
\centering
\caption{Cyclic qutrit model, $L = 4$}
\begin{tabular}{cc|cc}
\hline
$D_\text{cl}$& $d_\text{cl}$& $\mathbb{Z}_3$ symmetric & $D_3$ symmetric \\
\hline
$1$ & $51$ & N/A & N/A \\
$5$ & $6$ & $1^{\otimes 3} \oplus 2$ & $1^{\otimes 3} \oplus 2$ \\
\hline
\end{tabular}
\end{table}

\begin{table}[!h]
\centering
\caption{Cyclic qutrit model, $L = 5$}
\begin{tabular}{cc|cc}
\hline
$D_\text{cl}$ & $d_\text{cl}$ & $\mathbb{Z}_3$ symmetric & $D_3$ symmetric \\
\hline
$1$ & $123$ & N/A & N/A \\
$7$ & $12$ & $1^{\otimes 4} \oplus 3$ & $1^{\otimes 4} \oplus 3$ \\
$12$ & $3$ & $1^{\otimes 6} \oplus \boldsymbol{6}$ & $1^{\otimes 6} \oplus \boldsymbol{3^{\otimes 2}}$ \\
\hline
\end{tabular}
\end{table}

\begin{table}[!h]
\centering
\caption{Cyclic qutrit model, $L = 6$}
\begin{tabular}{cc|cc}
\hline
$D_\text{cl}$ & $d_\text{cl}$ & $\mathbb{Z}_3$ symmetric & $D_3$ symmetric \\
\hline
$1$ & $297$ & N/A & N/A \\
$9$ & $18$ & $1^{\otimes 5} \oplus 4$ & $1^{\otimes 5} \oplus 4$ \\
$16$ & $12$ & $1^{\otimes 8} \oplus 8$ & $1^{\otimes 8} \oplus 8$ \\
$21$ & $2$ & $1^{\otimes 10} \oplus 3 \oplus 4^{\otimes 2}$ & $1^{\otimes 10} \oplus 3 \oplus 4^{\otimes 2}$ \\
$36$ & $1$ & $1^{\otimes 14} \oplus \boldsymbol{6} \oplus 8^{\otimes 2}$ & $1^{\otimes 14} \oplus \boldsymbol{3^{\otimes 2}} \oplus 8^{\otimes 2}$ \\
\hline
\end{tabular}
\end{table}

\begin{table}[!h]
\centering
\caption{Cyclic qutrit model, $L = 7$}
\begin{tabular}{cc|cc}
\hline
$D_\text{cl}$ & $d_\text{cl}$ & $\mathbb{Z}_3$ symmetric & $D_3$ symmetric \\
\hline
$1$ & $717$ & N/A & N/A \\
$11$ & $30$ & $1^{\otimes 6} \oplus 5$ & $1^{\otimes 6} \oplus 5$ \\
$20$ & $24$ & $1^{\otimes 10} \oplus 10$ & $1^{\otimes 10} \oplus 10$ \\
$29$ & $6$ & $1^{\otimes 14} \oplus 15$ & $1^{\otimes 14} \oplus 15$ \\
$47$ & $6$ & $1^{\otimes 20} \oplus 27$ & $1^{\otimes 20} \oplus 27$ \\
$68$ & $3$ & $1^{\otimes 24} \oplus \boldsymbol{44}$ & $1^{\otimes 24} \oplus \boldsymbol{22^{\otimes 2}}$ \\
\hline
\end{tabular}
\end{table}

\begin{table}[!h]
\centering
\caption{Cyclic qutrit model, $L = 8$}
\begin{tabular}{cc|cc}
\hline
$D_\text{cl}$ & $d_\text{cl}$ & $\mathbb{Z}_3$ symmetric & $D_3$ symmetric \\
\hline
$1$ & $1731$ & N/A & N/A \\
$13$ & $48$ & $1^{\otimes 7} \oplus 6$ & $1^{\otimes 7} \oplus 6$ \\
$24$ & $36$ & $1^{\otimes 12} \oplus 12$ & $1^{\otimes 12} \oplus 12$ \\
$35$ & $18$ & $1^{\otimes 17} \oplus 18$ & $1^{\otimes 17} \oplus 18$ \\
$46$ & $12$ & $1^{\otimes 22} \oplus 24$ & $1^{\otimes 22} \oplus 24$ \\
$81$ & $12$ & $1^{\otimes 33} \oplus 48$ & $1^{\otimes 33} \oplus 48$ \\
$108$ & $3$ & $1^{\otimes 36} \oplus \boldsymbol{72}$ & $1^{\otimes 36} \oplus \boldsymbol{36^{\otimes 2}}$ \\
$144$ & $6$ & $1^{\otimes 48} \oplus 96$ & $1^{\otimes 48} \oplus 96$ \\
\hline
\end{tabular}
\end{table}

\begin{table}[!h]
\centering
\caption{Cyclic qutrit model, $L = 9$}
\begin{tabular}{cc|cc}
\hline
$D_\text{cl}$ & $d_\text{cl}$ & $\mathbb{Z}_3$ symmetric & $D_3$ symmetric \\
\hline
$1$ & $4179$ & N/A & N/A \\
$15$ & $78$ & $1^{\otimes 8} \oplus 7$ & $1^{\otimes 8} \oplus 7$ \\
$28$ & $48$ & $1^{\otimes 14} \oplus 14$ & $1^{\otimes 14} \oplus 14$ \\
$41$ & $36$ & $1^{\otimes 20} \oplus 21$ & $1^{\otimes 20} \oplus 21$ \\
$54$ & $12$ & $1^{\otimes 26} \oplus 28$ & $1^{\otimes 26} \oplus 28$ \\
$67$ & $24$ & $1^{\otimes 32} \oplus 35$ & $1^{\otimes 32} \oplus 35$ \\
$78$ & $3$ & $1^{\otimes 36} \oplus \boldsymbol{42}$ & $1^{\otimes 36} \oplus \boldsymbol{21^{\otimes 2}}$ \\
$80$ & $6$ & $1^{\otimes 38} \oplus 42$ & $1^{\otimes 38} \oplus 42$ \\
$123$ & $6$ & $1^{\otimes 49} \oplus 74$ & $1^{\otimes 49} \oplus 74$ \\
$136$ & $12$ & $1^{\otimes 55} \oplus 81$ & $1^{\otimes 55} \oplus 81$ \\
$156$ & $3$ & $1^{\otimes 50} \oplus \boldsymbol{106}$ & $1^{\otimes 50} \oplus \boldsymbol{53^{\otimes 2}}$ \\
$226$ & $12$ & $1^{\otimes 74} \oplus 152$ & $1^{\otimes 74} \oplus 152$ \\
$252$ & $2$ & $1^{\otimes 92} \oplus 53^{\otimes 2} \oplus 54$ & $1^{\otimes 92} \oplus 53^{\otimes 2} \oplus 54$ \\
$271$ & $6$ & $1^{\otimes 87} \oplus 184$ & $1^{\otimes 87} \oplus 184$ \\
$432$ & $2$ & $1^{\otimes 120} \oplus 103^{\otimes 2} \oplus 106$ & $1^{\otimes 120} \oplus 103^{\otimes 2} \oplus 106$ \\
\hline
\end{tabular}
\end{table}

\clearpage


\end{document}